\DeclareRobustCommand{\SkipTocEntry}[5]{}
\setlist{itemsep=.5\baselineskip,topsep=.5\baselineskip}
\numberwithin{equation}{section}
\theoremstyle{plain}
\newtheorem{theorem}{Theorem}[section]
\newtheorem{thm}[theorem]{Theorem}
\newtheorem{lemma}[theorem]{Lemma}
\newtheorem{proposition}[theorem]{Proposition}
\newtheorem{prop}[theorem]{Proposition}
\newtheorem{definition}[theorem]{Definition}
\newtheorem{defn}[theorem]{Definition}
\newtheorem{remark}[theorem]{Remark}
\newtheorem{corollary}[theorem]{Corollary}
\newtheorem{cor}[theorem]{Corollary}
\newtheorem{rmk}[theorem]{Remark}
\newcommand{\iso}{\cong}
\newcommand{\wtd}{\widetilde}
\DeclareMathOperator{\vspan}{span}
\newcommand{\C}{\mathbb{C}}
\newcommand{\Z}{\mathbb{Z}}
\newcommand{\Q}{\mathbb{Q}}
\newcommand{\eps}{\epsilon}
\newcommand{\mcA}{\mathcal{A}}
\newcommand{\mcB}{\mathcal{B}}
\newcommand{\mcC}{\mathcal{C}}
\newcommand{\mcG}{\mathcal{G}}
\newcommand{\mcH}{\mathcal{H}}
\newcommand{\mcK}{\mathcal{K}}
\newcommand{\mcL}{\mathcal{L}}
\newcommand{\mcM}{\mathcal{M}}
\newcommand{\mcO}{\mathcal{O}}
\newcommand{\mcP}{\mathcal{P}}
\newcommand{\mcR}{\mathcal{R}}
\newcommand{\mcS}{\mathcal{S}}
\newcommand{\mcT}{\mathcal{T}}
\newcommand{\mcU}{\mathcal{U}}
\newcommand{\mbC}{\mathbb{C}}
\newcommand{\mbZ}{\mathbb{Z}}
\newcommand{\ang}[1]{\langle #1 \rangle}
\DeclareMathOperator{\MIP}{MIP}
\DeclareMathOperator{\SynMIP}{SynMIP}
\DeclareMathOperator{\AM}{AM}
\DeclareMathOperator{\NP}{NP}
\DeclareMathOperator{\poly}{poly}
\DeclareMathOperator{\polylog}{polylog}
\DeclareMathOperator{\RE}{RE}
\DeclareMathOperator{\coRE}{coRE}
\DeclareMathOperator{\true}{true}
\DeclareMathOperator{\false}{false}
\DeclareMathOperator{\SynAlg}{SynAlg}
\DeclareMathOperator{\supp}{supp}
\DeclareMathOperator{\df}{def}
\DeclareMathOperator{\IP}{IP}
\DeclareMathOperator{\PSPACE}{PSPACE}
\DeclareMathOperator{\NEXP}{NEXP}
\DeclareMathOperator{\PZK}{PZK}
\DeclareMathOperator{\QMIP}{QMIP}
\DeclareMathOperator{\HALT}{HALT}
\DeclareMathOperator{\BCS}{BCS}
\DeclareMathOperator{\Tab}{Tab}
\DeclareMathOperator{\Obl}{Obl}
\newcommand{\AND}{\wedge}
\newcommand{\OR}{\vee}
\newcommand{\ul}{\underline}
\title[Two prover perfect zero knowledge]{Two prover perfect zero knowledge for MIP*}
\author[K. Mastel]{Kieran Mastel$^{1,2}$}
\author[W. Slofstra]{William Slofstra$^{1,2}$}
\address[1]{Institute for Quantum Computing, University of Waterloo, Canada}
\address[2]{Department of Pure Mathematics, University of Waterloo, Canada}
\email{william.slofstra@uwaterloo.ca}
\email{kmastel@uwaterloo.ca}
\begin{document}

\begin{abstract}
	The recent $\MIP^*=\RE$ theorem of Ji, Natarajan, Vidick, Wright, and Yuen shows
	that the complexity class $\MIP^*$ of multiprover proof systems with entangled
	provers contains all recursively enumerable languages.  Prior work of Grilo,
	Slofstra, and Yuen [FOCS '19] further shows (via a technique called simulatable
	codes) that every language in $\MIP^*$ has a perfect zero knowledge ($\PZK$) $\MIP^*$
	protocol.  The $\MIP^*=\RE$ theorem uses two-prover one-round proof systems, and
	hence such systems are complete for $\MIP^*$. However, the construction in Grilo,
	Slofstra, and Yuen uses six provers, and there is no obvious way to get perfect
	zero knowledge with two provers via simulatable codes. This leads to a natural
	question: are there two-prover $\PZK$-$\MIP^*$ protocols for all of $\MIP^*$? 
	
	In this paper, we show that every language in $\MIP^*$ has a two-prover one-round
	$\PZK$-$\MIP^*$ protocol, answering the question in the affirmative. For the proof, we
	use a new method based on a key consequence of the $\MIP^*=\RE$ theorem, which is
	that every $\MIP^*$ protocol can be turned into a family of boolean constraint
	system (BCS) nonlocal games. This makes it possible to work with $\MIP^*$ protocols
	as boolean constraint systems, and in particular allows us to use a variant of
	a construction due to Dwork, Feige, Kilian, Naor, and Safra [Crypto '92] which
	gives a classical $\MIP$ protocol for 3SAT with perfect zero knowledge. To show
	quantum soundness of this classical construction, we develop a toolkit for
	analyzing quantum soundness of reductions between BCS games, which we expect to
	be useful more broadly. This toolkit also applies to commuting operator
	strategies, and our argument shows that every language with a commuting
	operator BCS protocol has a two prover $\PZK$ commuting operator protocol.
\end{abstract}

\maketitle

\section{Introduction}

In an interactive proof protocol, a prover tries to convince a verifier that a
string $x$ belongs to $\mcL$. Interactive proof systems can be more powerful
than non-interactive systems; famously, the class $\IP$ of interactive proofs
with a polynomial time verifier and a single prover is equal to $\PSPACE$
\cite{Shamir92pspace}, and the class $\MIP$ with a polynomial time verifier and
multiple provers is equal to $\NEXP$ \cite{BFL90}.  In this latter class, the
provers can communicate with the verifier, but are assumed not to be able to
communicate with each other. The proof systems used in \cite{BFL90} are very
efficient, and require only two provers and one-round of communication.
Interactive proof systems also allow zero knowledge protocols, in which the
prover demonstrates that $x \in \mcL$ without revealing any other information
to the verifier. As a result, interactive proof systems are important to both
complexity theory and cryptography. The first zero knowledge proof systems go
back to the invention of interactive proof systems by Goldwasser, Micali, and
Rackoff \cite{Goldwasser85}, and every language in MIP admits a two-prover
one-round perfect zero knowledge proof system by a result of Ben-Or,
Goldwasser, Kilian, and Wigderson \cite{Ben-Or88}. Perfect means that
absolutely no information is revealed to the verifier, in contrast to
statistical zero knowledge (in which the amount of knowledge gained by the
verifier is small but bounded), or computational zero knowledge (in which zero
knowledge relies on some computational intractability assumption). 

Since the provers in a MIP protocol are not allowed to communicate, it is
natural to ask what happens if they are allowed to share entanglement. This
leads to the complexity class $\MIP^*$, first introduced by Cleve, Hoyer,
Toner, and Watrous \cite{cleve2010consequences}. Entanglement allows the
provers to break some classical proof systems by coordinating their answers,
but the improved ability of the provers also allows the verifier to set harder
tasks. As a result, figuring out the power of $\MIP^*$ has been difficult, and
there have been successive lower bounds in \cite{KKM+11, IKM09, IV12, Vid16,
Vid20eratum, Ji16, NV18b, Ji17, NV18a, FJVY19}. Most recently (and
spectacularly), Ji, Natarajan, Vidick, Wright, and Yuen showed that
$\MIP^*=\RE$, the class of languages equivalent to the halting problem
\cite{ji2022mipre}. 
 Reichardt, Unger, and Vazirani also showed that $\MIP^*$
is equal to the class $\QMIP^*$, in which the verifier is quantum, and can
communicate with the provers via quantum channels \cite{ReichardtLeash13}. On
the perfect zero knowledge front, Chiesa, Forbes, Gur, and Spooner showed that
every language in $\NEXP$ (and hence in classical $\MIP$) has a perfect zero
knowledge $\MIP^*$ proof system, or in other words belongs to $\PZK$-$\MIP^*$
\cite{chiesa2018spatial}. Grilo, Slofstra, and Yuen show that all of $\MIP^*$
belongs to $\PZK$-$\MIP^*$ \cite{grilo2019perfect}. 

Combining $\PZK$-$\MIP^*=\MIP^*$ with $\MIP^*=\RE$ shows that there are one-round
perfect zero-knowledge $\MIP^*$ proof systems for all languages that can be
reduced to the halting problem, a very large class. However, the construction
in \cite{grilo2019perfect} is involved. The idea behind the proof is to encode a circuit
for an arbitrary $\MIP$ verifier in a ``simulatable'' quantum error correcting code, and then
hide information from the verifier by splitting the physical qubits of this
code between different provers. The resulting proof systems in \cite{grilo2019perfect}
require $6$ provers, and because the core concept of the proof is to split
information between provers, bringing this down to $2$ provers (as can be done
with perfect zero-knowledge for $\MIP$) seems to require new ideas. 

The purpose of this paper is to show that all languages in $\MIP^*$ do indeed have two-prover one-round
perfect zero knowledge proof systems. Specifically, we show that: 
\begin{theorem}\label{thm:main}
    Every language in $\MIP^*$ (and hence in $\RE$) admits a two-prover one-round
    perfect zero knowledge $\MIP^*$ protocol with completeness probability $c=1$ 
    and soundness probability $s=1/2$, in which the verifier chooses questions 
    uniformly at random. 
\end{theorem}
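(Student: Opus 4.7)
The plan is to combine two ingredients: the BCS-game reformulation of $\MIP^*$ protocols implicit in $\MIP^*=\RE$, and a quantum-safe variant of the classical Dwork--Feige--Kilian--Naor--Safra (DFKNS) two-prover perfect zero knowledge protocol for 3SAT. The overall strategy is to reduce an arbitrary $\MIP^*$ language to a BCS nonlocal game and then wrap it in a DFKNS-style ``clause versus variable'' protocol whose quantum soundness is proved via a new BCS reduction toolkit.

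First, given $\mcL \in \MIP^*$ and an input $x$, invoke the BCS-game form of $\MIP^*=\RE$ to produce a boolean constraint system $\mcB_x$ whose associated nonlocal game has quantum value $1$ when $x\in\mcL$ and quantum value bounded away from $1$ otherwise. This moves the problem into a combinatorial setting and lets one work directly with clauses and variables rather than an arbitrary verifier circuit.

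Next, adapt DFKNS: the verifier picks uniformly at random a constraint $C$ of $\mcB_x$ and a variable $v$, sends $C$ to prover 1 and $v$ to prover 2, receives back a satisfying assignment $\sigma$ to the variables of $C$ from prover 1 and a bit $b$ for $v$ from prover 2, and checks $\sigma(v)=b$ exactly when $v\in C$ (otherwise accepting). Perfect completeness follows from any honest BCS strategy. For perfect zero knowledge, one constructs a simulator that, on random $(C,v)$, samples $\sigma$ uniformly from the satisfying assignments of $C$ and sets $b=\sigma(v)$ when $v\in C$, or samples $b$ independently when $v\notin C$; provided the honest operator solution can be arranged to induce the uniform distribution over satisfying assignments per constraint---which can be enforced by symmetrization or by padding $\mcB_x$ with auxiliary randomizing variables---the real and simulated transcripts agree exactly.

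The hard part will be quantum soundness. Classically, if no satisfying assignment exists then some $(C,v)$ pair catches the provers; with entanglement, the provers might defeat the consistency check without their responses arising from any global operator solution. The plan here is to deploy the quantum soundness toolkit for BCS reductions announced in the abstract: from a strategy with value $1-\vareps$ in the DFKNS-wrapped game, extract projective measurements $\{Q^C_\sigma\}$ and $\{P^v_b\}$ from the two provers and use the consistency test to force the $v$-marginals of $\{Q^C_\sigma\}$ to match $\{P^v_b\}$ up to an error tending to $0$ with $\vareps$ whenever $v\in C$. A rounding step then packages the $P^v_b$ into an approximate operator solution of $\mcB_x$ contradicting the soundness gap of the first step when $x\notin\mcL$; the target soundness $s=1/2$ is reached by tracking constants or by a standard amplification step. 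Because the rounding argument uses only commuting projections rather than a tensor product structure, the same proof should yield the announced analogue for commuting operator BCS protocols.
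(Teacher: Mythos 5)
Your high-level outline is on the right track --- reduce to a BCS game via $\MIP^*=\RE$, then wrap in a DFKNS-style transformation --- but the protocol you actually write down is not the DFKNS construction and would not be perfect zero knowledge, and the main technical obstacle is left unaddressed.

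The step you describe, in which the verifier sends a constraint $C$ to one prover, a variable $v$ to the other, and checks consistency, is just the constraint--variable BCS game. It is not zero knowledge: the honest prover's reply $\sigma$ is obtained by measuring a specific shared state, and nothing forces its distribution to be uniform over the satisfying assignments of $C$. You flag this yourself and suggest ``symmetrization'' or ``padding with auxiliary randomizing variables,'' but that hand-wave is the entire content of DFKNS. What they actually do --- and what the paper does --- is a three-step transformation: \emph{obliviation} (each variable $x$ is replaced by $x(1)\cdots x(n)$ so that any $n-1$ of the shares look uniformly random), \emph{Barrington's theorem} (each constraint is rewritten as a width-$5$ permutation branching program), and \emph{randomizing tableaux} (the branching-program computation is carried out through three rows of $S_5$-valued tableau elements whose second and third rows are masked by fresh uniform randomizers). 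It is only after all three steps that a simulator can sample the transcript without access to a satisfying assignment. Simply sampling $\sigma$ uniformly from satisfying assignments of the original clause does not reproduce what the honest players do, so real and simulated views do not coincide, and the ``perfect'' in perfect zero knowledge fails.

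On the soundness side, the central difficulty in the paper is not the rounding of projective measurements you sketch but the fact that the randomizing-tableau constraints must be split across many new contexts (what the paper calls a \emph{subdivision}), and a near-perfect strategy for the subdivided game gives you observables for variables in a single original context that need not commute with one another. Establishing that such a strategy can nevertheless be pulled back to an approximate trace on the original BCS algebra (\Cref{Thm:subdivSound}) is where the real work goes: it uses stability of $\Z_2^k$ (\Cref{lem:z2stab}) to reconstruct commuting families, and the weighted-algebra / $C$-homomorphism calculus to keep track of how much soundness is lost. Invoking ``the quantum soundness toolkit announced in the abstract'' is circular, since that toolkit is exactly what has to be built. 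Finally, your proposal does not mention how to go from the inverse-polynomial soundness gap produced by subdivision back to a constant $s=1/2$; the paper uses parallel repetition (which preserves the BCS form and, in one round, preserves zero knowledge) together with the Marrakchi--de la Salle synchronous rounding theorem to pass between synchronous and general values. Without those two closing steps you cannot reach the stated parameters.
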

The idea behind the proof is to use the output of the $\MIP^*=\RE$ theorem, rather
than encoding arbitrary $\MIP^*$-protocols. The proof that $\MIP^*=\RE$ in
\cite{ji2022mipre} is very difficult, but requires only two-prover one-round proof
systems.  Natarajan and Zhang have sharpened the proof to show that these proof
systems require only a constant number of questions, and $\polylog$ length
answers from the provers \cite{natarajan2023quantum}. This shows that $\MIP^* = \AM^*(2)$, the
complexity class of languages with two-prover $\MIP^*$-protocols in which the verifier
chooses their messages to the prover uniformly at random. A one-round $\MIP$ or
$\MIP^*$ proof system is equivalent to a family of nonlocal games, in which the
provers (now also called players) are given questions and return answers to a
verifier (now also called a referee), who decides whether to accept (in which
case the players are said to win) or reject (the players lose). In both
\cite{ji2022mipre} and \cite{natarajan2023quantum}, the games are synchronous,
meaning that if the players receive the same question then they must reply with
the same answer, and admit what are called oracularizable strategies. As we
observe in this paper, one-round $\MIP^*$ proof systems in which the games are
synchronous and oracularizable are equivalent to the class of $\BCS$-$\MIP^*$
proof systems, which are one-round two-prover proof systems in which the
nonlocal games are boolean constraint system (BCS) games. In a boolean
constraint system, two provers try to convince the verifier that a given BCS is
satisfiable. BCS games were introduced by Cleve and Mittal
\cite{cleve2013characterization}, and include famous examples of nonlocal games
such as the Mermin-Peres magic square \cite{mermin90simple,PERES1990107}.
Boolean constraint systems are much easier to work with than general $\MIP^*$
protocols, so rather than showing that every $\MIP^*$ protocol can be
transformed to a perfect zero knowledge protocol, we prove \Cref{thm:main} by
showing that every $\BCS$-$\MIP^*$ protocol can be transformed to a perfect
zero knowledge protocol. As we explain at the end of \Cref{sec:nonlocal}, when
combined with the $\MIP^*=\RE$ theorem this gives an effective way to transform
any $\MIP^*$-protocol (including protocols with many provers and rounds) into a
perfect zero knowledge $\BCS$-$\MIP^*$ protocol.

One way to transform a $\BCS$-$\MIP^*$ protocol to a perfect zero-knowledge
protocol is to use graph colouring games, which are famous examples of perfect
zero knowledge games. Classically, every BCS instance can be transformed to
a graph such that the graph is $3$-colourable if and only if the BCS is
satisfiable. Ji has shown that every BCS can be transformed to a graph such
that the original BCS game has a perfect quantum strategy if and only if the
$3$-colouring game for the graph has a perfect quantum strategy
\cite{ji2013binary} (see also \cite{harris2023universality}). Using the
techniques in this paper, it is also possible to show that this transformation
preserves soundness of $\BCS$-$\MIP^*$ protocols, and hence that every
$\BCS$-$\MIP^*$ protocol can be transformed to a $\MIP^*$ protocol based on
graph colouring games. Unfortunately graph colouring games are only perfect zero
knowledge against honest verifiers, so this construction does not give a
perfect zero knowledge protocol for dishonest verifiers. Instead, we use another
classical transformation due to Dwork, Feige, Kilian, Naor, and Safra
\cite{Dwork1992LowC2}, which takes every 3SAT instance to a perfect
zero-knowledge $\MIP$ protocol. We show that a modest variant of this
construction remains perfect zero knowledge in the quantum setting, and
preserves soundness of $\BCS$-$\MIP^*$ protocols. In both the original argument
and our argument, it is necessary for soundness to work with $\BCS$-$\MIP$
protocols with small (meaning $\log$ or  $\polylog$) question length. In the
classical setting, $\BCS$-$\MIP$ with $\log$ question length is equal to $\NP$,
so the construction in \cite{Dwork1992LowC2} only shows that $\NP$ is contained in
$\PZK$-$\MIP$, rather than all of $\NEXP$. In the quantum setting,
$\BCS$-$\MIP^*$ with $\polylog$ question length is equal to $\MIP^*$ and 
this construction suffices to prove perfect zero knowledge for any
$\MIP^*$ protocol --- an interesting difference in what techniques can be used
between the classical and quantum setting.

In general, it's a difficult question to figure out if a classical
transformation of constraint systems (of which there are many) remains
sound (meaning that it preserves soundness of protocols) in the quantum
setting. For instance, one of the key parts of the $\MIP^*=\RE$ theorem is the
construction of PCP of proximity which is quantum sound. On the other hand,
there are some transformations which lift fairly easily to the quantum setting.
We identify two such classes of transformations, ``classical transformations''
which are applied constraint by constraint, and ``context subdivision
transformations'', in which each constraint is split into a number of
subclauses. Both types of transformations are used implicitly throughout the
literature on nonlocal games, including in \cite{ji2013binary}, which was the
first paper to consider reductions between quantum strategies in BCS games.
In this paper, we systematically investigate the quantum soundness of these
transformations.  It's relatively easy to show that classical transformations
preserve soundness, and this is shown in \Cref{sec:weightedhom}.  In
subdivision, each subclause becomes a different question in the associated BCS
game, and thus a strategy for the subdivided game has many more observables
than the original game. Since these new observables don't need to commute with
each other, subdivision is more difficult to work with. Nonetheless, we show
that if the subclauses have a bounded number of variables, then subdivision
preserves soundness with a polynomial dropoff.  This is shown in
\Cref{sec:stability}. The construction in \cite{Dwork1992LowC2} can be
described as a composition of classical transformations and context subdivision
transformations, so quantum soundness (with polynomial dropoff) of this
construction follows from combining the soundness of these two transformations.
We recover a constant soundness gap by using parallel repetition, which preserves
the class of BCS games. 

While reductions between nonlocal games have been important in previous work,
they are difficult to reason about, since it's necessary to keep track of how
strategies for one game map to strategies for the other game. One advantage of
working with constraint systems in the classical setting is that it's more
convenient to work with assignments (and think about the fraction of
constraints in the system that can be satisfied) than it is to work with
strategies and winning probabilities. In the quantum setting, it isn't
possible to work with assignments, because strategies involve observables
that don't necessarily commute with each other. However, we can achieve 
a similar conceptual simplification
by replacing assignments with representations of the BCS algebra of the
constraint system. This algebra is the same as the synchronous algebra of the
BCS game introduced in \cite{helton2017algebras,Kim_2018}; we refer to
\cite{PS23} for more background.  With this approach, reductions between BCS
games can be expressed as homomorphisms between BCS algebras, and these are
much easier to describe and work with than mappings between strategies. For
soundness arguments, we need to work with near-perfect strategies, and these
correspond to approximate representations of the BCS algebra \cite{Pad22}.
Previous work using this idea (see e.g. \cite{Pad22,
harris2023universality}) has focused on reductions between single games, and
the definitions are not suitable for working with protocols, as they do not
incorporate question distributions. To solve this problem, we introduce a
notion of weighted algebras and weighted homomorphisms, which allows us to
keep track of soundness of reductions between games using completely algebraic
arguments involving sums of squares. 

Another advantage of the weighted algebras framework is that arguments can be made
simultaneously for both quantum and commuting operator strategies. Our
proof methods extend to commuting operator strategies as a result.  However,
our results here are not as conclusive, as the exact characterization of the
corresponding complexity class $\MIP^{co}$ is not known. There is a conjecture
that $\MIP^{co} = \coRE$, and with that conjecture and a parallel repetition
theorem for commuting operator strategies, we expect that it would be possible
to extend \Cref{thm:main} to show that all languages in $\MIP^{co}$ have a
perfect zero knowledge commuting operator protocol. Without these ingredients,
we are limited to showing that $\BCS$-$\MIP^{co} = \PZK$-$\BCS$-$\MIP^{co}$. 
Previous work on perfect zero knowledge for commuting operator protocols
does not preserve soundness gaps \cite{coudron2019complexity}.  

Our results also have applications for the membership problem for quantum
correlations. For exact membership, the cohalting problem is many-one reducible
to membership in the set of quantum-approximable correlations $C_{qa}$, and to membership in the
set of commuting operator correlations $C_{qc}$ 
\cite{Sl19,coudron2019complexity,FMS2021}. It follows from $\MIP^*=\RE$ that
the halting problem is Turing reducible to approximate membership in $C_q$, the
set of quantum correlations, but this is not a many-one reduction. The proof of
Theorem \ref{thm:main} immediately implies that there is a many-one reduction
from the halting problem to approximate membership in $C_q$.

Because we use parallel repetition to reduce an inverse-polynomial soundness
gap to a constant soundness gap, the protocols in \Cref{thm:main} use
polynomial length questions and answers. If an inverse-polynomial soundness gap is
allowed, we get perfect zero-knowledge protocols with $\polylog$ question
length and constant answer length. Whether it is possible to get
perfect zero-knowledge protocols with $\polylog$ question
length, constant answer length, and constant soundness gap is an interesting
open question.  This would be possible with an improved analysis or
construction for subdivision such as appears in the low degree test
\cite{JNVWY21} used in the $\MIP^* = \RE$ theorem.

\subsection*{Acknowledgements}
We thank Connor Paddock and Henry Yuen for helpful conversations.
KM is supported by NSERC.  WS is supported by NSERC DG 2018-03968 and an Alfred
P. Sloan Research Fellowship.

\section{Background on $*$-algebras}\label{sec:algebras}

We recall some of the key concepts in the theory of $*$-algebras. See \cite{ozawa2013connes, Schmdgen2020} for a more complete background. A complex $*$-algebra $\mcA$ is a unital algebra over $\mbC$ with an antilinear involution $a \mapsto a^*$, such that and $(ab)^*=b^*a^*$. We let $\mbC^*\langle X\rangle$ denote the free complex $*$-algebra generated by the set $X$. If $R \subseteq \mbC^*\langle X\rangle$, we let $\mbC^*\langle X: R\rangle$  denote the quotient of $\mbC^*\langle X\rangle$ by the two-sided ideal generated by $R$. If $X$ and $R$ are finite then we call $\mbC^*\langle X:R\rangle$ a \textbf{finitely presented} $*$-algebra.

A $*$\textbf{-homomorphism} $\phi:\mcA \to \mcB$ between $*$-algebras is an algebra homomorphism such that $\phi(x^*) = \phi(x)^*$ for all $x \in A$. A $*$\textbf{-representation} of $\mcA$ is a $*$-homomorphism $\rho: \mcA \to \mcB(\mcH)$ from $\mcA$ to the $*$-algebra of bounded operators on the Hilbert space $\mcH$. If $\mcA$ and $\mcB$ are $*$-algebras, and $\C^*\langle X:R\rangle$ is a presentation of $A$, then $*$-homomorphisms $\mcA\to \mcB$ correspond to homomorphisms $\phi: \mcC\langle X\rangle \to \mcB$ such that $\phi(r) = 0$ for all $r\in R$. Thus, a $*$-representation is an assignment of operators to the elements of $X$ that satisfies the defining relations $R$.

If $\mcA$ is a $*$-algebra, then $a \geq b$ if
$a-b$ is a sum of hermitian squares, i.e. there is $k \geq 0$ and
$c_1,\ldots,c_k \in \mcA$ such that $a-b = \sum_{i=1}^k c_i^* c_i$. A finitely presented $*$-algebra $\mcA$ is called \textbf{archimedean} if for all $a\in \mcA$ there exists a $\lambda>0$ such that $a^*a\leq \lambda1$. The algebras we consider in this work are all archimedean. If $f:\mcA\to \mbC$ is a linear functional then $f$ is \textbf{positive} if $f(a)\geq0$ whenever $a\geq0$. A \textbf{state} on $\mcA$ is a positive linear functional $\tau:\mcA \to \mbC$ with $\tau(a^*a)\geq 0$ for all $a \in \mcA$, $\tau(1) = 1$ and $\tau(a^*) = \overline{\tau(a)}$ for all $a \in \mcA$. A state is \textbf{tracial} if $\tau(ab) = \tau(ba)$ for all $a,b \in \mcA$, and \textbf{faithful} if $\tau(a^*a)>0$ for all $a\neq 0$. A tracial state $\tau$ induces the \textbf{trace norm} $\|a\|_{\tau} := \sqrt{\tau(a^* a)}$, also called the $\tau$-norm. Trace norms are unitarily invariant, meaning that $\|uav\|_{\tau} = \|a\|_{\tau}$ for all $a \in \mcA$, and all unitaries $u$ and $v$. An element $u \in \mcA$ is called \textbf{unitary} if $u^*u = 1 = uu^*$.

If $\rho: \mcA \to \mcB(\mcH)$ is a $*$-algebra representation, then a vector $|v\rangle \in \mcH$ is \textbf{cyclic} for $\rho$ if the closure of $\rho(\mcA)|v\rangle$ with respect to the Hilbert space norm is equal to $\mcH$. A \textbf{cyclic representation} of $\mcA$ is a tuple $(\rho,\mcH,|v\rangle)$, where $\rho$ is a representation of $\mcA$ on $\mcH$ and $|v\rangle$ is a cyclic vector for $\rho$. If $\tau:\mcA \to \mbC$ is a positive linear functional on $\mcA$, then there is a cyclic representation $\rho_{\tau}$ of $\mcA$, called the \textbf{GNS representation} of $\tau$, such that $\tau(a) = \bra{\xi_{\tau}}\rho_{\tau}\ket{\xi_{\tau}}$ for all $a \in \mcA$. Two representations $\rho: \mcA \to \mcB(\mcH)$ and $\pi: \mcA \to \mcB(\mcK)$ of $\mcA$ are \textbf{unitarily equivalent} if there is a unitary operator $U: \mcH\to \mcK$ such that $U\rho(a)U^* = \pi(a)$ for all $a \in \mcA$. If $\tau$ is the state defined by $\tau(a) = \bra{\xi}\rho(a)\ket{\xi}$ for all $a \in \mcA$ and some cyclic representation $(\rho,\mcH,\ket{\xi})$, then $(\rho,\mcH,\ket{\xi})$ is unitarily equivalent to the GNS representation. A state $\tau$ is \textbf{finite-dimensional} if the Hilbert space $\mcH_{\tau}$ in the GNS representation $(\rho_{\tau},\mcH_{\tau},|\xi_{\tau}\rangle)$ is finite-dimensional. A state $\tau$ on $\mcA$ is called \textbf{Connes-embeddable} if there is a trace-preserving embedding of $\mcA$ into the ultrapower of the hyperfinite $\text{II}_1$ factor.

If $\mcA$ is a $*$-algebra then two
elements $a,b \in \mcA$ are said to be \textbf{cyclically equivalent} if there
is $k \geq 0$ and $f_1,\ldots,f_k,g_1,\ldots,g_k \in \mcA$ such that $a - b =
\sum_{i=1}^k [f_i, g_i]$, where $[f,g] = fg - gf$. We say that $a \gtrsim b$ if
$a-b$ is cyclically equivalent to a sum of squares.
If $\tau$ is a tracial state on $\mcA$ then $\tau(c_i^*c_i) \geq 0$ and $\tau([f_j,g_j]) = 0$. 
Thus if $a \gtrsim b$ then $\tau(a)\geq \tau(b)$, and if $a$ and $b$ are cyclically equivalent then $\tau(a-b) = 0$.

The $*$-algebras we use in this work are built out of the group algebras of the finitely presented groups
\begin{equation*}
	\Z_q^{*V} = \langle V: x^q = 1\rangle \text{ and } \Z_q^V = \langle V: x^q = 1, xy=yx \text{ for all } x,y \in V\rangle.
\end{equation*}
The group algebra $\C\Z^{*V}_q$ is the $*$-algebra generated by variables $x \in V$ with the defining relations from $\Z_q^{*V}$, along with the relations $x^*x = xx^* = 1$ for all $x \in V$. Similarly $\C \Z_q^{V}$ is the $*$-algebra generated by variables $x \in V$ with the defining relations of $\Z_q^V$, along with the relations $x^q = x^* x = x x^* = 1$ for all $x \in V$. Notice that $\C
\Z_q^{V}$ is the quotient of $\C \Z_q^{*V}$ by the relations $xy = yx$ for all
$x,y \in V$. If $\mcA$ and $\mcB$ are complex $*$-algebras, then we let $\mcA \ast \mcB$ denote their free product, and $\mcA \otimes \mcB$ denote their tensor product. Both are again complex $*$-algebras. 

When working with $\C\Z_q^V$, a \textbf{monomial} in $V$ is an element of the form $\prod_{x\in V}x^{a_x}$, where $0 \leq a_x < q$. 
We say that the monomial contains a variable $y\in V$ if $a_y>0$. The degree of a monomial is
$\sum_x a_x$. If $\mcA_1$ and $\mcA_2$ are $*$-algebras for which we have a defined notion of monomial, then a monomial in $\mcA_1\otimes\mcA_2$ is an element of the form $v_1v_2$, where $v_i$ is a monomial in $\mcA_i$. The degree of $v_1v_2$ is the sum of the degrees of $v_1$ and $v_2$, and a variable $y$ is contained in $v_1v_2$ if $y$ is contained in $v_1$ or $v_2$. For instance, a monomial in $\C\Z_{q_1}^{V_1}\otimes\C\Z_{q_2}^{V_2}$ is an element of the form $\prod_{x\in V_1}x^{a_x}\cdot\prod_{y\in
V_2}y^{b_y}$, where $0\leq a_x < q_1$ and $0\leq b_y<q_2$. Similarly, a monomial in $\mcA_1\ast\mcA_2$ is an element of the form $v_1\cdots v_k$, where $v_j$ is a monomial in $\mcA_{i_j}$ for all $1\leq j\leq k$, and $i_j\neq i_{j+1}$ for all $1\leq j < k$. In this case, the degree of $v_1\cdots v_k$ is the sum of the degrees of $v_1,\ldots,v_k$, and a variable $y$ is contained in $v_1\cdots v_k$ if $y$ is contained in one of the monomials $v_1,\ldots,v_k$. In any $*$-algebra where we have a defined notion of monomial, a polynomial is a linear combination of monomials.

A $C^*$\textbf{-algebra} $\mcA$ is a complex $*$-algebra with a submultiplicative Banach norm that satisfies the $C^*$ identity $\|aa^*\| = \|a\|^2$ for all $a\in \mcA$. Every $C^*$-algebra can be realized as a norm-closed $*$-subalgebra of the algebra of bounded operators $\mcB(\mcH)$ on some Hilbert space $\mcH$. A $C^*$-algebra is a von Neumann algebra if it can be realized as a $*$-subalgebra of $\mcB(\mcH)$ which is closed in the weak operator topology. More background on $C^*$-algebras and von Neumann algebras can be found in \cite{blackadar06}.

\section{Nonlocal games and MIP*}\label{sec:nonlocal}

A two-player \textbf{nonlocal} (or \textbf{Bell}) \textbf{scenario} consists of
a finite set of questions $I$, and a collection of finite answer sets $(O_i)_{i
\in I}$. Often in this definition there are separate question and answer sets
for each player, but it's convenient for us to assume that both players have
the same question and answer sets, and we don't lose any generality by assuming
this. We often think of the question and answer sets as being subsets of
$\{0,1\}^n$ and $\{0,1\}^{m_i}$, $i \in I$ respectively, in which case we say
that the questions have length $n$ and the answers have length $\max_{i \in I}
m_i$. A \textbf{nonlocal game} consists of a nonlocal scenario $(I,(O_i)_{i
\in I})$, along with a probability distribution $\pi$ on $I \times I$ and a
family of functions $V(\cdot,\cdot|i,j) : O_i \times O_j \to \{0,1\}$ for
$(i,j) \in I \times I$.  In the game, the players (commonly called Alice and
Bob) receive questions $i$ and $j$ from $I$ with probability $\pi(i,j)$, and
reply with answers $a \in O_i$ and $b \in O_j$ respectively. They win if
$V(a,b|i,j) = 1$, and lose otherwise.

A \textbf{correlation} for scenario $(I,\{O_i\}_{i \in I})$ is a family $p$ of
probability distributions $p(\cdot,\cdot|i,j)$ on $O_i \times O_j$ for all
$(i,j) \in I \times I$. Correlations are used to describe the players'
behaviour in a nonlocal scenario. The probability $p(a,b|i,j)$ is interpreted
as the probability that the players answer $(a,b)$ on questions $(i,j)$.
A correlation $p$ is \textbf{quantum} if there are 
\begin{enumerate}[(a)]
    \item finite-dimensional Hilbert spaces $H_A$ and $H_B$,
    \item a projective measurement $\{M^i_a\}_{a \in O_i}$ on $H_A$ for every $i \in I$,
    \item a projective measurement $\{N^i_a\}_{a \in O_i}$ on $H_B$ for every $i \in I$, and
    \item a state $\ket{v} \in H_A \otimes H_B$
\end{enumerate}
such that $p(a,b|i,j) = \braket{v|M^i_a \otimes N^j_b|v}$ for all $i,j \in I$,
$a \in O_i$, $b \in O_j$. A collection $(H_A,H_B,\{M^i_a\}, \{N^i_a\},
\ket{v})$ as in (a)-(d) is called a \textbf{quantum strategy}. 
A correlation $p$ is \textbf{commuting operator} if there is
\begin{enumerate}[(i)]
    \item a Hilbert space $H$,
    \item projective measurements $\{M^i_a\}_{a \in O_i}$ and $\{N^i_a\}_{a \in O_i}$ on $H$
        for every $i \in I$, and
    \item a state $\ket{v} \in H$
\end{enumerate}
such that $M^i_a N^j_b = N^j_b M^i_a$ and $p(a,b|i,j) = \braket{v|M^i_a
N^j_b|v}$ for all $i,j \in I$ and $a \in O_i$, $b \in O_j$. A collection
$(H,\{M^i_a\},\{N^i_a\},\ket{v})$ as in (i)-(iii) is called a \textbf{commuting
operator strategy}.  The set of quantum correlations for a scenario
$(I,\{O_i\})$ is denoted by $C_q(I,\{O_i\})$, and the set of commuting operator
correlations is denoted by $C_{qc}(I,\{O_i\})$. If the scenario is clear from
context, then we denote these sets by $C_q$ and $C_{qc}$. Any quantum
correlation is also a commuting operator correlation, so $C_q \subseteq
C_{qc}$. If a commuting operator correlation has a commuting operator strategy
on a finite-dimensional Hilbert space $H$, then it is also a quantum correlation,
but in general $C_{qc}$ is strictly larger than $C_q$. 

The \textbf{winning probability} of a correlation $p$ in a nonlocal game $\mcG = (I,\{O_i\},\pi,V)$
is 
\begin{equation*}
    \omega(\mcG;p) := \sum_{i,j \in I} \sum_{a \in O_i, b \in O_j} \pi(i,j) V(a,b|i,j) p(a,b|i,j).
\end{equation*}
The \textbf{quantum value} of $\mcG$ is
\begin{equation*}
    \omega_q(\mcG) := \sup_{p\in C_q} \omega(\mcG;p)
\end{equation*}
and the \textbf{commuting operator value} is
\begin{equation*}
    \omega_{qc}(\mcG) := \sup_{p \in C_{qc}} \omega(\mcG;p).
\end{equation*}
A correlation $p$ is \textbf{perfect} for $\mcG$ if $\omega(\mcG;p) = 1$, and
$\eps$-\textbf{perfect} if $\omega(\mcG;p) \geq 1-\eps$. A strategy
is $\eps$-perfect if its corresponding correlation is $\eps$-perfect. 
The set $C_{qc}$ is closed and compact, so $\mcG$ has a perfect commuting
operator correlation if and only if $\omega_{qc}(\mcG) = 1$. However, $C_{q}$ is
not necessarily closed, and there are games $\mcG$ with $\omega_q(\mcG)=1$
which do not have a perfect quantum correlation. A correlation $p$ is
\textbf{quantum approximable} if it belongs to the closure $C_{qa} :=
\overline{C_q}$, and a game $\mcG$ has a perfect quantum approximable
correlation if and only if $\omega_q(\mcG)=1$.

A nonlocal game $\mcG = (I,\{O_i\},\pi,V)$ is \textbf{synchronous} if
$V(a,b|i,i) = 0$ for all $i \in I$ and $a \neq b \in O_i$.
A correlation $p$ is \textbf{synchronous} if $p(a,b|i,i) =
0$ for all $i \in I$ and $a \neq b \in O_i$. The set of synchronous quantum
(resp. commuting operator) correlations is denoted by $C_{q}^s$ (resp.
$C_{qc}^s$). A correlation $p$ belongs to
$C_{qc}^s$ (resp. $C_{q}^s$) if and only if there is
\begin{enumerate}[(A)]
    \item a Hilbert space $H$ (resp. finite-dimensional Hilbert space $H$),
    \item a projective measurement $\{M^i_a\}_{a \in O_i}$ on $H$ for all $i \in I$, and
    \item a state $\ket{v} \in H$
\end{enumerate}
such that $\ket{v}$ is tracial, in the sense that $\braket{v|\alpha \beta|v} =
\braket{v|\beta \alpha|v}$ for all $\alpha$ and $\beta$ in the $*$-algebra
generated by the operators $M^i_a$, $i \in I$, $a \in O_i$, and $p(a,b|i,j) =
\braket{v|M^i_a M^j_b|v}$ for all $i,j \in I$, $a \in O_i$, $b \in O_j$. A
collection $(H,\{M^i_a\},\ket{v})$ as in (A)-(C) is called a
\textbf{synchronous commuting operator strategy}. If, in addition, $H$
is finite-dimensional, then $(H,\{M^i_a\},\ket{v})$ is also called a
\textbf{synchronous quantum strategy}. The synchronous quantum and
commuting operator values $\omega_q^s(\mcG)$ and $\omega_{qc}^s(\mcG)$ of a
game $\mcG$ are defined equivalently to $\omega_q(\mcG)$ and
$\omega_{qc}(\mcG)$, but with $C_q$ and $C_{qc}$ replaced by $C_q^s$ and
$C_{qc}^s$. A synchronous strategy $(H,\{M^i_a\},\ket{v})$ for a game 
$\mcG = (I,\{O_i\},\pi,V)$ is \textbf{oracularizable} if $M^i_a M^j_b = M^j_b
M^i_a$ for all $i,j \in I$, $a \in O_i$, $b \in O_j$ with $\pi(i,j) > 0$.

A theorem of Vidick \cite{Vidick_2022} (see also \cite{Pad22}) states that every quantum
correlation which is close to being synchronous, in the sense that $p(a,b|i,i)
\approx 0$ for all $i \in I$ and $a \neq b \in O_i$, is close to a synchronous
quantum correlation. This theorem has been extended to
commuting operator correlations by \cite{lin2023synchronous}. As a result, the synchronous
quantum and commuting values of a game are polynomially related to the
non-synchronous quantum and commuting values.  We use a version of this result
due to Marrakchi and de la Salle \cite{marrakchi2023synchronous}. Following
\cite{marrakchi2023synchronous}, say that a probability distribution on $I \times I$
is \textbf{$C$-diagonally dominant} if $\pi(i,i) \geq C \sum_{j \in I}
\pi(i,j)$ and $\pi(i,i) \geq C \sum_{j \in I} \pi(j,i)$ for all $i \in I$.
Then:
\begin{theorem}[\cite{marrakchi2023synchronous}]\label{thm:synchrounding}
    Suppose $\mcG$ is a synchronous game with a $C$-diagonally dominant
    question distribution. If $\omega_q(\mcG)$ (resp. $\omega_{qc}(\mcG)$) is
    $\geq 1-\eps$, then $\omega_q^s(\mcG)$ (resp. $\omega_{qc}^s(\mcG)$) is
    $\geq 1 - O((\eps/C)^{1/4})$.
\end{theorem}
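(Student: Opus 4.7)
The plan is to reduce to a near-synchronicity estimate and then invoke the C*-algebraic rounding construction that is the heart of the Marrakchi--de la Salle argument. First, given a quantum (resp.\ commuting operator) strategy $(H_A,H_B,\{M^i_a\},\{N^i_a\},\ket{v})$ with $\omega(\mcG;p) \geq 1-\eps$, the synchronicity of $\mcG$ (i.e.\ $V(a,b|i,i)=0$ for $a\neq b$) forces
\begin{equation*}
    \sum_{i\in I} \pi(i,i) \sum_{a\neq b\in O_i} p(a,b|i,i) \;\leq\; \eps.
\end{equation*}
Setting $\pi_A(i) := \sum_j \pi(i,j)$ (and symmetrically $\pi_B$), the $C$-diagonal dominance hypothesis $\pi(i,i)\geq C\pi_A(i),\, C\pi_B(i)$ upgrades this to
\begin{equation*}
    \sum_{i\in I} \pi_A(i) \sum_{a\neq b} p(a,b|i,i) \;\leq\; \eps/C \;=:\; \delta,
\end{equation*}
so the correlation is $\delta$-synchronous with respect to the marginal $\pi_A$ (and similarly for $\pi_B$).

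Next I would run the synchronous rounding construction. The idea is to replace the two-sided strategy with a one-sided strategy built from Alice's operators $\{M^i_a\}$ together with the tracial state coming from the reduced density matrix $\tau = \Tr_B \ket{v}\bra{v}$. Writing $\ket{v} = (\sqrt{\tau}\otimes I)\ket{\Omega}$ through a maximally entangled vector $\ket{\Omega}$ on $H_A\otimes H_A$, Bob's operators can be transported to Alice's side as transposes, and the $\delta$-synchronicity bound translates into
\begin{equation*}
    \sum_{i} \pi_A(i) \sum_a \bigl\| (M^i_a \otimes I - I \otimes (N^i_a)^T)\ket{v} \bigr\|^2 \;=\; O(\delta),
\end{equation*}
by expanding the diagonal probabilities and using that the $M^i_a$ and $N^i_a$ are projective. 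Averaging and applying the Marrakchi--de la Salle trace-on-ultraproduct argument produces a tracial state on the $*$-algebra generated by the $\{M^i_a\}$, hence a synchronous quantum (resp.\ commuting operator) strategy, whose correlation $p^s$ satisfies
\begin{equation*}
    \sum_i \pi_A(i) \sum_{a,b} \bigl| p^s(a,b|i,j) - p(a,b|i,j) \bigr| \;=\; O(\delta^{1/2})
\end{equation*}
after one Cauchy--Schwarz step.

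Finally I would turn this into a winning-probability estimate. The change in $\omega(\mcG;\cdot)$ between $p$ and $p^s$ is bounded by the total-variation distance above, and another Cauchy--Schwarz passes from the marginal $\pi_A$ to the original question distribution $\pi$, costing a further square root. Combining, $\omega^s_q(\mcG) \geq \omega_q(\mcG) - O(\delta^{1/4}) = 1 - O((\eps/C)^{1/4})$, as claimed; the commuting operator case is identical, with the GNS representation of $\tau$ replacing $\sqrt{\tau}$ in a general Hilbert space. The main obstacle is the rounding step itself: constructing the tracial state with quantitative control requires a compactness or ultraproduct argument (this is where Marrakchi--de la Salle improve over Vidick and Paddock), and carefully bookkeeping the two successive Cauchy--Schwarz applications is what yields the $1/4$ exponent rather than something smaller. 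I would therefore cite \cite{marrakchi2023synchronous} for this rounding step and only verify that the diagonal dominance hypothesis enters in the form stated.
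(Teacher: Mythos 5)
The paper does not prove this statement; it simply cites it from \cite{marrakchi2023synchronous} and proceeds, so there is no in-paper proof against which to compare your argument. With that caveat, your reconstruction captures the intended shape of the Marrakchi--de la Salle rounding: use synchronicity of the game plus $C$-diagonal dominance to derive a near-synchronicity estimate $\sum_i \pi_A(i)\sum_{a\neq b}p(a,b|i,i) \leq \eps/C$, transport Bob's operators to Alice's side via the reduced density matrix to get an average operator-distance bound, then round to a tracial state (the genuinely hard, ultraproduct step you correctly attribute to the reference), and finally convert back to a winning-probability bound. The first two steps are exactly right, and deferring the rounding step to the cited paper is appropriate since that is precisely the content being imported.

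The place to be careful is your exponent accounting at the end. As written, your intermediate display $\sum_i \pi_A(i)\sum_{a,b}|p^s(a,b|i,j)-p(a,b|i,j)| = O(\delta^{1/2})$ has a free index $j$, and if it really were a TV bound over the full correlation then the change in winning probability would already be $O(\delta^{1/2})$ with no further loss; the claim that ``another Cauchy--Schwarz passes from the marginal $\pi_A$ to the original question distribution $\pi$'' is not where the second square root comes from. In the actual argument the intermediate quantity is an $L^2$-type bound on the operator mismatch (of order $\delta$), rounding costs one square root to get vectors aligned, and the passage from vector-norm control to correlation (or winning probability) control costs the other square root, which is how the $\delta^{1/4}$ appears. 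These are bookkeeping issues rather than a wrong approach, but since the theorem's sole content is the quantitative dependence on $\eps/C$, the final accounting is worth getting precise rather than gesturing at two Cauchy--Schwarz steps.
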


A \textbf{two-prover one-round $\MIP$ protocol} is a family of nonlocal games
$\mcG_x = (I_x,\{O_{xi}\}_{i \in I_x}, \pi_x, V_x)$ for $x \in \{0,1\}^*$,
along with a probabilistic Turing machine $S$ and another Turing machine $V$,
such that 
\begin{itemize}
    \item for all $x \in \{0,1\}^*$ and $i \in I_x$, there are integers 
        $n_x$ and $m_{xi}$ such that $I_x = \{0,1\}^{n_x}$ and $O_{xi}
            = \{0,1\}^{m_{xi}}$, 

    \item on input $x$, the Turing machine $S$ outputs $(i,j) \in I \times I$
        with probability $\pi_x(i,j)$, and  

    \item on input $(x,a,b,i,j)$, the Turing machine $V$ outputs $V_x(a,b|i,j)$. 
\end{itemize}
Let $c, s : \{0,1\}^* \to \Q$ be computable functions with $c(x) > s(x)$
for all $x \in \{0,1\}^*$. A language $\mcL \subset \{0,1\}^*$ belongs $\MIP^*(2,1,c,s)$ if
there is a MIP protocol $(\{\mcG_x\}, S, V)$ such that $n_x$ and $m_{xi}$ are
polynomial in $|x|$, $S$ and $V$ run in polynomial time in $|x|$, if $x \in
\mcL$ then $\omega_q(\mcG_x) \geq c$, and if $x \not\in \mcL$ then
$\omega_q(\mcG_x) \leq s$. The function $c$ is called the \textbf{completeness
probability}, and $s$ is called the \textbf{soundness probability}. The functions
$n_x$ and $m_{xi}$ are called the \textbf{question length} and \textbf{answer length}
respectively. The class $\MIP^{co}(2,1,c,s)$ is defined equivalently to
$\MIP^*(2,1,c,s)$, but with $\omega_q$ replaced by $\omega_{qc}$. The protocols
in these cases are called $\MIP^*$ and $\MIP^{co}$ protocols. A language
belongs to $\AM^*(2)$ (resp.  $\AM^{qc}(2)$) if it has a $\MIP^*$-protocol
(resp.  $\MIP^{qc}$-protocol) in which $\pi_x$ is the uniform distribution on
$I_x \times I_x$. Such a protocol is called an $\AM^*(2)$ protocol. We can 
also define classes $\SynMIP^*$ and $\SynMIP^{co}$ by replacing the quantum
and commuting operator values by $\omega_q^s$ and $\omega_{qc}^s$.

Any language in $\MIP^*(2,1,c,s)$ is contained in $\RE$, and this remains true
even if we add more provers and rounds of communication. The $\MIP^*=\RE$ theorem
of Ji, Natarajan, Vidick, Wright, and Yuen states that $\MIP^*(2,1,1,1/2) = \RE$
\cite{ji2022mipre}. In this paper, we use the following strong version of $\MIP^* =
\RE$ due to Natarajan and Zhang \cite{natarajan2023quantum}.
\begin{theorem}[$\MIP^*=\RE$]\label{thm:mipre}
    There is a two-prover one round $\AM^*(2)$ protocol $(\{\mcG_x\}, S, V)$
    for the halting problem with completeness $c=1$ and soundness $s=1/2$, 
    such that $\mcG_x$ is a synchronous game with constant length questions,
    and $\polylog(|x|)$ length answers. Furthermore, if $\mcG_x$ has a perfect
    strategy, then it has a perfect oracularizable synchronous quantum strategy.
\end{theorem}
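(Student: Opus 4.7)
The plan is to invoke this theorem, which combines the $\MIP^*=\RE$ theorem of Ji, Natarajan, Vidick, Wright, and Yuen \cite{ji2022mipre} with the constant-question refinement of Natarajan and Zhang \cite{natarajan2023quantum}, rather than reprove it from scratch. The overall strategy proceeds in two stages. The first stage, due to \cite{ji2022mipre}, establishes $\MIP^* \supseteq \RE$ via a \emph{compression theorem}: there is an efficient procedure that maps the description of a nonlocal game $\mcG$ to a much smaller game $\mcG'$ such that $\omega_q(\mcG')$ encodes $\omega_q(\mcG)$, and moreover if $\mcG$ has a perfect strategy then so does $\mcG'$. Iterating compression, one can encode computations of unbounded time in games of fixed size, which reduces the halting problem to distinguishing $\omega_q(\mcG_x) = 1$ from $\omega_q(\mcG_x) \leq 1/2$.

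The second stage, due to \cite{natarajan2023quantum}, rebuilds the PCP components used inside compression using a quantum low-degree test and a quantum PCP of proximity tuned for $\AM^*(2)$. This yields a protocol with constant question length and $\polylog(|x|)$ answer length in which the verifier samples uniformly from the question set. Synchronicity is arranged by ensuring that each sub-test (Pauli braiding, magic square, low-degree, and introspection) uses a question distribution symmetric under swapping the two provers and a winning predicate symmetric in the two answers, so that diagonal questions carry an equality constraint that is automatically satisfied by the honest strategy. The final oracularizable synchronous perfect strategy exists because the honest strategy built into each sub-test consists of commuting projective measurements on a shared (approximately) maximally entangled state, and rigidity of the sub-tests ensures that any perfect strategy is equivalent to this canonical one up to local isometry.

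The hard part, which we take as a black box, is the quantum soundness of each reduction inside compression and of the quantum low-degree test; this constitutes the bulk of both \cite{ji2022mipre} and \cite{natarajan2023quantum}. For the purposes of this paper we only need the conclusion of Theorem \ref{thm:mipre}, namely the existence of a two-prover, one-round, synchronous, oracularizable $\AM^*(2)$ protocol for the halting problem with constant question length, $\polylog$ answer length, and completeness--soundness gap $1$ versus $1/2$, which we will feed into the reductions of the following sections.
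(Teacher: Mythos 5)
Your proposal takes the same basic route as the paper: cite \cite{ji2022mipre} and \cite{natarajan2023quantum} as black boxes rather than reprove them. However, you gloss over the one concrete step the paper's proof actually carries out. The paper does \emph{not} claim that the Natarajan--Zhang protocol already has a uniform question distribution; it says NZ give an $\MIP^*$ protocol with a constant number of questions, and then observes (following NZ) that any such protocol can be \emph{converted} into an $\AM^*(2)$ protocol with completeness $1$ and some soundness $s<1$ (this conversion degrades soundness because the verifier now wastes weight on question pairs the original protocol rarely asked), after which parallel repetition (\Cref{sec:prep}) is invoked to bring the soundness back down to $1/2$. Your write-up instead asserts directly that the second stage ``yields a protocol\dots in which the verifier samples uniformly from the question set,'' which skips both the distribution change and the parallel-repetition step that the paper uses to restore the soundness bound.

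The rest of your proposal — the description of compression, the quantum low-degree test, introspection, rigidity, and the resulting oracularizable synchronous strategy — is reasonable background, but it is strictly more than the paper writes and is not where the paper's proof does any work. Conversely, the step you omit is the only non-citation content of the paper's proof. To match the paper, you should (i) state that NZ gives a constant-question synchronous protocol with $\polylog$ answers, completeness $1$, and soundness $1/2$ that admits perfect oracularizable synchronous strategies in the completeness case; (ii) convert to a uniformly-sampled question distribution, noting this preserves completeness and oracularizable perfect strategies but only gives soundness $s<1$; and (iii) apply parallel repetition to recover soundness $1/2$, noting that parallel repetition is compatible with synchronicity and oracularizability since the product of oracularizable synchronous strategies is again oracularizable synchronous.
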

\begin{proof}
    \cite{natarajan2023quantum} shows that there is $\MIP^*$ protocol for the halting problem
    meeting this description. As they observe, any $\MIP^*$ protocol with a constant
    number of questions can be turned into an $\AM^*(2)$ protocol with
    completeness $c=1$ and soundness $s<1$, and then parallel repetition
    (see \Cref{sec:prep}) can be used to lower the soundness back to $1/2$.
\end{proof}

One corollary of \Cref{thm:mipre} is that it is possible to transform any
$\MIP^*$ protocol into an equivalent $\AM^*(2)$ protocol $(\{\mcG_x\}, S, V)$
as in the theorem.  Indeed, suppose $\mcP$ is a polynomial-time probabilistic
interactive Turing machine which on input $x$ acts as the verifier in a
$\MIP^*$ protocol with $k$ rounds, $p$ provers, completeness $c$, and soundness
$s$, where $k$, $p$, $c$, and $s$ are computable functions of $|x|$.  Let
$\mcT$ be the Turing machine which on input $x$, searches through $k$-round
$p$-prover quantum strategies, uses $\mcP$ to calculate the success
probability, and halts if it finds a strategy with success probability $> s$.
Let $\mcT(x)$ be the Turing machine which on empty input writes $x$ to the
input tape and then runs $\mcT$.  Finally, let $(\{\mcG_M\}, S, V)$ be the
one-round protocol for the language $\HALT = \{ M : M \text{ is a Turing
machine that halts on empty input}\}$.  The Turing machines $S$ and $V$ run in
polynomial time in the size $|M|$ of the input Turing machine $M$, and
$\mcT(x)$ has size linear in $|x|$, so the one-round protocol which runs game
$\mcG_{\mcT(x)}$ on input $x$ is a polynomial-time $\AM^*(2)$ protocol which
recognizes the same language as $\mcP$. Strikingly, this works for any
computable $k$, $p$, and $s$, not just polynomial functions of $|x|$,
since the only requirement is that $\mcT(x)$ have polynomial description size.

\begin{remark}
The underlying statement of \Cref{thm:main} (see \Cref{thm:main1}) is that
there is a two-prover perfect-zero knowledge $\MIP^*$ protocol for the halting
problem. Hence the same argument as above shows that there is an effective
procedure for transforming any $\MIP^*$ protocol into a two-prover perfect zero
knowledge $\MIP^*$ protocol.
\end{remark}

\section{BCS games}\label{sec:BCS}

We now introduce boolean constraint system games. If $V$ is a set of variables,
a \textbf{constraint on $V$} is a subset $C$ of $\Z_2^V$. We think of $\Z_2$ as
$\{\pm 1\}$ rather than $\{0,1\}$, since this is more convenient when working
with observables and measurements. In particular, we use $-1$ and $1$ to
represent true and false respectively, rather than $1$ and $0$. An
\textbf{assignment to $V$} is an element $\phi \in \Z_2^V$, and we refer to the
elements of $C$ as \textbf{satisfying assignments for $C$}. For convenience, we
assume every constraint is non-empty, i.e. has a satisfying assignment. 
A \textbf{boolean constraint system} (BCS) $B$ is a pair
$\left(X,\{(V_i,C_i)\}_{i=1}^m\right)$, where $X$ is an ordered set of
variables, $V_i$ is a nonempty subset of $X$ for all $1\leq i\leq m$, and $C_i$
is a constraint on the variables $V_i$. When working with nonlocal games, the
sets $V_i$ are sometimes called the \textbf{contexts} of the system. The order on
$X$ induces an order on the contexts $V_i$, and this will be used for some
specific models of the weighted BCS algebra in \Cref{sec:stability}. This is the
only thing we use the order on $X$ for, so it can be ignored otherwise. 
A \textbf{satisfying assignment for $B$} is an assignment $\phi$ to $X$ such
that $\phi|_{V_i} \in C_i$ for all $1 \leq i \leq m$.  Although we won't use it
until later, we define the \textbf{connectivity} of a BCS $B$ to be the maximum
over $i$ of $|\{(x,j) \in V_i \times [m] : x \in V_j\}|$, where $[m] :=
\{1,\ldots,m\}$.  In other words, the connectivity is the maximum over $i$ of
the number of times the variables in constraint $i$ appear in the constraints
of $B$. Also, if $V = \bigcup_{i=1}^k V_i$ and $C_i$ is a constraint on $V_i$,
then the \textbf{conjunction} $\wedge_{i=1}^k C_i$ is the constraint $C$ on
variables $V$ such that $\phi \in C$ if and only if $\phi|_{V_i} \in C_i$ for
all $1 \leq i \leq k$.

Let $B = \left(X,\{(V_i,C_i)\}_{i=1}^m\right)$ be a BCS, and let $\pi$ be a
probability distribution on $[m] \times [m]$.  The \textbf{BCS game} $\mcG(B,\pi)$ is the nonlocal game $([m],
C_{i\in m}, \pi, V)$, where $V(\phi_i,\phi_j|i,j) = 1$ if $\phi_i|_{V_i \cap
V_j} = \phi_j|_{V_i \cap V_j}$, and is $0$ otherwise. In other words, in
$\mcG(B,\pi)$, the players are given integers $i,j \in [m]$ according to the
distribution $\pi$, and must reply with satisfying assignments $\phi_i \in C_i$
and $\phi_j \in C_j$ respectively. They win if their assignments agree on the
variables in $V_i \cap V_j$. With this definition, $\mcG(B,\pi)$ has questions
of length $\lceil \log m \rceil$, and answer sets of length $|V_i|$. 

A \textbf{$\BCS$-$\MIP$ protocol} is a family of BCS games
$\mcG(B_x,\pi_x)$, where $B_x = (X_x,\{(V_i^x,C_i^x)\}_{i=1}^{m_x})$, along
with a probabilistic Turing machine $S$ and another Turing machine $C$, such
that 
\begin{enumerate} 
    \item on input $x$, $S$ outputs $(i,j) \in [m_x] \times [m_x]$ with probability
$\pi_x(i,j)$, and
    \item on input $(x, \phi, i)$, $C$ outputs true if $\phi \in C_i^x$ and false
        otherwise.
\end{enumerate}
Technically, this definition should also include some way of computing the sets
$X_x$ and $V_i^x$. For instance, we might say that the integers $|N_x|$ and
$|V_i^x|$ are all computable, and there are computable order-preserving
injections $[|V_i^x|] \to [|X_x|]$. However, for simplicity we ignore this
aspect of the definition going forward, and just assume that in any
$\BCS$-$\MIP^*$ protocol, we have some efficient way of working with the sets
$X_x$ and $V_i^x$, the intersections $V_i^x \cap V_j^x$, and assignments
$\phi\in\Z_2^{V_i^x}$. 
A language $\mcL$ belongs to the complexity class $\BCS$-$\MIP^*(s)$ if there is a
$\BCS$-$\MIP$ protocol as above such that $\lceil \log m_x \rceil$ and $|V_i^x|$
are polynomial in $|x|$, $S$ and $C$ run in polynomial time, if $x \in \mcL$ then
$\omega_q^s(\mcG_x) = 1$, and if $x \not\in \mcL$ then $\omega_q^s(\mcG_x) \leq s$.
The parameter $s$ is called the soundness. Any $\BCS$-$\MIP^*$ protocol for 
$\mcL$ can be transformed into a $\SynMIP^*$ protocol by playing the game $\mcG_x$
with the answer sets $C_i$ replaced by $\Z_2^{V^x_i}$, and on input $(x,\phi,\psi,
i,j)$, asking the verifier $V$ to first check that $\phi \in C_i$ and $\psi \in C_j$
using $C$, and then checking that $\phi|_{V_i \cap V_j} = \psi|_{V_i \cap V_j}$.
Hence $\BCS$-$\MIP^*(s)$ is contained in $\SynMIP^*(2,1,1,s)$. Notice that in
this modified version of the BCS game, the players are allowed to answer with
non-satisfying assignments, but they always lose if they do so. Thus any
strategy for the modified game can be converted into a strategy for the original
game with the same winning probability, and perfect strategies for both types of
games (ignoring questions that aren't in the support of $\pi$) are identical, 
so the $\SynMIP^*$ protocol has the same completeness and soundness as the
$\BCS$-$\MIP^*$ protocol.  The class $\BCS$-$\MIP^{co}(s)$ can be defined
similarly by replacing $\omega_q$ with $\omega_{qc}$, and is contained in
$\SynMIP^{co}(2,1,1,s)$.  We can also define subclasses of $\BCS$-$\MIP^{*}$
and $\BCS$-$\MIP^{co}$. For instance, we let 3SAT-$\MIP^*$ be the class of
languages with a $\BCS$-$\MIP^*$ protocol $(\{\mcG(B_x,\pi_x)\},S,C)$, in which
every constraint of $B_x$ is a 3SAT clause, i.e. a disjunction $x \OR y \OR z$,
where $x,y,z$ are either variables from $B_x$, or negations of said variables,
or constants. 

If the players receive the same question $i \in [m]$, then they must reply with
the same assignment $\phi$ to win. Consequently, if $\pi(i,i) > 0$ for all $i$
then $\mcG(B,\pi)$ is a synchronous game. This version of BCS games is
sometimes called the constraint-constraint version of the game. There is are
other variants of BCS games, sometimes called constraint-variable BCS games, in
which one player receives a constraint and another receives a variable (see
\cite{cleve2013characterization}). In this paper, we work with constraint-constraint games
exclusively, but the two types of BCS games are closely related, and can often
be used interchangeably. As per the previous section, a synchronous strategy
for $\mcG(B,\pi)$ consists of projective measurements $\{M^i_\phi\}_{\phi \in
\Z_2^{V_i}}$, $i \in [m]$, on a Hilbert space $\mcH$, along with a state $\ket{v} \in
\mcH$ which is tracial on the algebra generated by $M^i_{\phi}$.


Conversely, it is well-known that every synchronous game $\mcG = (I,
\{\mcO_i\}, \pi, V)$ can be turned into a BCS game. One way to do this (see,
e.g. \cite{PS23,Pad22}) is to make a constraint system with variables $x_{ia}$ for $i
\in I$ and $a \in \mcO_i$, 
and constraints $\OR_{a \in \mcO_i} x_{ia} = \true$ for
all $i \in [m]$ and $x_{ia} \AND x_{jb} = \false$ whenever $V(a,b|i,j) = 0$.
The variable $x_{ia}$ represents whether the player answers $a$ on
input $i$, and the constraints express the idea that the players must choose an answer for
every question, and that they should reply with winning answers (the
synchronous condition on $V$ implies that $x_{ia} \AND x_{ib} = \false$ 
is a constraint for all $i$ and $a \neq b$, which means that the players should
choose a single answer for question $i$). The BCS game $\mcG'$ associated to
this constraint system has a perfect quantum (resp. quantum approximable,
commuting operator) strategy if
and only if $\mcG$ has a perfect quantum (resp. quantum approximable, commuting
operator) strategy. Unfortunately, this construction results in a game with
answer sets $\{\pm 1\}^{O_i}$, which means that the bit-length of the
answers increases exponentially from $\mcG$. If $\omega_q(\mcG) = 1 - \eps$,
then $\omega_q(\mcG) = 1 - O(\eps / |O_i|)$, meaning that if this construction
is used in a $\MIP^*$-protocol, soundness can drop of exponentially.

To fix this, we look at the oracularization $\mcG^{orac}$ of $\mcG$. There are
several versions of $\mcG^{orac}$ in the literature, all closely related. We
use the version from \cite{natarajan2019neexp}, in which the verifier picks a question pair
$(i_1,i_2) \in I$ according to $\pi$. The verifier then picks $a,b,c \in \{1,2\}$
uniformly at random. When $a=1$, they send player $b$ both questions $(i_1,i_2)$,
and the other player question $(i_c)$. Player $b$ must respond with $a_j \in O_j$
such that $V(a_1,a_2|i_1,i_2)=1$, and the other player responds with $b \in O_{i_c}$.
The players win if $a_c = b$. If $a=2$, both players are sent $(i_1,i_2)$ and must respond
with $(a_1,a_2)$ and $(b_1,b_2)$ in $O_{i_1} \times O_{i_2}$. They win if $(a_1,a_2)
= (b_1,b_2)$.  If $\mcG$ has questions of length $q$ and answers of length $a$, 
then $\mcG^{orac}$ has questions of length $2q$ and answers of length $2a$, 
so this construction only increases the question and answer length polynomially.
The following lemma shows that this construction is sound, in the sense
that $\omega_q(\mcG^{orac})$ cannot be much larger than $\omega_q(\mcG)$.
\begin{lemma}[\cite{natarajan2019neexp,ji2022mipre}]\label{lem:orac_sound}
    Let $\mcG$ be a synchronous game. If $\mcG$ has an perfect oracularizable
    synchronous strategy, then $\mcG^{orac}$ has a perfect synchronous strategy.
    Conversely, if $\omega_q(\mcG^{orac}) = 1 - \eps$, then $\omega_q(\mcG)
    \geq 1 - \poly(\eps)$. 
\end{lemma}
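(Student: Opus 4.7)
My plan is to prove the two directions separately, with the forward direction being a direct construction and the converse requiring a rounding argument followed by extraction of single-question measurements from the pair measurements.

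For the forward direction, given a perfect oracularizable synchronous strategy $(H,\{M^i_a\},\ket{v})$ for $\mcG$, I would build a strategy for $\mcG^{orac}$ as follows: on a single question $i$, use the measurement $\{M^i_a\}_{a \in O_i}$; on a pair question $(i_1,i_2)$ with $\pi(i_1,i_2) > 0$, use the products $\{M^{i_1}_{a_1} M^{i_2}_{a_2}\}_{(a_1,a_2)}$. Oracularizability guarantees $M^{i_1}_{a_1}$ and $M^{i_2}_{a_2}$ commute, so these products are projections forming a PVM. Moreover, on forbidden pairs $V(a_1,a_2|i_1,i_2)=0$, perfection of the strategy for $\mcG$ gives $\braket{v|M^{i_1}_{a_1}M^{i_2}_{a_2}|v}=0$, so the pair measurement only outputs valid answers. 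The $a=2$ branch then wins perfectly because both players apply the same pair measurement to a tracial state, and the $a=1$ branch wins because marginalizing the pair measurement in coordinate $c$ yields exactly $M^{i_c}_{a_c}$, which matches the single measurement on the other player's side.

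For the converse, suppose $\omega_q(\mcG^{orac}) \geq 1-\eps$. The pair-pair ($a=2$) branch imposes a synchronicity condition on pair questions, which is diagonally dominant in the question distribution of $\mcG^{orac}$; applying \Cref{thm:synchrounding} at the cost of a $\poly(\eps)$ loss, I may assume the strategy is synchronous, consisting of projective pair measurements $\{P^{(i_1,i_2)}_{(a_1,a_2)}\}$ (which by the $V$-constraint only output winning pairs), projective single measurements $\{M^i_a\}$, and a tracial state $\ket{v}$, with the strategy being $\delta$-perfect for $\delta = \poly(\eps)$. I then propose $\{M^i_a\}$ as a synchronous strategy for $\mcG$.

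To bound its winning probability, I would use the $a=1$ branch: for each $(i_1,i_2)$ in the support of $\pi$ and each $c \in \{1,2\}$, the winning condition forces the consistency relation
\[
\sum_{a_c \neq a'_c} \braket{v|M^{i_c}_{a_c} P^{(i_1,i_2)}_{(a'_1,a'_2)}|v} = O(\delta),
\]
averaged over $\pi$. A standard Cauchy-Schwarz argument (of the type pervasive in this literature) then lets me replace the product $M^{i_1}_{a_1} M^{i_2}_{a_2}$ with $P^{(i_1,i_2)}_{(a_1,a_2)}$ inside the winning probability expression $\sum \pi(i,j) V(a,b|i,j)\braket{v|M^i_a M^j_b|v}$, yielding $\omega_q(\mcG) \geq 1 - \poly(\eps)$ since the $P$-measurement is supported only on winning assignments. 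The main obstacle is this replacement step: because $M^{i_1}_{a_1}$, $M^{i_2}_{a_2}$, and $P^{(i_1,i_2)}_{(a_1,a_2)}$ need not mutually commute and act nontrivially only in combination with the tracial state, swapping them requires successive applications of Cauchy-Schwarz and traciality, and it is precisely this noncommutativity that forces the polynomial (rather than linear) dropoff in $\eps$.
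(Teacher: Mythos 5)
The paper does not supply a detailed proof of this lemma; it cites Definition 17.1 of \cite{natarajan2019neexp} and notes the argument follows Theorem 9.3 of \cite{ji2022mipre}, which works directly with (possibly non-synchronous) two-player strategies.

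Your forward direction is correct and is the standard construction.

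The converse has a genuine gap at its very first step. You propose to invoke \Cref{thm:synchrounding} to replace an $\eps$-perfect strategy for $\mcG^{orac}$ by a synchronous one. But that theorem requires the question distribution to be $C$-diagonally dominant for some $C>0$, i.e. $\pi^{orac}(q,q) \geq C \sum_{q'} \pi^{orac}(q,q')$ for \emph{every} question $q$. In the oracularization defined in \Cref{sec:BCS}, single questions $i \in I$ appear only paired with pair-questions $(i,j)$ and never on the diagonal: $\pi^{orac}(i,i)=0$ while $\sum_{q'}\pi^{orac}(i,q')>0$. Hence $\pi^{orac}$ is not $C$-diagonally dominant for any $C>0$, and \Cref{thm:synchrounding} simply does not apply to $\mcG^{orac}$. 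Your parenthetical remark that the $a=2$ branch "is diagonally dominant in the question distribution" is true only for the pair-questions; the theorem's hypothesis is a condition on all questions, and the single questions violate it.

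The fix is to not synchronize at all. The lemma's conclusion is about $\omega_q(\mcG)$, not $\omega_q^s(\mcG)$, so a two-sided (Alice/Bob) strategy suffices. The standard route — and the one the cited Theorem 9.3 of \cite{ji2022mipre} takes — is: from a near-perfect strategy $(\{A^q_a\},\{B^q_b\},\ket{\psi})$ for $\mcG^{orac}$, the $a=2$ test gives approximate consistency $A^{(i_1,i_2)}_{(a_1,a_2)}\ket{\psi} \approx B^{(i_1,i_2)}_{(a_1,a_2)}\ket{\psi}$; the $a=1$ test (with $b=1$ and $b=2$) gives that both Alice's and Bob's single measurements $A^{i_c}$, $B^{i_c}$ are approximately the $c$-th marginals of the corresponding pair measurements; and since the oracle player is forced to output only winning pairs, $(\{A^i_a\},\{B^j_b\},\ket{\psi})$ is a $\poly(\eps)$-perfect strategy for $\mcG$. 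The Cauchy–Schwarz replacement step you describe at the end is the right engine for the quantitative bound, but it should be run against the pair measurements of one player, not against a fictitious synchronous strategy obtained from an inapplicable rounding theorem. Alternatively, if you insist on a synchronous argument, you would have to apply \Cref{thm:synchrounding} only to the sub-game on pair-questions (where diagonal dominance holds with $C=1$), keep the original single measurements of one fixed player, and re-derive the $a=1$ consistency for that combination; this is workable but is not what your write-up does.
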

\begin{proof}
    This is asserted in Definition 17.1 of \cite{natarajan2019neexp}. Although a proof isn't
    supplied, the proof follows the same lines as Theorem 9.3 of \cite{ji2022mipre}. 
\end{proof}

Given a synchronous game $\mcG = (I, \{O_i\}, \pi, V)$ where $I \subseteq
\{0,1\}^n$ and $O_i\subseteq \{0,1\}^{m_i}$, construct a constraint system
$B$ as follows. Take $X$ to be the set of variables $x_{ij}$, where $i \in I$
and $1 \leq j \leq m_i$. Let $V_i = \{x_{ij}, 1 \leq j \leq m_i\}$, and
identify $\Z_2^{V_i}$ with bit strings $\{0,1\}^{m_i}$, where the assignment
to $x_{ij}$ corresponds to the $j$th bit, and let $C_i \subseteq \Z_2^{V_i}$
be the subset corresponding to $O_i$. Let $P = \{(i,j) \in I \times I : 
\pi(i,j) > 0\}$. For $(i,j) \in P$, let $V_{ij} = V_i \cup V_j$, and let $C_{ij}
\subset \Z_2^{V_{ij}} = \Z_2^{V_i} \times \Z_2^{V_j}$ be the set of pairs
of strings $(a,b)$ such that $a \in O_i$, $b \in O_j$, and $V(a,b|i,j) = 1$.
Then $B$ is the constraint system with variables $X$ and constraints
$\{(V_i,C_i)\}_{i \in I}$ and $\{(V_{ij},C_{ij})\}_{(i,j) \in P}$. Let
$I' = I \cup P$ and $\pi^{orac}$ be the probability distribution on 
$I' \times I'$ such that
\begin{equation*}
    \pi^{orac}(i',j') = \begin{cases} 
             \tfrac{1}{8}\pi(i,j)       & i' = (i,j), j' = i \\
             \tfrac{1}{8}\pi(i,j)       & i' = (i,j), j' = j \\
             \tfrac{1}{8}\pi(i,j)       & i' = i, j' = (i,j) \\
             \tfrac{1}{8}\pi(i,j)       & i' = j, j' = (i,j) \\
            \tfrac{1}{2} \pi(i,j) & i' = j' = (i,j)  \\
            0 & \text{ otherwise}
    \end{cases}
\end{equation*}
Then $\mcG(B,\pi^{orac}) = \mcG^{orac}$, so the oracularization of a synchronous game
is a BCS game. As a result, \Cref{thm:mipre} has the following corollary:
\begin{cor}
    There is a $\BCS$-$\MIP^*$ protocol $(\{\mcG(B_x,\pi_x)\}, S, V)$ for the
    halting problem with constant soundness $s<1$, in which $B_x$ has a
    constant number of contexts and contexts of size $\polylog(|x|)$,
    and $\pi_x$ is the uniform distribution on pairs of contexts.
\end{cor}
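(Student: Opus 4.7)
The plan is to apply Theorem~\ref{thm:mipre} to the halting problem and then run the oracularization-to-BCS construction given immediately before the corollary, with one final adjustment to the question distribution. Applying Theorem~\ref{thm:mipre} produces an $\AM^*(2)$ protocol $(\{\mcG_x\}, S, V)$ with a universal constant bound $|I_x| \le K$ on the number of questions, answer length $\polylog(|x|)$, completeness $1$, soundness $1/2$, such that each $\mcG_x$ is synchronous and admits a perfect oracularizable synchronous strategy whenever $\omega_q(\mcG_x) = 1$. Forming the constraint system $B_x$ from the construction preceding the corollary yields at most $K + K^2$ contexts $\{V_i : i \in I_x\} \cup \{V_{ij} : (i,j) \in I_x \times I_x\}$, each of size at most $2 \max_i m_{xi} = \polylog(|x|)$, so the constant-contexts and $\polylog$-context-size requirements are immediate.

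To obtain the uniform-distribution requirement, I would replace the distribution $\pi_x^{orac}$ produced by the construction with the uniform distribution $\pi_x$ on all pairs of contexts of $B_x$. Completeness of the modified protocol is immediate: a perfect oracularizable synchronous strategy $(H, \{M^i_a\}, \ket{v})$ for $\mcG_x$ extends to a perfect strategy for $B_x$ by assigning $\{M^i_a\}$ to $V_i$ and $\{M^i_a M^j_b\}$ to $V_{ij}$. Commutativity of the $M^i_a$ across questions ensures both that these joint operators are projective measurements and that the BCS consistency condition holds on \emph{every} pair of contexts, not only on pairs in the support of $\pi_x^{orac}$.

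The main step I expect to require care is checking that switching from $\pi_x^{orac}$ to the uniform distribution only costs a constant factor in soundness. On the support $W$ of $\pi_x^{orac}$ each weight is $\Theta(1/K^2)$, while under $\pi_x$ each weight is $\Theta(1/K^4)$, so the ratio $\pi_x^{orac}(i',j')/\pi_x(i',j')$ on $W$ is bounded by a $K$-dependent constant. Any strategy winning $\mcG(B_x, \pi_x)$ with probability $1 - \eps$ therefore wins $\mcG_x^{orac} = \mcG(B_x, \pi_x^{orac})$ with probability at least $1 - O(\eps)$. Lemma~\ref{lem:orac_sound} then yields $\omega_q(\mcG_x) \ge 1 - \poly(\eps)$, and choosing $\eps$ small enough to force $\omega_q(\mcG_x) > 1/2$ contradicts the soundness bound for non-halting $x$. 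Hence the modified BCS protocol has some constant soundness $s < 1$, as required.
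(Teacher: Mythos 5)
Your proposal is correct and takes essentially the same approach as the paper's proof: apply Theorem~\ref{thm:mipre}, pass to the BCS game underlying the oracularization construction given just before the corollary, and swap in the uniform question distribution. The detail you supply --- extending the perfect oracularizable strategy by assigning $\{M^i_a M^j_b\}$ to $V_{ij}$ for completeness, and bounding the ratio $\pi_x^{orac}/\pi_x$ by a constant depending only on $K$ for soundness --- is exactly what the paper summarizes as ``not hard to see.''
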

\begin{proof}
    Let $(\{\mcG_x\},S,V)$ be the protocol from \Cref{thm:mipre}. Then
    $\mcG_x^{orac}$ is a BCS game in which the underlying BCS has a constant
    number of contexts, and the contexts have size $\polylog(|x|)$. 
    The probability distribution $\pi^{orac}$ and the constraints of $\mcG^{orac}$
    can be computed in polynomial time from $S$ and $V$, so by
    \Cref{lem:orac_sound} there is a $\BCS$-$\MIP^*$ protocol for the halting
    problem with constant soundness $s' < 1$. The probability distribution
    $\pi_x$ in the oracularization construction is not uniform.  However, it is
    not hard to see that changing the distribution $\pi_x$ in the
    oracularization game does not change completeness, and since there are only
    a constant number of contexts, replacing $\pi_x$ with the uniform
    distribution yields only a constant dropoff in soundness.
\end{proof}

\section{BCS algebras and approximate representations}\label{sec:weightedalg}

It is often worth thinking about synchronous strategies more abstractly. Recall
that $\C \Z_2^{*V}$ is the $*$-algebra generated by variables $x \in V$,
satisfying the relations $x^2 = x^* x = x x^* = 1$ for all $x \in V$, and $\C
\Z_2^{V}$ is the quotient of $\C \Z_2^{*V}$ by the relations $xy = yx$ for all
$x,y \in V$. Given an assignment $\phi$ to an ordered set of variables $V$, we let 
\begin{equation*}
    \Phi_{V,\phi} := \prod_{x \in V} \tfrac{1}{2}(1 + \phi(x) x) 
\end{equation*}
considered as a polynomial in $\C \Z_2^{*V}$, where the product is taken
with respect to the order on $V$. Given a constraint $C$ on $V$, we let 
\begin{equation*}
    \mcA(V,C) = \C \Z_2^{V} / \ang{\Phi_{V,\phi}=0 \text{ for } \phi \not\in C}.
\end{equation*}
Since $\C \Z_2^{V}$ is commutative, the image of $\Phi_{V,\phi}$ in $\C \Z_2^{V}$
is independent of the order of $V$; however, we will work with $\C \Z_2^{*V}$
in \Cref{sec:stability}.  The algebra $\mcA(V,C)$ is isomorphic to the algebra 
\begin{equation*}
    \C^*\ang{ m_{\phi}, \phi \in C : m_{\phi}^* = m_{\phi} = m_{\phi}^2 \text{ for all } \phi \in C
            \text{ and } \sum_{\phi \in C} m_{\phi} = 1},
\end{equation*}
where the isomorphism identifies $m_{\phi}$ with $\Phi_{V,\phi}$. In
particular, $\C \Z_2^{V} = \mcA(V,\Z_2^{V})$ is generated by $\Phi_{V,\phi}$ for
$\phi \in \Z_2^{V}$. Consequently if $\sigma : \mcA(V,C) \to \mcB(\mcH)$ is a
$*$-representation, then $\{\sigma(\Phi_{V,\phi})\}_{\phi \in C}$ is a
projective measurement on $\mcH$, and conversely if $\{M_{\phi}\}_{\phi \in C}$
is a projective measurement on $\mcH$, then there is a $*$-representation
$\sigma : \mcA(V,C) \to \mcB(\mcH)$ with $\sigma(\Phi_{V,\phi}) = M_{\phi}$. 

If $B = (X,\{(V_i,C_i)\}_{i=1}^m)$ is a BCS, then we let $\mcA(B)$ denote the
free product $\mcA(B) := \ast_{i \in [m]} \mcA(V_i,C_i)$. We let $\sigma_i :
\mcA(V_i,C_i) \to \mcA(B)$ denote the natural inclusion of the $i$th factor, so
$\mcA(B)$ is generated by the involutions $\sigma_i(x)$ for $i \in [m]$ and $x
\in V_i$. Equivalently, $\mcA(B)$ is generated by the projections
$\sigma_i(\Phi_{V_i,\phi})$ for $i \in [m]$ and $\phi \in C_i$. To avoid
clogging up formulas with symbols, we'll often write $\Phi_{V_i,\phi}$ instead
of $\sigma_i(\Phi_{V_i,\phi})$ when it's clear what subalgebra $\mcA(V_i,C_i)$
the element belongs to.  As with $\mcA(V,C)$, representations $\alpha$ of $\mcA(B)$
are in bijective correspondence with families of projective measurements
$\{M^i_{\phi}\}_{\phi \in C_i}$, $i \in [m]$ via the relation $M^i_{\phi} =
\alpha(\Phi_{V_i,\phi})$.  If $(\{M^i_{\phi}\}, \ket{v}, \mcH)$ is a
synchronous commuting operator strategy for $\mcG(B,\pi)$, and $\alpha :
\mcA(B) \to \mcB(\mcH)$ is the representation with $\alpha(\Phi_{V_i,\phi}) =
M^i_{\phi}$, then $a \mapsto \braket{v|\alpha(a)|v}$ is a tracial state on
$\mcA(B)$.
Conversely, if $\tau$ is a tracial state on $\mcA(B)$, then the GNS
representation theorem implies that there is a synchronous commuting operator
strategy $\mcS = (\{M^i_\phi\}, \ket{v}, \mcH)$ such that $\tau(a) =
\braket{v|\alpha(a)|v}$ where $\alpha$ is the representation corresponding to
$\{M^i_{\phi}\}$. Note that the trace is faithful on the image of the GNS representation. As a result, synchronous commuting operator strategies for
$\mcG(B,\pi)$ and tracial states on $\mcA(B)$ can be used interchangeably, and
in particular $p \in C_{qc}$ if and only if there is a tracial state $\tau$
with $p(\phi,\psi|i,j) = \tau(\Phi_{V_i,\phi} \Phi_{V_j,\psi})$ for all
$i$,$j$, $\phi$, and $\psi$.
Finite-dimensional tracial states on $\mcA(B)$ can be used interchangeably with
synchronous quantum strategies for $\mcG(B,\pi)$, and $p \in C_q$ if and only
if there is a finite-dimensional tracial state $\tau$ with $p(\phi,\psi|i,j) =
\tau(\Phi_{V_i,\phi} \Phi_{V_j,\psi})$ for all $i$,$j$, $\phi$, and $\psi$.
Similarly, $p \in C_{qa}$ if and only if there is a
Connes-embbedable tracial state $\tau$ such that $p(\phi,\psi|i,j) =
\tau(\Phi_{V_i,\phi} \Phi_{V_j,\psi})$ for
all $i$,$j$, $\phi$, and $\psi$ \cite{Kim_2018}.

A correlation $p$ is perfect for a BCS game $\mcG(B,\pi)$ if
$p(\phi,\psi|i,j)=0$ whenever $\pi(i,j) > 0$ and $(\phi,\psi)$ is a losing
answer to questions $(i,j)$.  As a result, a tracial state $\tau$ on $\mcA(B)$
is \textbf{perfect} (aka. corresponds to a perfect correlation) if and only if
$\tau(\Phi_{V_i,\phi} \Phi_{V_j,\psi}) = 0$ whenever $\phi|_{V_i \cap V_j} \neq
\psi|_{V_i \cap V_j}$. Consequently a tracial state on $\mcA(B)$ is perfect 
for $\mcG(B,\pi)$ if and only if it is the pullback of a tracial state on the
\textbf{synchronous algebra} of $\mcG(B,\pi)$, which is the quotient
\begin{align*}
    \SynAlg(B,\pi) = \mcA(B) / \ang{& \Phi_{V_i,\phi} \Phi_{V_j,\psi} = 0 \text{ for all }
            i,j \in [m] \text{ with } \pi(i,j) > 0 \\ & \text{ and } \phi \in C_i,
            \psi \in C_j \text{ with } \phi|_{V_i \cap V_j} \neq \psi|_{V_i \cap V_j}}.
\end{align*}
For BCS games, this result about perfect strategies is due to Kim, Paulsen, and
Schafhauser \cite{Kim_2018}.  The general notion of a synchronous algebra is due to
\cite{helton2017algebras}. In \cite{Goldberg_2021,PS23}, it is shown that the synchronous algebra
of a BCS game is isomorphic to the so-called BCS algebra of the game. In
working with $\MIP^*$ protocols, we also need to keep track of $\eps$-perfect
strategies.  In \cite{Pad22}, it is shown that $\eps$-perfect strategies
for a BCS game correspond to $\eps$-representations of the BCS algebra,
where an $\eps$-representation is a representation of $\mcA(B)$ such that
all the defining relations of $\SynAlg(B,\pi)$ are bounded by $\eps$ in the
normalized Frobenius norm. In this prior work, the focus was on the behaviour
of $\eps$-perfect strategies for a fixed game, so the number of questions and
answers was constant. For $\MIP^*$ protocols, the game size is not constant,
and we need to work with approximate representations where the average, rather
than the maximum, of the norms of the defining relations is bounded. For this, we
introduce the following algebraic structure:

\begin{definition}
    A \textbf{(finitely-supported) weight function} on a set $X$ is a function
    $\mu : X \to [0,+\infty)$ such that $\supp(\mu) := \mu^{-1}((0,+\infty))$
    is finite. A \textbf{weighted} $*$\textbf{-algebra} is a pair $(\mcA,\mu)$
    where $\mcA$ is a $*$-algebra and $\mu$ is a weight function on $\mcA$. 

    If $\tau$ is a tracial state on $\mcA$, then the \textbf{defect of $\tau$} is
    \begin{equation*}
        \df(\tau; \mu) := \sum_{a \in \mcA} \mu(a) \|a\|^2_{\tau},
    \end{equation*}
    where $\|a\|_{\tau} := \sqrt{\tau(a^* a)}$ is the $\tau$-norm.
    When the weight function is clear, we just write $\df(\tau)$. 
\end{definition}
Since $\mu$ is finitely supported, the sum in the definition of the
defect is finite, and hence is well-defined. Note that traces $\tau$ on a
weighted algebra $(\mcA,\mu)$ with $\df(\tau) = 0$ correspond to traces on the
algebra $\mcA / \ang{\supp(\mu)}$. In general, $\df(\tau)$ is a measure of how
far $\tau$ is from being a trace on $\mcA$.  Thus we can think of a weighted
algebra $(\mcA,\mu)$ as a presentation or model for the algebra $\mcA /
\ang{\supp(\mu)}$ that allows us to talk about approximate traces on this
algebra. 

\begin{definition}
    Let $B = (X,\{(V_i,C_i)\}_{i=1}^m)$ be a BCS, and let $\pi$ be a
    probability distribution on $[m] \times [m]$. The \textbf{(weighted) BCS
    algebra} $\mcA(B,\pi)$ is the $*$-algebra $\mcA(B)$, with weight function
    $\mu_{\pi}$ defined by 
    \begin{equation*}
        \mu_{\pi}( \Phi_{V_i,\phi} \Phi_{V_j,\psi} ) = \pi(i,j)
    \end{equation*}
    for all $i,j \in [m]$ and $\phi \in C_i$, $\psi \in C_j$ with $\phi|_{V_i
    \cap V_j} \neq \psi|_{V_i \cap V_j}$, and $\mu_{\pi}(r)=0$ for all other $r
    \in \mcA(B)$. 
\end{definition}
Note that $\mcA(B) / \ang{\supp(\mu_{\pi})}$ is the synchronous algebra
$\SynAlg(B,\pi)$ defined above, so $\mcA(B,\pi)$ is a model of this synchronous
algebra, and perfect strategies for $\mcG(B,\pi)$ correspond to tracial states
$\tau$ on $\mcA(B,\pi)$ with $\df(\tau) = 0$. The following lemma is an immediate
consequence of the definitions:
\begin{lemma}\label{lem:trace-strat}
    Let $B = (X,\{(V_i,C_i)\}_{i=1}^m)$ be a BCS, and let $\pi$ be a
    probability distribution on $[m] \times [m]$. A tracial state $\tau$
    on $\mcA(B)$ is an $\eps$-perfect strategy for $\mcG(B,\pi)$ if and only
    if $\df(\tau) \leq \eps$. 
\end{lemma}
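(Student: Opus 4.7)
The plan is a direct computation identifying the defect of $\tau$ with the losing probability of the correlation it induces. Via the GNS representation, $\tau$ corresponds to a synchronous commuting operator strategy whose associated correlation is
\begin{equation*}
p(\phi,\psi|i,j) = \tau(\Phi_{V_i,\phi}\Phi_{V_j,\psi})
\end{equation*}
for $\phi\in C_i$, $\psi\in C_j$. Writing out the game's winning predicate, the losing probability is
\begin{equation*}
1-\omega(\mcG(B,\pi);p) = \sum_{i,j\in[m]}\pi(i,j)\!\!\sum_{\substack{\phi\in C_i,\,\psi\in C_j\\ \phi|_{V_i\cap V_j}\neq\psi|_{V_i\cap V_j}}}\!\!\tau(\Phi_{V_i,\phi}\Phi_{V_j,\psi}).
\end{equation*}

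The key step is to recognize that this sum is exactly $\df(\tau;\mu_\pi)$. By definition of $\mu_\pi$, the defect is
\begin{equation*}
\df(\tau) = \sum_{i,j\in[m]}\pi(i,j)\!\!\sum_{\substack{\phi\in C_i,\,\psi\in C_j\\ \phi|_{V_i\cap V_j}\neq\psi|_{V_i\cap V_j}}}\!\!\|\Phi_{V_i,\phi}\Phi_{V_j,\psi}\|_\tau^2,
\end{equation*}
so it suffices to show $\|\Phi_{V_i,\phi}\Phi_{V_j,\psi}\|_\tau^2 = \tau(\Phi_{V_i,\phi}\Phi_{V_j,\psi})$. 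This follows because each $\Phi_{V_i,\phi}$ is a self-adjoint projection in $\mcA(B)$ (the image of a measurement element under the free-product inclusion), and $\tau$ is tracial:
\begin{equation*}
\|\Phi_{V_i,\phi}\Phi_{V_j,\psi}\|_\tau^2 = \tau\bigl(\Phi_{V_j,\psi}\Phi_{V_i,\phi}\Phi_{V_j,\psi}\bigr) = \tau\bigl(\Phi_{V_j,\psi}^2\Phi_{V_i,\phi}\bigr) = \tau(\Phi_{V_i,\phi}\Phi_{V_j,\psi}).
\end{equation*}

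Combining the two displays gives $\df(\tau) = 1-\omega(\mcG(B,\pi);p)$, and the equivalence $\omega\geq 1-\eps\iff\df(\tau)\leq\eps$ follows immediately, proving both directions of the lemma. There is no real obstacle: the only substantive ingredient is the interplay of cyclicity of $\tau$ with the projection identity $\Phi_{V,\phi}^2=\Phi_{V,\phi}$, and the bookkeeping that $\mu_\pi$ was defined precisely to pick out the losing question–answer tuples with weight $\pi(i,j)$.
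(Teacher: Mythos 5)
Your proof is correct and follows essentially the same route as the paper's: identify the defect with the losing probability of the correlation induced by $\tau$. The one thing you make explicit that the paper leaves implicit is the computation $\|\Phi_{V_i,\phi}\Phi_{V_j,\psi}\|_\tau^2 = \tau(\Phi_{V_i,\phi}\Phi_{V_j,\psi})$ via traciality and the projection identity, which is a worthwhile clarification.
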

\begin{proof}
    Let $p$ be the correlation corresponding to $\tau$, so $p(\phi,\psi|i,j)
    = \tau(\Phi_{V_i,\phi} \Phi_{V_j,\psi})$. Then 
    \begin{equation*}    
        \df(\tau) = \sum \pi(i,j) \tau(\Phi_{V_i,\phi} \Phi_{V_j,\psi}),
    \end{equation*}
    where the sum is across $i,j \in [m]$ and $\phi \in C_i$, $\psi \in C_j$ with
    $\phi|_{V_i \cap V_j} \neq \psi|_{V_i \cap V_j}$. So $\df(\tau) = 1 - \omega(\mcG(B,\pi);p)$. 
\end{proof}

\section{Homomorphisms between BCS algebras}\label{sec:weightedhom}

In addition to looking at BCS games, we also want to consider transformations
between constraint systems and the corresponding games. To keep track of how
near-perfect strategies change, we introduce a notion of homomorphism for
weighted algebras.
\begin{definition}
    Let $(\mcA,\mu)$ and $(\mcB,\nu)$ be weighted $*$-algebras, and let $C > 0$. A
    $C$\textbf{-homomorphism} $\alpha:(\mcA,\mu)\to(\mcB,\nu)$ is a
    $*$-homomorphism $\alpha:\mcA\to\mcB$ such that 
    \begin{equation*}
		\alpha(\sum_{a\in\mcA}\mu(a)a^*a) \lesssim C\sum_{b\in\mcB}\nu(b)b^*b.
	\end{equation*}
\end{definition}

The point of this definition is the following:
\begin{lemma}\label{lem:Chom}
    Suppose $\alpha : (\mcA,\mu) \to (\mcB,\nu)$ is a $C$-homomorphism. If $\tau$ is a 
    trace on $(\mcB,\nu)$, then $\df(\tau \circ \alpha) \leq C \df(\tau)$. 
\end{lemma}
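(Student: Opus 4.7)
The plan is to expand $\df(\tau\circ\alpha)$ until the $C$-homomorphism inequality can be applied directly, and then to transport that inequality through $\tau$ using the two defining properties of a tracial state. First, using linearity of $\tau\circ\alpha$ and the finite support of $\mu$, I would rewrite
\begin{equation*}
\df(\tau\circ\alpha) = \sum_{a\in\mcA}\mu(a)\,\tau\bigl(\alpha(a)^*\alpha(a)\bigr) = \tau\Bigl(\alpha\Bigl(\sum_{a\in\mcA}\mu(a)\,a^*a\Bigr)\Bigr),
\end{equation*}
and similarly $\df(\tau) = \tau\bigl(\sum_{b\in\mcB}\nu(b)\,b^*b\bigr)$. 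With these rewrites, the desired bound $\df(\tau\circ\alpha)\leq C\,\df(\tau)$ becomes $\tau(\Delta)\geq 0$ for
\begin{equation*}
\Delta := C\sum_{b\in\mcB}\nu(b)\,b^*b \;-\; \alpha\Bigl(\sum_{a\in\mcA}\mu(a)\,a^*a\Bigr).
\end{equation*}

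Next, I would invoke the $C$-homomorphism hypothesis: by definition, $\Delta$ is cyclically equivalent to a sum of hermitian squares in $\mcB$, say $\Delta = \sum_i d_i^* d_i + \sum_j [e_j, f_j]$ for some finite families of elements of $\mcB$. Now two standard properties of a tracial state finish everything off: $\tau$ vanishes on commutators, so the $[e_j,f_j]$ terms contribute nothing and $\tau$ is invariant under cyclic equivalence; and $\tau$ is positive, so each $\tau(d_i^* d_i)\geq 0$. Hence $\tau(\Delta)\geq 0$, which is the claim.

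I do not expect any real obstacle — the lemma is essentially the observation that cyclic sums of hermitian squares are exactly the correct cone on which to pair tracial states, and the proof is just unpacking this fact. The only things to watch are keeping the sign of $\Delta$ consistent with the direction of $\lesssim$ in the definition of $C$-homomorphism, and confirming that the finite support of $\mu$ really does let one commute the summation past $\tau$ and $\alpha$ in the first step.
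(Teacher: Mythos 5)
Your proof is correct and follows essentially the same approach as the paper: rewrite both defects as values of $\tau$, invoke the definition of $\lesssim$ to get a sum-of-squares-plus-commutators decomposition, and use positivity and the tracial property of $\tau$ to conclude. (One small note: the paper's displayed expansion has a typo, writing $\|a\|_{\tau\circ\alpha}$ where $\|a\|_{\tau\circ\alpha}^2$ is meant, as in the definition of the defect; your version has the square handled correctly.)
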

\begin{proof}
    Let $A = \alpha(\sum_{a\in\mcA}\mu(a)a^*a)$ and $B = \sum_{b\in\mcB}\nu(b)b^*b$. 
    Note that
    \begin{equation*}
        \df(\tau \circ \alpha) = \sum_{a \in \mcA} \mu(a) \|a\|_{\tau \circ \alpha}
            = \sum_{a \in \mcA} \mu(a) \tau(\alpha(a^* a)) = \tau(A),
    \end{equation*}
    By the definition of $\lesssim$, there are $c_1,\ldots,c_k$ and $f_1,\ldots,f_{\ell},g_{1},\ldots,g_{\ell} \in \mcB$ such that
    \begin{equation*}
        CB - A = \sum_{i=1}^k c_i^* c_i + \sum_{j=1}^{\ell} [f_j,g_j].
    \end{equation*}
    Since $\tau$ is a tracial state, $\tau(c_i^* c_i) \geq 0$ and
    $\tau([f_j,g_j])=0$ for all $i$ and $j$. Hence $C\tau(B) \geq \tau(A)$ as required.
\end{proof}

One of the first things we can apply this idea to is changing between different
presentations of the BCS algebra. For instance:
\begin{prop}\label{prop:inter}
    Suppose $B = (X, \{(V_i,C_i)\}_{i=1}^m)$ is a BCS, and $\pi$ is a
    probability distribution on $[m] \times [m]$. Let $\mu_{inter}$ be the
    weight function on $\mcA(B)$ defined by 
    \begin{equation*}
        \mu_{inter}(\sigma_i(x) - \sigma_j(x)) = \pi(i,j)
    \end{equation*}
    for all $i \neq j \in [m]$ and $x \in V_i \cap V_j$, and $\mu_{inter}(r) = 0$ for 
    other $r \in \mcA(B)$. Then the identity map $\mcA(B) \to \mcA(B)$ gives a
    $O(1)$-homomorphism $(\mcA(B),\mu_{\pi}) \to (\mcA(B),\mu_{inter})$, and a
    $O(L)$-homomorphism $(\mcA(B),\mu_{inter}) \to (\mcA(B),\mu_{\pi})$, where
    $L = \max_{i,j} |V_i \cap V_j|$.
\end{prop}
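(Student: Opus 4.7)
The plan is to reduce both inequalities to per-variable estimates via two algebraic observations about the generators of $\mcA(B)$. Fix a pair $(i,j) \in [m]^2$, set $W = V_i \cap V_j$, and introduce the elements
\[
\Delta_{i,j} := \sum_{\phi|_W \neq \psi|_W}\Phi_{V_i,\phi}\Phi_{V_j,\psi}
\quad \text{and} \quad
\Delta_{i,j}^x := \sum_{\phi(x) \neq \psi(x)}\Phi_{V_i,\phi}\Phi_{V_j,\psi} \quad (x \in W),
\]
where $\phi \in C_i$ and $\psi \in C_j$ in both sums. The first observation is that each $P := \Phi_{V_i,\phi}\Phi_{V_j,\psi}$ is positive in the $\lesssim$-ordering: writing $Q = \Phi_{V_j,\psi}$, the projection identities $\Phi_{V_i,\phi}^2 = \Phi_{V_i,\phi}$ and $Q^2 = Q$ give $P^*P = QPQ$, and $P - P^*P = PQ - QP = [P, Q]$ is a commutator, so $P$ is cyclically equivalent to the square $P^*P$. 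The second is the computation (using $\sigma_i(x) = \sum_\phi \phi(x)\Phi_{V_i,\phi}$ and $\sum_{\phi,\psi}\Phi_{V_i,\phi}\Phi_{V_j,\psi} = 1$) that $\Delta_{i,j}^x = \tfrac{1}{2}(1 - \sigma_i(x)\sigma_j(x))$, combined with the involution identity $(\sigma_i(x) - \sigma_j(x))^2 = 2(1 - \sigma_i(x)\sigma_j(x)) + [\sigma_i(x), \sigma_j(x)]$; together these say that $(\sigma_i(x) - \sigma_j(x))^2$ and $4\Delta_{i,j}^x$ are cyclically equivalent.

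For the $O(1)$-homomorphism $(\mcA(B), \mu_\pi) \to (\mcA(B), \mu_{inter})$ I apply an algebraic union bound. Since every bad pair $(\phi, \psi)$ disagrees on at least one $x \in W$,
\[
\sum_{x \in W}\Delta_{i,j}^x - \Delta_{i,j} = \sum_{\phi|_W \neq \psi|_W}\bigl(|\{x \in W : \phi(x) \neq \psi(x)\}| - 1\bigr)\Phi_{V_i,\phi}\Phi_{V_j,\psi}
\]
is a non-negative linear combination of the elements $\Phi_{V_i,\phi}\Phi_{V_j,\psi}$, each of which is $\gtrsim 0$ by the first observation. Hence $\Delta_{i,j} \lesssim \sum_{x \in W}\Delta_{i,j}^x \lesssim \tfrac{1}{4}\sum_{x \in W}(\sigma_i(x) - \sigma_j(x))^2$. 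The $\mu_\pi$-defect is built from terms $P^*P$, which are cyclically equivalent to $P$; multiplying by $\pi(i,j)$ and summing over $(i,j)$ then gives the desired sum-of-squares inequality with constant $1/4$.

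For the reverse $O(L)$-homomorphism I run the argument backwards: $\Delta_{i,j} - \Delta_{i,j}^x$ is again a non-negative combination of elements $\Phi_{V_i,\phi}\Phi_{V_j,\psi} \gtrsim 0$, so $\Delta_{i,j}^x \lesssim \Delta_{i,j}$ and hence $(\sigma_i(x) - \sigma_j(x))^2 \lesssim 4\Delta_{i,j}$ for each $x \in W$; summing over the at most $L$ variables in $W$ produces the loss factor of $L$. The main obstacle is the first observation: without the algebraic positivity of each $\Phi_{V_i,\phi}\Phi_{V_j,\psi}$, the union-bound step is valid only under traces, and a more direct Fourier-type expansion of $\Delta_{i,j}$ in terms of products $\prod_{x \in S}\sigma_i(x)\sigma_j(x)$ for $S \subseteq W$ followed by telescoping each $|S|$-fold product gives only an $O(|W|)$, not $O(1)$, bound for the forward direction.
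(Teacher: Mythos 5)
Your proof is correct and follows essentially the same route as the paper's: both rely on the cyclic positivity of each $\Phi_{V_i,\phi}\Phi_{V_j,\psi}$, the union-bound decomposition of the bad-pair sum into per-variable sums, and the identity tying $\sum_{\phi(x)\neq\psi(x)}\Phi_{V_i,\phi}\Phi_{V_j,\psi}$ to $(\sigma_i(x)-\sigma_j(x))^2$ up to a commutator. Your write-up is only a cleaner packaging of the same steps (introducing $\Delta_{i,j}$, $\Delta_{i,j}^x$ and writing $\Delta_{i,j}^x = \tfrac{1}{2}(1-\sigma_i(x)\sigma_j(x))$ directly rather than expanding via $\Phi_{V_i',\phi}$), and the closing remark correctly identifies the role of the algebraic positivity in getting the $O(1)$ rather than $O(L)$ constant in the forward direction.
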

Recall that $\sigma_i : \mcA(V_i,C_i) \to \mcA(B)$ is the natural inclusion of the $i$th factor.  
\begin{proof}
    Fix $1 \leq i,j \leq m$. Since $\Phi_{V_i,\phi}$ is a projection in $\mcA(V_i,C_i)$,
    $(\Phi_{V_i,\phi} \Phi_{V_j,\psi})^* (\Phi_{V_i,\phi} \Phi_{V_j,\psi})$ is cyclically
    equivalent to $\Phi_{V_i,\phi} \Phi_{V_j,\psi}$ for all $\phi \in C_i$, $\psi \in C_j$.
    For $x \in V_i \cap V_j$, let $R_x$ be the pairs $(\phi,\psi) \in C_i \times C_j$ such
    that $\phi(x) \neq \psi(x)$. Then 
    \begin{equation*}
        \sum_{\phi|_{V_i \cap V_j} \neq \psi|_{V_i \cap V_j}} \Phi_{V_i,\phi} \Phi_{V_j,\psi}
        \lesssim \sum_{x \in V_i \cap V_j} \sum_{(\phi,\psi) \in R_x} \Phi_{V_i,\phi} \Phi_{V_j,\psi},
    \end{equation*}
    and since $\phi|_{V_i \cap V_j}$ and $\psi|_{V_i \cap V_j}$ can disagree in at most $|V_i \cap V_j|$
    places, 
    \begin{equation*}
        \sum_{x \in V_i \cap V_j} \sum_{(\phi,\psi) \in R_x} \Phi_{V_i,\phi} \Phi_{V_j,\psi} \lesssim
        |V_i \cap V_j|
        \sum_{\phi|_{V_i \cap V_j} \neq \psi_{V_i \cap V_j}} \Phi_{V_i,\phi} \Phi_{V_j,\psi}.
    \end{equation*}
    Fix $x \in V_i \cap V_j$, and let $V_i' = V_i \setminus \{x\}$, $V_j' = V_j \setminus \{x\}$. 
    \begin{align*}
        \sum_{(\phi,\psi) \in R_x} \Phi_{V_i,\phi} \Phi_{V_j,\psi} & = 
        \sum_{\phi \in \Z_2^{V_i'}, \psi \in \Z_2^{V_j'}} \Phi_{V_i',\phi} 
            \tfrac{1}{4}\left[(1+\sigma_i(x)) (1-\sigma_j(x))
            + (1-\sigma_i(x)) (1+\sigma_j(x))\right] \Phi_{V_j',\psi} \\
            & = (1+\sigma_i(x)) (1-\sigma_j(x))
            + (1-\sigma_i(x)) (1+\sigma_j(x)),
    \end{align*}
    where the last equality holds because $\sum_{\phi \in \Z_2^{V_i'}} \Phi_{V_i',\phi}$ and
    $\sum_{\psi \in \Z_2^{V_j'}} \Phi_{V_i',\psi}$ are both equal to $1$.

    Finally $(\sigma_i(x) - \sigma_j(x))^* (\sigma_i(x) -\sigma_j(x))$ is
    cyclically equivalent to 
    \begin{equation*}
        2 - 2 \sigma_i(x) \sigma_j(x) = 
            (1+\sigma_i(x)) (1-\sigma_j(x))
            + (1-\sigma_i(x)) (1+\sigma_j(x)),
    \end{equation*}
    so the result follows.
\end{proof}
\begin{defn}\label{def:inter}
    If $B = (X, \{(V_i,C_i)\}_{i=1}^m)$ is a BCS and $\pi$ is a
    probability distribution on $[m] \times [m]$, define $\mcA_{inter}(B,\pi)$
    to be the weighted algebra $(\mcA(B), \mu_{inter})$, where $\mu_{inter}$
    is defined from $\pi$ as in \Cref{prop:inter}.
\end{defn}
It is not hard to see that $\mcA(B) / \ang{\supp(\mu_{inter})} \iso \mcA(B) /
\ang{\supp(\mu_{\pi})}$, so both $\mcA(B,\pi)$ and $\mcA_{inter}(B,\pi)$ are
weighted algebra models of $\SynAlg(B,\pi)$. 

We can also easily handle transformations of constraint systems which apply a
homomorphism to each context. Note that a homomorphism $\sigma : \mcA(V,C) \to
\mcA(W,D)$ between finite abelian $C^*$-algebras is equivalent to a function $f
: D \to C$.  Indeed, given a function $f : D \to C$, we can define a
homomorphism $\sigma$ by $\sigma(\Phi_{V,\phi}) = \sum_{W,\psi \in
f^{-1}(\phi)} \Phi_{W,\psi}$, and it is not hard to see that all homomorphisms
have this form. We extend this notion to BCS algebras in the following way.
\begin{definition}\label{def:classhom}
    Let $B = (X, \{(V_i,C_i)\}_{i=1}^m)$ and $B' = (X',\{(W_i,D_i)\}_{i=1}^m)$ be constraint
    systems. A homomorphism $\sigma : \mcA(B) \to \mcA(B')$ is a \textbf{classical homomorphism}
    if 
	\begin{enumerate}
		\item $\sigma(\mcA(V_i,C_i))\subseteq \mcA(W_i,D_i)$ for all $1\leq i\leq m$, and
		\item if $\sigma(\Phi_{V_i,\phi_i}) = \sum_k\Phi_{W_i,\psi_{ik}}$,  $\sigma(\Phi_{V_j,\phi_j}) = \sum_k\Phi_{W_j,\psi_{jl}}$, and $\phi_i|_{V_i\cap V_j}\neq \phi_j|_{V_i\cap V_j}$ then $\psi_{ik}|_{W_i\cap W_j}\neq\psi_{jl}|_{W_i\cap W_j}$ for all $k,l$.
	\end{enumerate}
\end{definition}
To explain this definition, note that condition (1) implies that $\sigma$ restricts to 
a homomorphism $\mcA(V_i,C_i) \to \mcA(W_i,D_i)$, and hence gives a collection of functions
$f_i : D_i \to C_i$ for all $1 \leq i \leq m$. Condition (2) states that if $f_i(\phi)|_{V_i \cap V_j}
\neq f_j(\psi)|_{V_i \cap V_j}$ for some $\phi \in D_i$, $\psi \in D_j$, then $\phi|_{W_i \cap W_j}
\neq \psi|_{W_i \cap W_j}$. Conversely, any collection of functions $f_i : D_i \to C_i$ 
satisfying this condition can be turned into a classical homomorphism $\sigma : \mcA(B) \to \mcA(B')$. 

\begin{lemma}\label{lem:classhom}
    Let $B = \left(X,\{(V_i,C_i)\}_{i=1}^m\right)$ and $B' =
    \left(Y,\{(W_i,D_i)\}_{i=1}^m\right)$ be constraint systems, and let $\pi$
    be a probability distribution on $[m] \times [m]$. If $\sigma : \mcA(B) \to
    \mcA(B')$ is a classical homomorphism, then $\sigma$ is a $1$-homomorphism
    $\mcA(B,\pi) \to \mcA(B',\pi)$. 
\end{lemma}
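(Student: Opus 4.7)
The plan is to reduce the $1$-homomorphism condition to a purely combinatorial statement about the supports of the two weight functions, and then apply structural condition (2) of \Cref{def:classhom}. Both $\mu_\pi$-weighted sums of squares appearing in the definition of $1$-homomorphism admit a useful simplification: each element in the support of $\mu_\pi$ has the form $a = \Phi_{V_i,\phi}\Phi_{V_j,\psi}$ (a product of two projections), and consequently $a^* a = \Phi_{V_j,\psi}\Phi_{V_i,\phi}\Phi_{V_j,\psi}$ is cyclically equivalent to $a$ itself (the difference is a single commutator, by idempotency of $\Phi_{V_j,\psi}$). So, up to cyclic equivalence, both sides of
\begin{equation*}
\sigma\Bigl(\sum_a \mu_\pi(a)\, a^* a\Bigr) \lesssim \sum_b \mu_\pi(b)\, b^* b
\end{equation*}
may be replaced by weighted sums of the underlying products $\Phi\Phi$ themselves, and it suffices to prove the resulting inequality between positive sums of products in $\mcA(B')$.

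The main step is to reindex the image under $\sigma$. By condition (1) of \Cref{def:classhom}, $\sigma$ restricts on each factor to a homomorphism $\mcA(V_i,C_i)\to\mcA(W_i,D_i)$ of finite abelian $C^*$-algebras, and hence is determined by a function $f_i\colon D_i\to C_i$ with $\sigma(\Phi_{V_i,\phi})=\sum_{\phi'\in f_i^{-1}(\phi)}\Phi_{W_i,\phi'}$. Expanding $\sigma(\Phi_{V_i,\phi}\Phi_{V_j,\psi})$ and swapping the order of summation shows that the image of the reduced left-hand side is
\begin{equation*}
\sum_{i,j}\pi(i,j)\sum_{\substack{\phi'\in D_i,\,\psi'\in D_j\\ f_i(\phi')|_{V_i\cap V_j}\neq f_j(\psi')|_{V_i\cap V_j}}}\Phi_{W_i,\phi'}\Phi_{W_j,\psi'},
\end{equation*}
while the reduced right-hand side is the analogous sum indexed instead by pairs $(\phi',\psi')$ with $\phi'|_{W_i\cap W_j}\neq \psi'|_{W_i\cap W_j}$. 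Condition (2) of \Cref{def:classhom} is exactly the assertion that the first indexing set is contained in the second.

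The final step is to observe that the difference between the right-hand side and the image of the left is a nonnegative $\pi(i,j)$-weighted sum of extra terms of the form $\Phi_{W_i,\phi'}\Phi_{W_j,\psi'}$, and each such term is cyclically equivalent to its own hermitian square $(\Phi_{W_i,\phi'}\Phi_{W_j,\psi'})^*(\Phi_{W_i,\phi'}\Phi_{W_j,\psi'})$ by the same projection identity used at the start. Hence the difference is cyclically equivalent to a sum of hermitian squares, giving the required $\lesssim$ relation. I do not anticipate a serious obstacle: every step is algebraic bookkeeping, and condition (2) of classical homomorphism is engineered precisely to produce the inclusion of indexing sets that this argument requires.
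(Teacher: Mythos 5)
Your proof is correct and takes essentially the same approach as the paper: reindex $\sigma(\Phi_{V_i,\phi}\Phi_{V_j,\psi})$ via the functions $f_i$, use condition (2) to show the image index set sits inside $T_{ij}$, and conclude. You are slightly more explicit than the paper in spelling out the cyclic equivalence between $a^*a$ and $a$ for products of two projections (which the paper carries over implicitly from the observation in Proposition~\ref{prop:inter}) and in noting that the leftover terms are themselves $\gtrsim 0$, which is a welcome clarification of the paper's terse ``$\leq$''.
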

\begin{proof}
    Suppose $\sigma$ arises from a family of functions $f_i : D_i \to C_i$ as above. 
    For any $1 \leq i,j \leq m$, let $R_{ij} = \{(\phi,\psi) \in C_i \times C_j
    : \phi|_{V_i \cap V_j} \neq \psi|_{V_i \cap V_j}\}$, and let $T_{ij} =
    \{(\phi,\psi) \in D_i \times D_j : \phi|_{W_i \cap W_j} \neq \psi|_{W_i
    \cap W_j}$.  Then
	\begin{align*}
		\sigma\left(\sum_{i,j} \sum_{(\phi,\psi) \in R_{ij}} \pi(i,j) \Phi_{V_i,\phi}\Phi_{V_j,\psi}\right) & = 
            \sum_{i,j} \sum_{\phi' \in f_i^{-1}(\phi), \psi' \in f_i^{-1}(\psi)} \pi(i,j) \Phi_{W_i,\phi'} \Phi_{W_j,\psi'} \\
        & \leq \sum_{i,j} \sum_{(\phi,\psi) \in T_{ij}} \pi(i,j) \Phi_{W_i,\phi} \Phi_{W_j,\psi}.
    \end{align*}
\end{proof}

One situation where we get a classical homomorphism is the following:
\begin{corollary}\label{cor:BCSto3SATsyst}
    Let $B = \left(X,\{(V_i,C_i)\}_{i=1}^m\right)$ be a BCS, and let $B' =
    \left(X',\{(W_i,D_i)\}_{i=1}^m\right)$ be a BCS with $X \subset X'$, $V_i
    \subseteq W_i$ for all $1 \leq i \leq m$, and $W_i \cap W_j = V_i \cap V_j$ for
    all $1 \leq i,j \leq m$. Suppose that for all $i \in [m]$, $\phi \in V_i$ if
    and only if there exists $\psi \in W_i$ with $\psi|_{V_i} = \phi$.  Then
    for any probability distribution $\pi$ on $[m] \times [m]$, the
    homomorphism
    \begin{equation*}
		\sigma:\mcA(B) \to \mcA(B') : \sigma_i(x) \mapsto \sigma_i(x) \text{ for } i \in [m], x \in V_i
	\end{equation*}
    defined by the inclusions $V_i \subseteq W_i$ is a $1$-homomorphism
    $\mcA(B,\pi) \to \mcA(B',\pi)$, and there is another $1$-homomorphism
    $\sigma' : \mcA(B',\pi) \to \mcA(B,\pi)$. Furthermore, $B'$ has the same
    connectivity as $B$. 
\end{corollary}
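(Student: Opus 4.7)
The plan is to realize both $\sigma$ and $\sigma'$ as classical homomorphisms in the sense of \Cref{def:classhom}, then invoke \Cref{lem:classhom} to conclude each is a $1$-homomorphism $\mcA(B,\pi) \to \mcA(B',\pi)$ (resp.\ $\mcA(B',\pi) \to \mcA(B,\pi)$). Recall from the discussion following \Cref{def:classhom} that a classical homomorphism is encoded by a family of functions between the constraint sets of each context, subject to the overlap-compatibility condition (2). The whole argument consists of writing down the right functions and checking that condition (2) reduces to the identity $W_i \cap W_j = V_i \cap V_j$.

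For $\sigma$, I would take $f_i : D_i \to C_i$ to be the restriction $\psi \mapsto \psi|_{V_i}$, which lands in $C_i$ by the stated hypothesis that every element of $C_i$ arises as such a restriction. Unwinding the formula for the induced homomorphism shows that it agrees with the one given by the variable inclusions $V_i \subseteq W_i$, so condition (1) of \Cref{def:classhom} is automatic. For condition (2), if $\phi_i \in C_i$ and $\phi_j \in C_j$ disagree on $V_i \cap V_j$, then any $\psi_{ik} \in f_i^{-1}(\phi_i)$ satisfies $\psi_{ik}|_{W_i \cap W_j} = \psi_{ik}|_{V_i \cap V_j} = \phi_i|_{V_i \cap V_j}$ (using $W_i \cap W_j = V_i \cap V_j$), and similarly for $\psi_{jl}$, so the preimages automatically disagree on $W_i \cap W_j$.

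For $\sigma'$, I would choose, for each $\phi \in C_i$, some $g_i(\phi) \in D_i$ with $g_i(\phi)|_{V_i} = \phi$; this is the point at which the hypothesis is used substantively, as it guarantees that such a section exists. The corresponding classical homomorphism has $\sigma'(\Phi_{W_i,\psi}) = \sum_{\phi \in g_i^{-1}(\psi)} \Phi_{V_i,\phi}$. Condition (2) is verified by the same trick in reverse: if $\psi_i$ and $\psi_j$ disagree on $W_i \cap W_j$, then any $\phi_{ik} \in g_i^{-1}(\psi_i)$ has $\phi_{ik}|_{V_i \cap V_j} = g_i(\phi_{ik})|_{V_i \cap V_j} = \psi_i|_{V_i \cap V_j}$, and analogously for $\phi_{jl}$, so they must disagree on $V_i \cap V_j$. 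Applying \Cref{lem:classhom} to both $\sigma$ and $\sigma'$ then yields the $1$-homomorphism claims.

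For the connectivity statement, the key observation is that each $x \in W_i \setminus V_i$ lies in only the single context $W_i$: if $x \in W_j$ for some $j \neq i$, then $x \in W_i \cap W_j = V_i \cap V_j \subseteq V_i$, contradicting $x \notin V_i$. Combined with $W_i \cap W_j = V_i \cap V_j$, which forces each $x \in V_i$ to have the same set of containing contexts in $B'$ as in $B$, the new variables contribute only through the one context they inhabit, so the connectivity of $B'$ matches that of $B$. The main obstacle here is purely bookkeeping --- selecting the functions $f_i, g_i$ so that the overlap condition reads off cleanly --- as \Cref{lem:classhom} supplies the entire sum-of-squares content automatically.
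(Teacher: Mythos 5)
Your construction of both $\sigma$ and $\sigma'$ matches the paper's proof exactly: the paper realizes $\sigma$ via the restriction maps $D_i \to C_i$, $\psi \mapsto \psi|_{V_i}$, and $\sigma'$ via a choice of sections $C_i \to D_i$, then checks condition (2) of \Cref{def:classhom} using $W_i \cap W_j = V_i \cap V_j$ and invokes \Cref{lem:classhom}. Your verification of the overlap condition for both directions is correct and is the same calculation.

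The connectivity paragraph, however, has a gap. The paper leaves the connectivity claim unjustified, and your attempt to supply a justification does not actually close it. You correctly observe that each new variable $x \in W_i \setminus V_i$ appears in no context other than $W_i$, and that each old variable $x \in V_i$ has the same set of containing contexts in $B'$ as in $B$. But the definition of connectivity is $\max_i |\{(x,j) \in V_i \times [m] : x \in V_j\}|$, and this set includes the diagonal pairs $(x,i)$. For $B'$ the count for index $i$ is therefore
\begin{equation*}
    \sum_{x \in W_i} |\{j : x \in W_j\}| \;=\; \sum_{x \in V_i} |\{j : x \in V_j\}| \;+\; |W_i \setminus V_i|,
\end{equation*}
so the new variables \emph{do} contribute, each exactly once, and the connectivity of $B'$ exceeds that of $B$ by $|W_i\setminus V_i|$ at each index. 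Your inference from ``contributes only through one context'' to ``does not change the connectivity'' conflates ``appears in only one context'' with ``appears zero times.'' As a practical matter this does not damage the downstream use in \Cref{rmk:to3SAT}, where what is actually needed is that the connectivity remain polynomial in $|x|$ (it grows by at most $\max_i |W_i|$), but as an argument for the literal claim ``same connectivity'' it is incorrect, and you should flag that the statement as written only holds when $W_i = V_i$ for the maximizing index.

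One small additional note: for $\sigma$ you say $f_i$ lands in $C_i$ ``because every element of $C_i$ arises as such a restriction,'' but that is the surjectivity direction of the hypothesis, used for $\sigma'$. Well-definedness of $f_i$ needs the converse direction, namely that any $\psi \in D_i$ has $\psi|_{V_i} \in C_i$. Both directions are in the hypothesis (modulo the paper's typo $\phi \in V_i$ for $\phi \in C_i$ and $\psi \in W_j$ for $\psi \in D_i$), so the proof is fine, but the justification cited is the wrong half of the biconditional.
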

\begin{proof}
    The homomorphism $\sigma$ is the classical homomorphism defined by the 
    functions $D_i \to C_i : \psi \mapsto \psi|_{V_i}$. 

    For the homomorphism $\sigma'$, define $f_i : V_i \to W_i$ by choosing an
    element $f_i(\phi) \in W_i$ such that $f_i(\phi)|_{V_i} = \phi$ for all
    $\phi \in V_i$. Since $W_i \cap W_j = V_i \cap V_j$, if $f_i(\phi)|_{W_i
    \cap W_j} \neq f_j(\psi)|_{W_i \cap W_j}$, then $\phi|_{V_i \cap V_j} \neq
    \psi|_{V_i \cap V_j}$, so this collection of functions defines a 
    classical homomorphism $\mcA(B') \to \mcA(B)$. 
\end{proof}
In other words, \Cref{cor:BCSto3SATsyst} implies that any tracial state $\tau$
on $\mcA(B')$ (resp. $\mcA(B)$) with $\df(\tau) \leq \eps$ pulls back to a
tracial state on $\mcA(B)$ (resp. $\mcA(B')$) with defect also bounded by
$\eps$.

\begin{rmk}\label{rmk:to3SAT}
    Let $(\{\mcG(B_x,\pi_x)\}, S, C)$ be a $\BCS$-$\MIP^*$ protocol for a
    language $\mcL$ with soundness $s$, where $B_x =
    (X_x,\{(V^x_i,C_i^x)\}_{i=1}^{m_x})$. Since $|V^x_i|$ is polynomial in
    $|x|$, and $C$ runs in polynomial time, the Cook-Levin theorem implies that
    we can find sets $W^x_i$ and constraints $D_i^x$ on $W^x_i$ as in
    \Cref{cor:BCSto3SATsyst} in which $|W^x_i|$ is polynomial in $|x|$, and
    $D_i^x$ is a 3SAT instance with number of clauses polynomial in $|x|$. By
    \Cref{lem:Chom}, we get a $\BCS$-$\MIP^*$ protocol
    $(\{\mcG(B_x',\pi_x)\}, S, \wtd{C})$ for $\mcL$ with the same soundness,
    such that $B_x' = (X_x', \{(W_i^x,D_i^x)\})$ is a constraint system
    where all the clauses $D_i^x$ are 3SAT instances, and the connectivity
    of $B_x'$ is the same as $B_x$. 
\end{rmk}

\section{BCS algebras, subdivision and stability}\label{sec:stability}

Suppose we have a BCS where each constraint is made up of subconstraints on
subsets of the variables (for instance, a 3SAT instance made up of 3SAT
clauses). In this section, we look at what happens when we split up the
contexts and constraints so that each subconstraint is in its own contex.  In
the weighted BCS algebra, splitting up a context changes the commutative
subalgebra corresponding to the context to a non-commutative subalgebra.  To
deal with this, we use a tool from the approximate representation theory of
groups, namely the stability of $\Z_2^k$.
\begin{lemma}[\cite{chapman2023efficiently}] \label{lem:z2stab}
    Let $(\mcM,\tau)$ be a tracial von Neumann algebra, and suppose $f : [k]
    \to \mcM$ is a function such that $f(i)^2 = 1$ for all $i \in [k]$ and
    $\|[f(i),f(j)]\|_{\tau}^2 \leq \eps$ for all $i,j \in [k]$, where $k \geq 1$ and
    $\eps \geq 0$. Then there is a homomorphism $\psi : \Z_2^k \to
    \mcU(\mcM)$ such that $\|\psi(x_i) - f(i)\|_{\tau}^2 \leq \poly(k) \eps$
    for all $i \in [k]$, where the $x_i$ generate $\Z_2^k$.
\end{lemma}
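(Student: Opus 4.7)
My plan is to construct $\psi$ by Fourier analysis on $\Z_2^k$: build approximate joint spectral projections for the family $\{f(i)\}$, verify they almost satisfy the axioms of a projection-valued measure with polynomial-in-$k$ loss, and round them to a genuine projection-valued measure. Concretely, I would fix a total order on $[k]$ and for each $x \in \Z_2^k$ define $F(x) := \prod_{i : x_i = 1} f(i)$ in that order (with $F(0) := 1$). Since any two orderings of a product of $k$ involutions differ by at most $\binom{k}{2}$ adjacent swaps, and each such swap $f(i)f(j) \leftrightarrow f(j)f(i)$ costs at most $\sqrt{\eps}$ in $\tau$-norm by hypothesis, a telescoping argument (using that unitary conjugation is $\tau$-isometric) yields
\[
\|F(x)F(y) - F(x+y)\|_\tau^2 \leq O(k^4)\eps \quad \text{and} \quad \|F(x)^* - F(x)\|_\tau^2 \leq O(k^4)\eps
\]
for all $x, y \in \Z_2^k$.

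Next, for each $y \in \Z_2^k$ I would set $P_y := 2^{-k}\sum_{x \in \Z_2^k} (-1)^{\langle y, x\rangle} F(x)$. Character orthogonality on $\Z_2^k$ gives the exact identities $\sum_y P_y = F(0) = 1$ and $f(i) = F(e_i) = \sum_y (-1)^{y_i} P_y$. Combining the approximate multiplicativity and self-adjointness of $F$ with the averaging in the Fourier definition, one checks that $\|P_y - P_y^*\|_\tau^2$, $\|P_y^2 - P_y\|_\tau^2$, and $\|P_y P_{y'}\|_\tau^2$ for $y \neq y'$ are all $\leq \poly(k)\eps$. The crucial rounding step then replaces $\{P_y\}$ by genuine mutually orthogonal projections $\{Q_y\}$ in $\mcM$ summing to $1$. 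I would perform this by passing to the amplified tracial von Neumann algebra $\mcN := \mcM \otimes M_{2^k}(\C)$, forming $\widehat P := \sum_y P_y \otimes |y\rangle\langle 0|$, and observing that $\widehat P^* \widehat P = (\sum_y P_y^* P_y) \otimes |0\rangle\langle 0|$ is close in the amplified trace to the projection $1 \otimes |0\rangle\langle 0|$ (using the exact identity $\sum_y P_y = 1$ together with the approximate orthogonality of the $P_y$). The polar decomposition of $\widehat P$ then yields a partial isometry $\widehat W$ close to $\widehat P$ whose range projection $\widehat W \widehat W^*$ is an exact projection in $\mcN$, and extracting its diagonal block components along the $|y\rangle$ basis gives the required $Q_y$. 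Defining $\psi(x) := \sum_y (-1)^{\langle y, x\rangle} Q_y$ then produces a genuine homomorphism $\Z_2^k \to \mcU(\mcM)$, and the target bound $\|\psi(x_i) - f(i)\|_\tau^2 = \|\sum_y (-1)^{y_i}(Q_y - P_y)\|_\tau^2 \leq \poly(k)\eps$ follows from the rounding estimate, using the exact identity $\sum_y (Q_y - P_y) = 0$ and character orthogonality to rule out a naive $4^k$ loss.

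The main difficulty will be the rounding step, where one must ensure that the $\tau$-norm loss from replacing the $2^k$ approximate projections $\{P_y\}$ by exact orthogonal projections $\{Q_y\}$ is polynomial, not exponential, in $k$. A naive per-$y$ rounding via scalar functional calculus (e.g. $Q_y := \chi_{[1/2,\infty)}(\tfrac{1}{2}(P_y + P_y^*))$) loses a square root --- the estimate $(\chi_{[1/2,\infty)}(x) - x)^2 \leq |x^2 - x|$ only gives $\|Q_y - P_y\|_\tau^2 \leq \|P_y^2 - P_y\|_\tau$ --- and summing $2^k$ such terms to bound $\|\psi(x_i) - f(i)\|_\tau$ is exponentially bad. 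Avoiding this requires exploiting the global structure of the family (the exact sum $\sum_y P_y = 1$ and the approximate mutual orthogonality) via a single polar-decomposition or functional-calculus step in the amplified algebra, as above. For finite-dimensional $\mcM$ this is routine linear algebra; the general type II case is the content of \cite{chapman2023efficiently}.
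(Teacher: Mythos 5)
The paper does not prove this lemma; it is quoted verbatim from \cite{chapman2023efficiently}, so there is no in-paper argument to compare your proposal against. Evaluated on its own terms, your Fourier setup is natural and the initial estimates on $F$ and on the $P_y$ are sound (modulo the implicit assumption $f(i)^* = f(i)$, which is needed to make sense of $\|F(x)^* - F(x)\|_\tau$ and is satisfied in the paper's application, where the $f(i)$ are images of self-adjoint involutions under a $*$-representation).

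The gap is in the rounding step, which you flag as the crux but do not close. Since each $F(x)$ is a product of unitaries and hence unitary, a short Fourier computation gives
\[
\sum_y P_y^* P_y \;=\; 4^{-k}\sum_{x,z}F(x)^*F(z)\sum_y(-1)^{\langle y,x+z\rangle} \;=\; 2^{-k}\sum_x F(x)^*F(x) \;=\; 1
\]
\emph{exactly}, so $\widehat P^*\widehat P = 1\otimes|0\rangle\langle 0|$ is already a projection. Thus $\widehat P$ is already a partial isometry, its polar part is $\widehat P$ itself, and the range projection $\widehat P\widehat P^* = \sum_{y,y'}P_y P_{y'}^*\otimes|y\rangle\langle y'|$ is not block-diagonal; its diagonal entries $P_y P_y^*$ are positive contractions, not projections. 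The amplified polar decomposition as you describe it therefore produces no family of mutually orthogonal projections $Q_y$. Moreover, even granted such a family, the target estimate does not follow from ``character orthogonality'' together with $\sum_y(Q_y-P_y)=0$: expanding $\bigl\|\sum_y(-1)^{y_i}(Q_y-P_y)\bigr\|_\tau^2 = \sum_{y,y'}(-1)^{y_i+y'_i}\,\tau\bigl((Q_y-P_y)^*(Q_{y'}-P_{y'})\bigr)$ and collapsing to a $\poly(k)\eps$ bound requires both approximate pairwise $\tau$-orthogonality of the differences $Q_y-P_y$ and a \emph{collective} bound $\sum_y\|Q_y-P_y\|_\tau^2\le\poly(k)\eps$ --- a per-$y$ estimate of the same quality, summed over $2^k$ indices, is exponentially too weak. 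Producing a rounding with these two properties is precisely the substance of the cited result, and your sketch does not supply it.
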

Here a tracial von Neumann algebra is a von Neumann algebra $\mcM$ equipped with a faithful normal tracial state $\tau$, and $\mcU(\mcM)$ is the unitary group of $\mcM$. If $\tau$ is a tracial state on a $*$-algebra $\mcA$, and $(\rho : \mcA \to \mcB(\mcH), \ket{v})$ is the GNS representation, then the closure $\mcM = \overline{\rho(\mcA)}$ of $\rho(\mcA)$ in the weak operator topology is a von Neumann algebra, and $\tau_0(a) = \bra{v}a\ket{v}$ is a faithful normal tracial state on $\mcM$.
A function $f$ satisfying the conditions of \Cref{lem:z2stab} is called an
\textbf{$\eps$-homomorphism from $\Z_2^{k}$ to $\mcU(\mcM)$}.
The following lemma is useful for the proofs in this section:
\begin{lemma}\label{lem:hermitiansquare}
    Suppose $\mcA$ is a $*$-algebra, and let $h(a) := a^* a$ denote the 
    hermitian square of $a \in \mcA$. Then $h(\sum_{i=1}^n a_i) \leq k \sum_i h(a_i)$,
    where $k = 2^{\lceil \log_2 n \rceil}$.
\end{lemma}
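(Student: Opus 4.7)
The plan is to prove the $n=2$ case by hand and then lift it to general $n$ by a repeated-halving (doubling) argument, padding with zeros when $n$ is not a power of two.

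For the base case, I would start from the obvious inequality $(a-b)^*(a-b) \geq 0$, which expands to $a^*a + b^*b \geq a^*b + b^*a$. Adding $a^*a + b^*b$ to both sides gives
\begin{equation*}
    (a+b)^*(a+b) = a^*a + a^*b + b^*a + b^*b \leq 2(a^*a + b^*b),
\end{equation*}
i.e., $h(a+b) \leq 2(h(a) + h(b))$. This is the only honest calculation in the argument.

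For the general case, let $m = \lceil \log_2 n \rceil$ and $k = 2^m$. Extend the sequence $a_1,\ldots,a_n$ to a sequence $a_1,\ldots,a_k$ by setting $a_i = 0$ for $i > n$; this changes neither $\sum a_i$ nor $\sum h(a_i)$, since $h(0) = 0$. I would then induct on $m$: applying the base case to the pair $\bigl(\sum_{i=1}^{k/2} a_i,\ \sum_{i=k/2+1}^{k} a_i\bigr)$ gives
\begin{equation*}
    h\!\left(\sum_{i=1}^k a_i\right) \leq 2\, h\!\left(\sum_{i=1}^{k/2} a_i\right) + 2\, h\!\left(\sum_{i=k/2+1}^{k} a_i\right),
\end{equation*}
and the inductive hypothesis applied to each half (each of length $k/2 = 2^{m-1}$) then yields the bound $2 \cdot (k/2) \sum_i h(a_i) = k \sum_i h(a_i)$, as desired.

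I do not anticipate any serious obstacle here; the only subtlety is the padding step, which is needed because the doubling argument naturally gives the power-of-two bound $2^{\lceil \log_2 n \rceil}$ rather than the (false-for-general-$n$) bound $n$. Everything takes place at the level of sums of hermitian squares, so the inequalities hold in the abstract $*$-algebra $\mcA$ without invoking any trace or representation.
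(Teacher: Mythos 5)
Your proof is correct and takes essentially the same route as the paper: the base case $h(a+b)\leq 2h(a)+2h(b)$ (the paper derives it from the parallelogram identity $h(a+b)+h(a-b)=2h(a)+2h(b)$, you from $h(a-b)\geq 0$, which is the same fact), followed by repeated halving. Your explicit zero-padding is a slightly tidier way to land exactly on the constant $2^{\lceil\log_2 n\rceil}$ than the paper's terser ``repeated applications,'' but the argument is identical in substance.
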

\begin{proof}
    Since $h(a+b) + h(a-b) = 2h(a) + 2h(b)$, we see that $h(a+b) \leq 2h(a) + 2h(b)$. 
    Thus $h(\sum_{i=1}^n a_i) \leq 2 h(\sum_{i=1}^{\lfloor n/2 \rfloor} a_i)
        + 2 h(\sum_{i=\lfloor n/2 \rfloor+1}^n a_i)$, and repeated applications gives the
    desired inequality.
\end{proof}

We now formally define a subdivision of a BCS.
\begin{definition}
    Let $B = \left(X,\{(V_i,C_i)\}_{i=1}^m\right)$ be a BCS. Suppose that for
    all $1\leq i\leq m$ there exists a constant $m_i\geq1$ and a set of constraints
    $\{D_{ij}\}_{j=1}^{m_i}$ on variables $\{V_{ij}\}_{j=1}^{m_i}$ respectively, such
    that 
    \begin{enumerate}[(1)]
        \item $V_{ij} \subseteq V_i$ for all $i \in [m]$ and $j \in [m_i]$,
        \item for every $x,y \in V_i$ and $i \in [m]$, there is a $j \in [m_i]$ such
    that $x,y \in V_{ij}$, and 
        \item $C_i = \wedge_{j=1}^{m_i}D_{ij}$ for all $i \in [m]$, where $\wedge$ is
            conjunction.
    \end{enumerate}
    The BCS $B' = \left(X,\{V_{ij},D_{ij}\}_{i,j}\right)$ is called a
    \textbf{subdivision} of $B$. When working with subdivisions, we refer to 
    $D_{ij}$ as the \textbf{clauses} of constraint $C_i$, and $m_i$ as the
    \textbf{number of clauses} in constraint $i$.  A subdivision is
    \textbf{uniform} if $m_i = m_j$ for all $i,j$. 
\end{definition}
Given a subdivision of $B$ as in the definition, let $M = \sum_{i=1}^m m_i$,
and pick a bijection between $[M]$ and the set of pairs $(i,j)$ with $1 \leq i
\leq m$ and $1 \leq j \leq m_i$. If $\pi$ is a probability distribution on $[m]
\times [m]$, let $\pi_{sub}$ be the probability distribution on $[M] \times [M]$
with $\pi_{sub}(ij,k\ell) = \pi(i,k) / m_i m_k$. Note that if $\pi$ is uniform
and the subdivision is uniform, then $\pi_{sub}$ is uniform. Any subdivision
can be turned into a uniform subdivision by repeating pairs $(V_{ij},D_{ij})$
to increase $m_i$. Note that subdivision can increase connectivity.

One of the first things we notice about subdivision is that strategies for
$\mcG(B,\pi)$ can be lifted to strategies for the subdivided game. 
\begin{prop}\label{prop:subdiviso}
    Let $B = \left(X,\{(V_i,C_i)\}_{i=1}^m\right)$ be a BCS, and let $B' =
    \left(X,\{V_{ij},D_{ij}\}_{i,j}\right)$ be a subdivision. Let $\pi$ be a
    probability distribution on $[m] \times [m]$, and let $\pi_{sub}$ be the
    probability distribution defined from $\pi$ as above. The homomorphism
    $\alpha: \mcA(B') \to \mcA(B)$ defined by $\sigma_{ij}(x) \mapsto \sigma_i(x)$
    is a $1$-homomorphism $\mcA_{inter}(B',\pi_{sub}) \to \mcA_{inter}(B,\pi)$, and
    also induces an isomorphism $\SynAlg(B',\pi_{sub}) \iso \SynAlg(B,\pi)$.
\end{prop}
Here $\sigma_{ij}(x)$ denotes the copy of $x$ in $\mcA(W_{ij},D_{ij}) \subseteq \mcA(B')$. 
\begin{proof}
	Let $h(a) = a^*a$ denote the hermitian square of $a$ as in Lemma
	\ref{lem:hermitiansquare}. By definition, $\alpha(\sigma_{ij}(x)-\sigma_{kl}(x))= \sigma_i(x)-\sigma_k(x)$. Hence
	\begin{align*}
		\alpha\Bigl(\sum_{\substack{ij\neq kl\\ x \in V_{ij}\cap V_{kl}}}\pi_{sub}(ij,kl)h(\sigma_{ij}(x)-\sigma_{kl}(x))\Bigr) &= \sum_{\substack{ij\neq kl\\ x\in V_{ij}\cap V_{kl}}}\frac{\pi(i,k)}{m_i m_k}h(\sigma_i(x)-\sigma_k(x))
		\\ &\leq\sum_{\substack{i\neq k\\ x\in V_{i}\cap V_{k}}}\pi(i,k)h(\sigma_i(x)-\sigma_k(x)),
	\end{align*}
    since each variable $x \in V_i$ appears in at most $m_i$ subclauses $V_{ij}$. 
    Hence $\alpha : \mcA_{inter}(B',\pi_{sub}) \to \mcA_{inter}(B,\pi)$ is a $1$-homomorphism.
	
    To show that the synchronous algebras are isomorphic, observe that 
    since every pair of elements $x,y \in V_i$ belongs to some $V_{ij}$, there
    is an isomorphism 
    \begin{equation*}
        \SynAlg(B',\pi_{sub}) \iso *_{i=1}^m \Z_2^{V_i} / \langle R \rangle, 
    \end{equation*}
    where $R$ is the set of relations
    $\sigma_i(\Phi_{V_{ij},\phi}) \sigma_i(\Phi_{V_{k\ell},\psi}) = 0$ for all
    $\phi$ and $\psi$ which do not agree on $V_{ij} \cap V_{k\ell}$, and 
    $\sigma_i(\Phi_{V_{ij},\phi}) = 0$ for all $\phi\not\in D_{ij}$. From these
    latter relations, it is possible to recover the relations $\Phi_{V_i,\phi}
    = 0$ for $\phi \not\in C_i$, and then to recover all the relations of
    $\SynAlg(B,\pi)$. 
\end{proof}

\Cref{prop:subdiviso} implies that $\mcG(B,\pi)$ has a perfect quantum (resp.
commuting operator) strategy if and only if $\mcG(B',\pi_{sub})$ has a perfect
quantum (resp. commuting operator) strategy. The main result of this section is that near perfect strategies for $\mcG(B',\pi_{sub})$ can be pulled back to near perfect strategies for $\mcG(B,\pi)$. For the theorem, we say that $\pi$ is
\textbf{maximized on the diagonal} if $\pi(i,i) \geq \pi(i,j)$ and
$\pi(i,i) \geq \pi(j,i)$ for all $i,j \in [m]$.
\begin{theorem}\label{Thm:subdivSound}
	Let $B = \left(X,\{(V_i,C_i)\}_{i=1}^m\right)$ be a BCS, and let $B' =
	\left(X,\{V_{ij},D_{ij}\}_{i,j}\right)$ be a subdivision of $B$ with $m_i$ clauses in constraint $C_i$.
	Let $\pi$ be a
	probability distribution on $[m] \times [m]$ that is maximized on the diagonal, and let $\pi_{sub}$ be the
	probability distribution defined from $\pi$ as above. If there is a
	trace $\tau$ on $\mcA(B',\pi_{sub})$, then there is a trace $\wtd{\tau}$ on
	$\mcA(B,\pi)$ with $\df(\wtd{\tau})\leq \poly(m,2^C,M,K)\df(\tau)$, where $C=\max_{i,j}|V_{ij}|$, $K = \max_i|V_i|$, and $M = \max_i m_i$.
\end{theorem}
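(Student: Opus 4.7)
The plan is to pull $\tau$ back through a homomorphism $\eta:\mcA(B)\to\mcM$ constructed from the GNS von Neumann algebra $(\mcM,\tau_0)$ of $\tau$ by applying the $\Z_2^k$-stability lemma to the clause-level generators of each $V_i$, together with a mass-absorption step to enforce the quotient relations $\Phi_{V_i,\phi}=0$ for $\phi\notin C_i$. Let $\rho:\mcA(B')\to\mcM$ be the GNS representation. First use \Cref{prop:inter} to convert the $\mu_{\pi_{sub}}$-defect of $\tau$ into the intersection-form bound
\begin{equation*}
    \sum_{(i,j),(k,\ell)} \tfrac{\pi(i,k)}{m_i m_k} \sum_{x \in V_{ij}\cap V_{k\ell}} \|\rho(\sigma_{ij}(x))-\rho(\sigma_{k\ell}(x))\|_{\tau_0}^2 \leq O(C)\,\df(\tau),
\end{equation*}
which yields per-pair estimates $\|\rho(\sigma_{ij}(x))-\rho(\sigma_{k\ell}(x))\|_{\tau_0}^2\leq O(CM^2/\pi(i,k))\df(\tau)$. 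For each $i$, fix a section $x\mapsto j(x,i)$ with $x\in V_{ij(x,i)}$, and set $f_i(x):=\rho(\sigma_{ij(x,i)}(x))\in\mcU(\mcM)$. The subdivision axiom produces, for any $x,y\in V_i$, a common clause $j^*$ in which $\rho(\sigma_{ij^*}(x))$ and $\rho(\sigma_{ij^*}(y))$ commute exactly; unitarity-invariance of $\|\cdot\|_{\tau_0}$ together with the within-constraint piece of the bound (weighted by $\pi(i,i)/m_i^2$) then gives $\|[f_i(x),f_i(y)]\|_{\tau_0}^2\leq \poly(C,M)\df(\tau)/\pi(i,i)$. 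Applying \Cref{lem:z2stab} produces honest group homomorphisms $\psi_i:\Z_2^{V_i}\to\mcU(\mcM)$ with $\|\psi_i(x)-f_i(x)\|_{\tau_0}^2\leq \poly(K,C,M)\df(\tau)/\pi(i,i)$.

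To turn each $\psi_i$ into a homomorphism $\mcA(V_i,C_i)\to\mcM$, absorb the leakage $Q_i:=\sum_{\phi\notin C_i}\psi_i(\Phi_{V_i,\phi})$ into a fixed outcome $\phi_0^i\in C_i$: set $N^i_\phi:=\psi_i(\Phi_{V_i,\phi})$ for $\phi\in C_i\setminus\{\phi_0^i\}$ and $N^i_{\phi_0^i}:=\psi_i(\Phi_{V_i,\phi_0^i})+Q_i$. Since the image of $\psi_i$ is commutative, one writes $1-Q_i=\prod_j\sum_{\psi\in D_{ij}}\psi_i(\Phi_{V_{ij},\psi})$ as a commuting product; a union bound for commuting projections combined with the telescoping product estimate for operator-norm-$1$ factors (comparing $\psi_i(\Phi_{V_{ij},\psi})$ with $\rho(\sigma_{ij}(\Phi_{V_{ij},\psi}))=0$ for $\psi\notin D_{ij}$) gives $\tau_0(Q_i)\leq 2^C\poly(K,C,M)\df(\tau)/\pi(i,i)$, where the $2^C$ factor enters from summing over the forbidden subclause assignments. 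The measurements $\{N^i_\phi\}_{\phi\in C_i}$ assemble, by the universal property of the free product, into $\eta:\mcA(B)\to\mcM$; set $\wtd\tau:=\tau_0\circ\eta$.

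To bound $\df(\wtd\tau;\mu_\pi)$, invoke \Cref{prop:inter} in the reverse direction to reduce to the intersection form for $\wtd\tau$. A direct computation gives $\eta(\sigma_i(x))=\psi_i(x)+\sum_{\phi\notin C_i}(\phi_0^i(x)-\phi(x))\psi_i(\Phi_{V_i,\phi})$, so orthogonality of the projections $\psi_i(\Phi_{V_i,\phi})$ yields $\|\eta(\sigma_i(x))-\psi_i(x)\|_{\tau_0}^2\leq 4\tau_0(Q_i)$. The five-term telescoping
\begin{equation*}
    \eta(\sigma_i(x))\to\psi_i(x)\to f_i(x)\to f_k(x)\to\psi_k(x)\to\eta(\sigma_k(x))
\end{equation*}
then bounds $\|\eta(\sigma_i(x))-\eta(\sigma_k(x))\|_{\tau_0}^2$ as a sum of absorption errors (bounded by $\tau_0(Q_\bullet)$), stability errors, and the clause-crossing error $\|f_i(x)-f_k(x)\|_{\tau_0}^2\leq O(CM^2/\pi(i,k))\df(\tau)$ from the original intersection bound. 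Multiplying by $\pi(i,k)$ and summing over $(i,k,x)$, all dangerous terms appear with a prefactor $\pi(i,k)/\pi(i,i)$; the maximized-on-diagonal hypothesis gives $\pi(i,k)/\pi(i,i)\leq 1$ and hence $\sum_{i,k}\pi(i,k)/\pi(i,i)\leq m^2$. Putting everything together yields $\df(\wtd\tau)\leq\poly(m,2^C,M,K)\df(\tau)$. The main obstacle throughout is exactly these $1/\pi(i,i)$ factors from the stability step: without diagonal dominance they would be unbounded, so the role of the hypothesis is precisely to turn them into polynomial factors in $m$.
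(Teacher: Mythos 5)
Your proposal is correct and follows essentially the same route as the paper's proof, which factors the argument through the intermediate weighted algebra $\mcA_{free}(B,B',\pi)$ and splits the work into \Cref{prop:subdivhom} (the section map $\sigma_i(x)\mapsto\sigma_{ij(x,i)}(x)$ and the commutator bound from a common clause), \Cref{prop:correction} (the $\Z_2^k$-stability step), and \Cref{lem:correctsat} (absorbing the non-satisfying mass into a fixed $\phi_0^i\in C_i$, exactly your $Q_i$-absorption). Your version carries out the same three steps directly on the GNS representation of $\tau$ rather than through the intermediate weighted algebra and $C$-homomorphism machinery, and correctly identifies that the maximized-on-diagonal hypothesis is what turns the $1/\pi(i,i)$ factors from stability into polynomial-in-$m$ factors.
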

For the proof of the theorem we consider several other versions of the weighted
BCS algebra, where $\mcA(V_i,C_i)$ is replaced by $\C \Z_2^{*V_i}$, and the
defining relations of $\mcA(V_i,C_i)$ are moved into the weight function.
\begin{defn}\label{def:freeBCS}
    Let $B = \left(X,\{(V_i,C_i)\}_{i=1}^m\right)$ be a BCS with a probability
    distribution $\pi$ on $[m] \times [m]$, and let $B' =
    \left(X,\{V_{ij},D_{ij}\}_{i,j}\right)$ be a subdivision, with $m_i$ clauses
    in constraint $C_i$ and probability
    distribution $\pi_{sub}$ induced by $\pi$.
    Let $\sigma_i : \C \Z_2^{*V_i} \to *_{i=1}^m \C \Z_2^{*V_i}$ denote the
    inclusion of the $i$th factor. Let $\mcA_{free}(B) := *_{i=1}^m \C \Z_2^{*V_i}$, and
    define weight functions $\mu_{inter}$, $\mu_{sat}$, $\mu_{clause}$, and $\mu_{comm}$ on $\mcA_{free}(B)$ by
    \begin{align*}
        & \mu_{inter}(\sigma_i(x) - \sigma_j(x)) = \pi(i,j) \text{ for all } i \neq j \in [m] \text{ and } x \in V_i \cap V_j, 
            \\
        & \mu_{sat}(\Phi_{V_i,\phi}) = \pi(i,i) \text{ for all } i \in [m] \text{ and } \phi \in \Z_2^{V_{i}} \setminus C_{i},
        	\\
        & \mu_{clause}(\Phi_{V_{ij},\phi}) = \pi(i,i)/m_i^2 \text{ for all } (i,j) \in [m]\times[m_i] \text{ and } \phi \in \Z_2^{V_{ij}} \setminus D_{ij}, 
            \text{ and }\\ 
        & \mu_{comm}([\sigma_i(x),\sigma_i(y)]) = \pi(i,i) \text{ for all } i \in [m] \text{ and } x,y \in V_i,
    \end{align*}
    and $\mu_{inter}(r) = 0$, $\mu_{sat}(r)= 0$, $\mu_{clause}(r)=0$, and $\mu_{comm}(r)=0$ for any
    elements $r$ other than those listed. 
    Let $\mcA_{free}(B,B',\pi)$ be the weighted algebra $(\mcA_{free}(B), \mu_{all})$,
    where $\mu_{all} := \mu_{inter} + \mu_{clause} + \mu_{comm}$.
\end{defn}
Note that $\mu_{inter}$ is the same as the weight function of the algebra
$\mcA_{inter}(B,\pi)$ defined in \Cref{def:inter}, except that it's defined on
$\mcA_{free}(B)$ rather than $\mcA(B)$.  The weight function $\mu_{sat}$ comes
from the defining relations for $\mcA(B)$, while $\mu_{clause}$ comes from the
defining relations for $\mcA(B')$, so $\mcA_{free}(B,B',\pi)$ is a mix of relations
from $\mcA_{inter}(B,\pi)$ and $\mcA_{inter}(B',\pi)$. As mentioned previously,
the context $V_i$ has an order inherited from $X$, and this is used for the
order of the product when talking about $\Phi_{V_i,\phi}$ and
$\Phi_{V_{ij},\phi}$ in $\mcA_{free}(B)$. In particular, the order on $V_{ij}$
is compatible with the order on $V_{i}$.  

The weight functions $\mu_{inter}$, $\mu_{sat}$ and $\mu_{clause}$ can also be
defined on $\ast_{i=1}^m \C \Z_2^{V_i}$ using the same formula as in
\Cref{def:freeBCS}, and we use the same notation for both versions. The
following lemma shows that we can relax $\mcA_{inter}(B,\pi)$ to $(\ast_{i=1}^m
\C \Z_{2}^{V_i}, \mu_{inter} + \mu_{clause})$, as long as $\pi$ is maximized on the diagonal.
\begin{lemma}\label{lem:correctsat}
    Let $B = \left(X,\{(V_i,C_i)\}_{i=1}^m\right)$ be a BCS, and let $\pi$ be a
    probability distribution on $[m]\times[m]$ which is maximized on the diagonal.
    Suppose $\mu_{inter}$, $\mu_{sat}$ and $\mu_{clause}$ are the weight functions defined above with respect to $\pi$. Then there is an
    $O(t)$-homomorphism $\mcA_{inter}(B,\pi) \to (\ast_{i=1}^m \C \Z_2^{V_i}, \mu_{inter}+\mu_{sat})$,
    where $t$ is the connectivity of $B$. 
    Furthermore, if $B' = \left(X,\{V_{ij},D_{ij}\}_{i,j}\right)$ is a subdivision of $B$, then
    there is an $M^2$-homomorphism $(\ast_{i=1}^m \C \Z_2^{V_i}, \mu_{inter}+\mu_{sat}) \to (\ast_{i=1}^m \C \Z_2^{V_i}, \mu_{inter}+\mu_{clause})$, where $M = \max_i m_i$ is the maximum number of clauses
    $m_i$ in constraint $i$.
\end{lemma}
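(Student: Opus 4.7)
The plan is to establish the two inequalities separately. Both exploit the fact that inside a single factor $\C\Z_2^{V_i}$ of the free product, all generators commute, so the $\Phi_{V_i,\phi}$ form a genuine projective measurement and can be handled with classical identities.

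For the first homomorphism, I construct a unital $*$-homomorphism $\alpha : \mcA(B) \to \ast_{i=1}^m \C\Z_2^{V_i}$ as a section of the natural quotient. For each $i$, choose a retraction $f_i : \Z_2^{V_i} \to C_i$ (any extension of $\mathrm{id}_{C_i}$), and let $h_i : \mcA(V_i, C_i) \to \C\Z_2^{V_i}$ be the unital $*$-homomorphism determined by $\Phi_{V_i,\phi} \mapsto \sum_{\psi \in f_i^{-1}(\phi)} \Phi_{V_i,\psi}$; the free product of these is $\alpha$. A short computation gives $\tilde{x}_i := \alpha(\sigma_i(x)) = \sum_{\psi} f_i(\psi)(x)\,\Phi_{V_i,\psi}$, so the error $\tilde{x}_i - x_i$ is supported on $\{\Phi_{V_i,\psi} : \psi \notin C_i\}$ because $f_i|_{C_i} = \mathrm{id}$. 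I then write $\tilde{x}_i - \tilde{x}_j = (\tilde{x}_i - x_i) + (x_i - x_j) - (\tilde{x}_j - x_j)$ and apply \Cref{lem:hermitiansquare} with three summands to bound $(\tilde{x}_i - \tilde{x}_j)^*(\tilde{x}_i - \tilde{x}_j)$ by $4$ times the sum of the three squared pieces. The middle piece produces exactly the target $\mu_{inter}$ sum, while each error square is at most $4\sum_{\psi \notin C_i}\Phi_{V_i,\psi}$. Summing against $\pi(i,j)$ with $x$ ranging over $V_i \cap V_j$, diagonal dominance yields $\pi(i,j) \leq \pi(i,i)$, and the connectivity bound $\sum_j |V_i \cap V_j| \leq t$ contributes a factor of $t$ to the $\mu_{sat}$ piece, producing the required $O(t)$-homomorphism.

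For the second homomorphism, the underlying $*$-map is the identity on $\ast_{i=1}^m \C\Z_2^{V_i}$, so the $\mu_{inter}$ contributions match and only $\mu_{sat}$ versus $\mu_{clause}$ must be compared. Inside the commutative factor $\C\Z_2^{V_i}$, I factor $\Phi_{V_i,\phi}$ as $\Phi_{V_{ij},\phi|_{V_{ij}}} \cdot \Phi_{V_i \setminus V_{ij},\phi|_{V_i \setminus V_{ij}}}$, a product of commuting projections, so $\Phi_{V_i,\phi} \leq \Phi_{V_{ij},\phi|_{V_{ij}}}$ for each $j$. Since $C_i = \wedge_j D_{ij}$, every $\phi \notin C_i$ is witnessed by some $j$ with $\phi|_{V_{ij}} \notin D_{ij}$, giving $\sum_{\phi \notin C_i}\Phi_{V_i,\phi} \leq \sum_j \sum_{\psi \notin D_{ij}} \Phi_{V_{ij},\psi}$. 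Multiplying by $\pi(i,i)$ and comparing with the target weight $\pi(i,i)/m_i^2$, the required inequality reduces to $m_i^2 \leq M^2$, which holds by definition of $M$.

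The main obstacle is the first part: one needs to recognize that a \emph{unital} section of the quotient $\ast_i \C\Z_2^{V_i} \to \mcA(B)$ exists --- this relies on each factor being finite-dimensional abelian --- and to pick it via a retraction so that the perturbation $\tilde{x}_i - x_i$ vanishes on $C_i$-indexed projections and is otherwise controlled by the $\mu_{sat}$ support. Once that observation is in hand, the three-term telescoping and the diagonal-dominance/connectivity bookkeeping are routine.
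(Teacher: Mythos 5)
Your proof is correct and takes essentially the same route as the paper. Your ``retraction'' homomorphism $\alpha$ (send $\Phi_{V_i,\psi}$ with $\psi\notin C_i$ to the bucket of some $f_i(\psi)\in C_i$) is a mild generalization of the paper's choice, which collapses all of $\Z_2^{V_i}\setminus C_i$ to a single fixed $\psi_i\in C_i$; your three-term decomposition of $\tilde{x}_i-\tilde{x}_j$, the bound $h(\tilde{x}_i-x_i)\le 4\sum_{\psi\notin C_i}\Phi_{V_i,\psi}$, and the diagonal-dominance and connectivity bookkeeping are exactly the paper's argument, as is the second part (partitioning $\phi\notin C_i$ by a witnessing clause $j$ and absorbing the $1/m_i^2$ normalization into the $M^2$ constant).
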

\begin{proof}
    Since $C_i$ is non-empty by convention, we can choose $\psi_i\in C_i$ for every $1\leq
    i\leq m$. Define the homomorphism $\alpha:\mcA_{inter}(B,\pi) \to (\ast_{i=1}^m
    \C \Z_2^{V_i}, \mu_{inter}+\mu_{sat})$ by 
	\begin{equation*}
		\alpha(\sigma_i(x)) = \sum_{\varphi\in C_i}\Phi_{V_i,\varphi}\sigma_i(x)+\sum_{\mathclap{\varphi\in \Z^{V_i}_2\setminus C_i}}\Phi_{V_i,\varphi}\psi_i(x).
	\end{equation*}
    Let $\Phi_i = \sum_{\varphi\in C_i}\Phi_{V_i,\varphi}$, and let $h(a) =
    a^*a$ denote the hermitian square of $a$ as in Lemma \ref{lem:hermitiansquare}. Then
	\begin{equation*}
		\begin{split}
			\alpha\left[h(\sigma_i(x)-\sigma_j(x))\right]&=h(\Phi_i\sigma_i(x)+(1-\Phi_i)\psi_i(x)-\Phi_j\sigma_j(x)-(1-\Phi_j)\psi_j(x))
			\\
			&\leq4h[\Phi_i\sigma_i(x)+(1-\Phi_i)\psi_i(x)-\sigma_i(x)]
			\\
			&\quad +4h[\Phi_j\sigma_j(x)+(1-\Phi_j)\psi_j(x)-\sigma_j(x)]+4h[\sigma_i(x)-\sigma_j(x)].
		\end{split}
	\end{equation*}
	Observe that $\sigma_i(x) = \sum_{\varphi\in \Z_2^{V_i}}\Phi_{V_i,\varphi}\varphi(x)$, so
	\begin{equation*}
        h(\Phi_i\sigma_i(x)+(1-\Phi_i)\psi_i(x)-\sigma_i(x)) =
            \sum_{\mathclap{\varphi\in \Z_2^{V_i}\setminus C_i}}\Phi_{V_i,\varphi}(\psi_i(x) 
            - \varphi(x))^2\leq 4\sum_{\mathclap{\varphi\in \Z_2^{V_i}\setminus C_i}}\Phi_{V_i,\varphi}. 
	\end{equation*}
	Thus 
    {
        \small
	\begin{align*}
        \alpha\Bigl(\sum_{\substack{1\leq i \neq j\leq m\\x\in V_i\cap V_j}} \pi(i,j)h(\sigma_i(x)-\sigma_j(x))\Bigr)&
            \leq \sum_{\substack{1\leq i \neq j\leq m\\x\in V_i\cap V_j}} \pi(i,j)\Bigl(16\sum_{\mathclap{\varphi\in \Z_2^{V_i}\setminus C_i}}\Phi_{V_i,\varphi} + 16\sum_{\mathclap{\varphi\in \Z_2^{V_j}\setminus C_j}}\Phi_{V_j,\varphi}+4h(\sigma_i(x)-\sigma_j(x))\Bigr)
			\\
        & \leq \sum_{a \in \ast_{i=1}^m \C \Z_2^{V_i}} 4 \mu_{inter}(a) a^* a + \sum_{a \in \ast_{i=1}^m \C \Z_2^{V_i}} 32 t \mu_{sat}(a) a^* a \\
        &\leq O(t)\sum_{a\in \ast_{i=1}^m \C \Z_2^{V_i}}(\mu_{inter}(a)+\mu_{sat}(a))a^*a,
	\end{align*}
    }
	since $\pi$ is maximized on the diagonal.

    Next, suppose $B'$ is a subdivision of $B$. If $\phi \in \Z_2^{V_i}
    \setminus C_i$, then we can choose $j_{\phi} \in [m_i]$ such that
    $\phi|_{V_{i{j_\phi}}} \not\in D_{i{j_\phi}}$. Since $\displaystyle\sum_{\phi :
    \phi|_{V_{ij}} = \phi'} \Phi_{V_i,\phi} = \Phi_{V_{ij},\phi'}$,
    \begin{equation*}
        \sum_{\phi \not\in C_i} \Phi_{V_i,\phi} 
        = \sum_{1 \leq j \leq m_i} \sum_{\phi : j_{\phi} = j} \Phi_{V_i,\phi}
        \leq \sum_{1 \leq j \leq m_i} \sum_{\phi : \phi|_{V_{ij}} \not\in D_{ij}} \Phi_{V_i,\phi}
        = \sum_{1 \leq j \leq m_i} \sum_{\phi' \not\in D_{ij}} \Phi_{V_{ij},\phi'}.
    \end{equation*}
    Hence
    \begin{equation*}
        \sum_{r} \mu_{sat}(r) r^* r \leq M^2 \sum_{r} \mu_{clause}(r) r^* r,
    \end{equation*}
    where the $M^2$ comes from the fact that we divide by $m_i^2$ in the definition of $\mu_{clause}$.
    Thus the identity map $(\ast_{i=1}^m \C \Z_2^{V_i},\mu_{inter}+\mu_{sat})
    \to (\ast_{i=1}^m \C \Z_2^{V_i}, \mu_{inter}+\mu_{clause})$ is an $M^2$-homomorphism.
\end{proof}
The following proposition shows how to construct tracial states on $\mcA_{inter}(B,\pi)$ from tracial states on $\mcA_{free}(B,B',\pi)$.
\begin{prop}\label{prop:correction}
    Let $B = \left(X,\{(V_i,C_i)\}_{i=1}^m\right)$ be a BCS, and let $\pi$ be a
    probability distribution on $[m] \times [m]$ which is maximized on the
    diagonal. Let $B' = \left(X,\{V_{ij},D_{ij}\}_{i,j}\right)$ be a subdivision of $B$ with
    $m_i$ clauses in constraint $C_i$. 
    If $\tau$ is a trace on $\mcA_{free}(B,B',\pi)$, then there is a
    trace $\wtd{\tau}$ on $\mcA_{inter}(B,\pi)$ such that $\df(\wtd{\tau}) \leq
    \poly(m,2^C,M,K)\df(\tau)$, where $C = \max_{ij}|V_{ij}|$, $K = \max_{i}|V_i|$, and $M = \max_{i} m_i$. Furthermore, if $\tau$ is finite-dimensional then so is $\wtd{\tau}$. 
\end{prop}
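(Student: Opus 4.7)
The plan is to use $\Z_2$-stability (\Cref{lem:z2stab}) context-by-context to replace the non-commuting free generators $\sigma_i(x)$ by unitaries that actually commute within each context, and then to chain through the weighted-algebra transitions supplied by \Cref{lem:correctsat} until landing on $\mcA_{inter}(B,\pi)$. First I pass $\tau$ through its GNS representation and let $(\mcM,\tau_0)$ be the resulting tracial von Neumann algebra, with $f_i(x) := \rho(\sigma_i(x))$ a self-adjoint unitary in $\mcM$. The $\mu_{comm}$ contribution to $\df(\tau)$ gives $\pi(i,i)\,\|[f_i(x),f_i(y)]\|_{\tau_0}^2 \leq \df(\tau)$ for every $i$ and every $x,y \in V_i$, so $f_i : V_i \to \mcU(\mcM)$ is an $\eps_i$-homomorphism from $\Z_2^{V_i}$ with $\eps_i \leq \df(\tau)/\pi(i,i)$. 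Applying \Cref{lem:z2stab} in each context produces a genuine homomorphism $\psi_i : \Z_2^{V_i} \to \mcU(\mcM)$ with $\|\psi_i(x) - f_i(x)\|_{\tau_0}^2 \leq \poly(K)\,\df(\tau)/\pi(i,i)$ for all $x \in V_i$. These assemble into a homomorphism $\rho' : \ast_{i=1}^m \C\Z_2^{V_i} \to \mcM$, and I set $\tau' := \tau_0 \circ \rho'$.

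The next step is to bound $\df(\tau')$ on $(\ast_{i=1}^m \C\Z_2^{V_i},\mu_{inter}+\mu_{clause})$. For the $\mu_{inter}$ piece I insert $\pm f_i(x)$ and $\pm f_j(x)$ into $\psi_i(x) - \psi_j(x)$ and apply $\|a+b+c\|_{\tau_0}^2 \leq 3(\|a\|_{\tau_0}^2+\|b\|_{\tau_0}^2+\|c\|_{\tau_0}^2)$: the middle term is controlled by the $\mu_{inter}$ part of $\df(\tau)$ directly, and the outer terms survive because the diagonal-dominance hypothesis $\pi(i,j) \leq \pi(i,i)$ exactly cancels the $1/\pi(i,i)$ factor coming from stability. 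For the $\mu_{clause}$ piece, setting $\Phi'_{V_{ij},\phi} := \prod_{x\in V_{ij}} \tfrac{1}{2}(1+\phi(x)\psi_i(x))$, I compare $\Phi'_{V_{ij},\phi}$ to $\Phi_{V_{ij},\phi}$ via a telescoping sum that replaces one factor at a time, noting that every factor is a projection and hence has operator norm at most $1$, so each telescoping term is bounded in $\tau_0$-norm by $\tfrac{1}{2}\|\psi_i(x_k)-f_i(x_k)\|_{\tau_0}$; Cauchy--Schwarz then yields $\|\Phi'_{V_{ij},\phi} - \Phi_{V_{ij},\phi}\|_{\tau_0}^2 \leq \poly(2^C,K)\,\df(\tau)/\pi(i,i)$, and summing against the weights $\pi(i,i)/m_i^2$ over the $(i,j,\phi)$ with $\phi \notin D_{ij}$ gives a $\mu_{clause}$ contribution bounded by $\poly(m,M,2^C,K)\,\df(\tau)$.

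Finally I chain through \Cref{lem:correctsat}. Its second assertion gives an $M^2$-homomorphism from $(\ast_{i=1}^m\C\Z_2^{V_i}, \mu_{inter}+\mu_{sat})$ to $(\ast_{i=1}^m\C\Z_2^{V_i}, \mu_{inter}+\mu_{clause})$, so by \Cref{lem:Chom} the $(\mu_{inter}+\mu_{sat})$-defect of $\tau'$ is at most $M^2$ times its $(\mu_{inter}+\mu_{clause})$-defect; its first assertion then supplies an $O(t)$-homomorphism $\alpha : \mcA_{inter}(B,\pi) \to (\ast_{i=1}^m\C\Z_2^{V_i},\mu_{inter}+\mu_{sat})$, so $\wtd{\tau} := \tau' \circ \alpha$ is a tracial state on $\mcA_{inter}(B,\pi)$ with $\df(\wtd{\tau}) \leq \poly(m,M,2^C,K)\,\df(\tau)$, using $t \leq mK$ to absorb the connectivity factor. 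Finite-dimensionality is preserved because the entire construction takes place inside $\mcM$, which is finite-dimensional whenever $\tau$ is. The main obstacle will be organizing the estimates so that the $1/\pi(i,i)$ factors produced by \Cref{lem:z2stab} are absorbed by the diagonal-dominance hypothesis, while keeping careful track of which powers of $m$, $M$, $K$, and $2^C$ appear at each step of the chain.
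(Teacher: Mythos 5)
Your proposal follows essentially the same route as the paper: GNS representation, per-context $\Z_2$-stability via \Cref{lem:z2stab} to replace the free generators by genuinely commuting unitaries, bounds on the $\mu_{inter}$ and $\mu_{clause}$ defects of the resulting trace $\tau'$ using the diagonal-dominance hypothesis to absorb the $1/\pi(i,i)$ factors, and finally pulling back along the $O(tM^2)$-homomorphism supplied by \Cref{lem:correctsat}. The only cosmetic difference is that you telescope $\Phi'_{V_{ij},\phi} - \rho(\Phi_{V_{ij},\phi})$ directly over the projection factors, whereas the paper expands $\Phi_{V_{ij},\phi}$ in the characters $x_S$ and bounds each term; both are valid, and your estimate of $\|\Phi'_{V_{ij},\phi}\|^2_{\tau_0}$ should of course be completed with the additive term $2\|\rho(\Phi_{V_{ij},\phi})\|^2_{\tau_0}$, whose weighted sum is $2\,\df(\tau;\mu_{clause})$ and is already accounted for in your final $\poly(m,M,2^C,K)\df(\tau)$ bound.
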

\begin{proof}
    Since $\pi$ is maximized on the diagonal, if $\pi(i,i) = 0$ then $\pi(i,j)
    = \pi(j,i) = 0$ for all $j \in [m]$, and the variables in $V_i$ do not appear
    in $\supp(\mu_{inter})$. Thus we may assume without loss of generality that
    $\pi(i,i) > 0$ for all $i \in [m]$. Let $\tau$ be a trace on
    $\mcA_{free}(B,B',\pi)$. By the GNS construction there is
    a $*$-representation $\rho$ of $\mcA_{free}(B,B',\pi)$ acting on a Hilbert
    space $\mcH_0$ with a unit cyclic vector $\psi$ such that $\tau(a) =
    \langle\psi|\rho(a)|\psi\rangle$ for all $a \in \mcA_{free}(B)$. Let
    $\mcM_0 = \overline{\rho(\mcA_{free}(B))}$ be the weak operator closure of the image of $\rho$, and let $\tau_0$
    be the faithful normal tracial state on $\mcM_0$ corresponding to
    $\ket{\psi}$ (so $\tau_0 \circ \rho = \tau)$. 

	 For all $i \in [m]$ the restriction of $\rho$ to $\mbZ_2^{*V_i}$ is a $\df(\tau;\mu_{comm})/\pi(i,i)$-homomorphism from $\mbZ_2^{V_i}$ into $(\mcM_0,\tau_0)$, so by \Cref{lem:z2stab} there is a representation $\rho_i: \mbZ_2^{V_i}\rightarrow \mcU(\mcM_0)$ such that 
	 \begin{equation}\label{eq:stability}
        \|\rho_i(x_j)-\rho(x_j)\|_{\tau_0}^2\leq \dfrac{\poly(K)}{\pi(i,i)}\df(\tau;\mu_{comm})
	\end{equation}
	for all generators $x_j \in \mbZ_2^{V_i}$. Suppose $x \in V_i \cap V_j$, and let $\wtd{\rho} : \ast_{i=1}^m \C \Z_2^{V_i} \to \mcM_0$ be
    the homomorphism defined by $\wtd{\rho}(x) = \rho_i(x)$ for $x\in \mbZ_2^{V_i}$. Then
    \begin{align*}
        \|\wtd{\rho}(\sigma_i(x)-\sigma_j(x))\|_{\tau_0}^2
        & \leq 4 \|\wtd{\rho}(\sigma_i(x)) - \rho(\sigma_i(x))\|_{\tau_0}^2
            + 4 \|\wtd{\rho}(\sigma_j(x)) - \rho(\sigma_j(x))\|_{\tau_0}^2 \\
            & \quad\quad\quad + 4 \|\rho(\sigma_i(x)-\sigma_j(x))\|_{\tau_0}^2 \\
        & \leq \dfrac{\poly(K)}{\pi(i,i)}\df(\tau;\mu_{comm}) + 4 \|\sigma_i(x) - \sigma_j(x)\|_{\tau}^2.
    \end{align*}
	Since $\pi$ is maximalized on the diagonal, and $|\{(i,j,x) : i \neq j \in [n], x \in V_i\cap V_j\}|\leq mt$ where $t$ is the connectivity of $B$, we conclude that
    \begin{align*}
        \df(\tau_0 \circ \wtd{\rho}; \mu_{inter}) & \leq \sum_{i\neq j} \sum_{x \in V_i \cap V_j} \pi(i,j)\left( \dfrac{\poly(K)}{\pi(i,i)}\df(\tau;\mu_{comm})
        + 4 \|\sigma_i(x) - \sigma_j(x)\|_{\tau}^2\right) \\ 
        & \leq  O(mt\poly(K) \df(\tau;\mu_{comm}) + \df(\tau;\mu_{inter})).
    \end{align*}

    For any $S\subseteq V_i$, let $x_S := \prod_{x \in S} x \in \Z_2^{*V_i}$, where the
    order of the product is inherited from the order on $X$. By \Cref{eq:stability},
    \begin{equation*}
    	\|\wtd{\rho}(x_S)-\rho(x_S)\|_{\tau_0}^2 \leq \dfrac{\poly(K)}{\pi(i,i)}\df(\tau;\mu_{comm}),
    \end{equation*}
    where the degree of $K$ has increased by one. Since $\Phi_{V_{ij},\phi} =
    \tfrac{1}{2^{|V_{ij}|}} \sum_{S \subseteq V_{ij}} \phi(x_S)x_S$, we get that
  \begin{equation*}
  	\|\wtd{\rho}(\Phi_{V_{ij},\phi}) - \rho(\Phi_{V_{ij},\phi})\|_{\tau_0}^2
  	\leq \frac{1}{2^{|V_{ij}|}} \sum_{S \subseteq V_{ij}} \|\wtd{\rho}(x_S) - \rho(x_S)\|_{\tau_0}^2
  	\leq \dfrac{\poly(K)}{\pi(i,i)}\df(\tau;\mu_{comm}).
  \end{equation*}
    If $1 \leq i \leq m$, $1 \leq j \leq m_i$, and $\phi \not\in D_{ij}$, then 
    \begin{equation*}
    	\|\wtd{\rho}(\Phi_{V_{ij},\phi})\|_{\tau_0}^2 \leq
    	2 \|\wtd{\rho}(\Phi_{V_{ij},\phi}) - \rho(\Phi_{V_{ij},\phi})\|_{\tau_0}^2 + 
    	2 \|\rho(\Phi_{V_{ij},\phi})\|_{\tau_0}^2, 
    \end{equation*}
    and hence 
    \begin{align*}
        \df(\tau_0 \circ \wtd{\rho}; \mu_{clause}) & = \sum_{i,j} \frac{\pi(i,i)}{m_i^2}
            \sum_{\phi \not\in D_{ij}} \|\wtd{\rho}(\Phi_{V_{ij},\phi})\|_{\tau_0}^2 \\
        & \leq \sum_{i,j} \sum_{\phi \not\in D_{ij}} \frac{\pi(i,i)}{m_i^2} \left( \dfrac{\poly(K)}{\pi(i,i)}\df(\tau;\mu_{comm}) + 2 \|\Phi_{V_{ij},\phi}\|^2_{\tau}\right) \\
        & \leq \sum_{i,j} 2^C \frac{\poly(K)}{m_i^2} \df(\tau;\mu_{comm})+ 2 \df(\tau; \mu_{clause})   \\
        & \leq m^2 2^C \poly(K)\df(\tau;\mu_{comm}) + 2 \df(\tau;\mu_{clause}).
    \end{align*}
    We conclude that $\wtd{\tau} = \tau_0 \circ
    \wtd{\rho}$ is a tracial state on $\ast_{i=1}^m \C \Z_2^{V_i}$ with $\df(\wtd{\tau};\mu_{inter}+\mu_{clause})$
    bounded by 
    \begin{align*}
         O(\df(\tau;\mu_{inter}) + \df(\tau;\mu_{clause}) + (m^2 2^C+ mt)\poly(K) \df(\tau;\mu_{comm})).
    \end{align*}
    Since $t \leq O(mK)$, we conclude that 
    \begin{equation*}
        \df(\wtd{\tau};\mu_{inter}+\mu_{clause}) \leq \poly(m,2^C,K) \df(\tau; \mu_{inter}+\mu_{clause}+\mu_{comm}).
    \end{equation*}
     By \Cref{lem:correctsat}, there is a
    $O(tM^2)$-homomorphism $\mcA_{inter}(B,\pi) \to
    (\ast^m_{i=1}\C\Z^{V_i}_2,\mu_{inter}+\mu_{clause})$, and pulling
    $\wtd{\tau}$ back by this homomorphism gives the proposition.
\end{proof}
Finally, we can pull back tracial states from the subdivision algebra
$\mcA_{inter}(B',\pi_{sub})$ to traces on $\mcA_{free}(B,B',\pi)$.
\begin{prop}\label{prop:subdivhom}
    Let $B = \left(X,\{(V_i,C_i)\}_{i=1}^m\right)$ be a BCS, and let $B' =
    \left(X,\{V_{ij},D_{ij}\}_{i,j}\right)$ be a subdivision of $B$.
    Let $\pi$ be a
    probability distribution on $[m] \times [m]$, and let $\pi_{sub}$ be the
    probability distribution defined from $\pi$ as above. Then there is a $\poly(M,2^{C})$-homomorphism $\mcA_{free}(B,B',\pi) \to \mcA_{inter}(B',\pi_{sub})$,
    where $C = \max_{ij}|V_{ij}|$ and $M = \max_{i}m_i$.
\end{prop}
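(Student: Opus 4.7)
The plan is to construct the homomorphism $\alpha : \mcA_{free}(B) \to \mcA(B')$ by choosing, for each $i \in [m]$ and $x \in V_i$, an index $j(i,x) \in [m_i]$ with $x \in V_{i,j(i,x)}$ (which exists by the subdivision property), and setting $\alpha(\sigma_i(x)) = \sigma_{i,j(i,x)}(x)$. Since $\mcA_{free}(B) = \ast_{i}\C\Z_2^{*V_i}$ is free on its involutive generators, this assignment extends uniquely to a $*$-homomorphism into $\mcA(B')$. The three pieces of $\mu_{all} = \mu_{inter} + \mu_{clause} + \mu_{comm}$ must each be bounded by $\poly(M,2^C)\sum_{b}\mu_{inter}(b)h(b)$ in the $\lesssim$ sense, where $\mu_{inter}$ on the right refers to the intercontext weights $\pi(i,k)/(m_i m_k)$ of $\mcA_{inter}(B',\pi_{sub})$.

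The intercontext piece is immediate: $\alpha(\sigma_i(x) - \sigma_j(x)) = \sigma_{i,j(i,x)}(x) - \sigma_{j,j(j,x)}(x)$ is a $B'$-intercontext generator of weight $\pi(i,j)/(m_i m_j)$, giving overhead at most $M^2$. For the commutator piece, fix $i \in [m]$ and $x,y \in V_i$; the subdivision hypothesis guarantees $j$ with $\{x,y\} \subseteq V_{ij}$, and then $a' := \sigma_{ij}(x)$ and $b' := \sigma_{ij}(y)$ commute in $\mcA(V_{ij},D_{ij})$. Writing $a := \alpha(\sigma_i(x))$, $b := \alpha(\sigma_i(y))$ and expanding $[a,b] = (a-a')b - b(a-a') + a'(b-b') - (b-b')a'$, \Cref{lem:hermitiansquare} together with the cyclic-equivalence identity $b^* X b \sim X$ (valid because $b^*b = bb^* = 1$) yields $h([a,b]) \lesssim 8[h(a-a') + h(b-b')]$. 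Each of $h(a-a')$ and $h(b-b')$ is an intercontext generator of $B'$ of weight $\pi(i,i)/m_i^2$, so after the $m_i^2 \leq M^2$ rescaling the $\mu_{comm}$ contribution is $\lesssim 8M^2\sum_b \mu_{inter}(b)h(b)$. For the clause piece, fix $(i,j)$ and $\phi \notin D_{ij}$; since $\sigma_{ij}(\Phi_{V_{ij},\phi}) = 0$, the image $P := \alpha(\Phi_{V_{ij},\phi}) = \prod_{x \in V_{ij}} \tfrac12(1 + \phi(x)\sigma_{i,j(i,x)}(x))$ admits a telescoping decomposition $P = \sum_{k=1}^C \Delta_k$, where each $\Delta_k = u_k \cdot \tfrac12 \phi(x_k)(a_k - a_k') \cdot w_k$ contains a single difference $a_k - a_k' := \sigma_{i,j(i,x_k)}(x_k) - \sigma_{ij}(x_k)$ sandwiched between products of projections $u_k, w_k$. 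Applying \Cref{lem:hermitiansquare}, $h(P) \leq 2C \sum_k h(\Delta_k)$; summing over the at most $2^C$ bad $\phi$ and the $C$ hybrid indices produces an $O(C \cdot 2^C)$ factor, with no further $M$-loss since $\mu_{clause}$ and the relevant $B'$-intercontext weight both equal $\pi(i,i)/m_i^2$.

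The main technical difficulty is the per-hybrid estimate $h(\Delta_k) \lesssim \tfrac14 h(a_k - a_k')$, which the naive operator-norm bound $\|u_k\|, \|w_k\| \leq 1$ does not deliver because the defining inequality of a $C$-homomorphism is a purely algebraic sum-of-squares (plus commutator) statement. The critical algebraic lemma is that any product $u = p_1 \cdots p_r$ of projections in a $*$-algebra satisfies $1 - u^*u \gtrsim 0$, proved by induction on $r$ via $1 - p_r v^*v p_r = (1-p_r) + p_r(1 - v^*v)p_r$ with $1 - p_r$ itself a projection, and likewise $1 - uu^* \gtrsim 0$. Given this, I chain together, setting $v_k := \tfrac12\phi(x_k)(a_k - a_k')w_k$ so that $\Delta_k = u_k v_k$: $h(\Delta_k) \sim u_k v_k v_k^* u_k^*$ (cyclic shift), $\lesssim \tfrac14 u_k(a_k - a_k')(a_k - a_k')^* u_k^*$ (substituting $w_k w_k^* = 1 - \eta$ with $\eta$ SoS and absorbing the negative term as $\sum (c_i(a_k - a_k')^* u_k^*)^*(c_i(a_k - a_k')^* u_k^*)$), $\sim \tfrac14 (a_k - a_k')^* u_k^* u_k (a_k - a_k')$ (cyclic), $\lesssim \tfrac14 h(a_k - a_k')$ (same trick with $u_k^* u_k \leq 1$). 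Assembling the three pieces gives $\alpha$ a $\poly(M, 2^C)$-homomorphism, as desired.
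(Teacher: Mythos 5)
Your construction of $\alpha$, and the treatment of the $\mu_{inter}$ and $\mu_{comm}$ pieces, match the paper's proof essentially step for step: same choice of contexts $j(i,x)$ and $i_{xy}$, same four-term hybrid decomposition of the commutator, and the same cyclic shift using unitarity of the generators. (One small inaccuracy in the bookkeeping: your $8M^2$ bound for the $\mu_{comm}$ piece undercounts multiplicity. When you push the $\sum_{x,y\in V_i}$ sum through the bound $8[h(a-a')+h(b-b')]$, the same intercontext difference $h(\sigma_{i,j(i,x)}(x)-\sigma_{ij'}(x))$ is hit once for each $y$ with $i_{xy}=j'$, which can be as many as $|V_{ij'}|\leq C$ values of $y$; the correct bound is $O(CM^2)$, as in the paper. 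This does not affect the $\poly(M,2^C)$ conclusion.)

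Where you genuinely diverge from the paper is the $\mu_{clause}$ piece. The paper first re-expands $\Phi_{V_{ij},\phi}$ in the Fourier (monomial) basis, $\Phi_{V_{ij},\phi} = 2^{-|V_{ij}|}\sum_{S\subseteq V_{ij}}\phi(x_S)x_S$, and telescopes each monomial $x_S$. The payoff is that the sandwich factors $u_{x,S},v_{x,S}$ are then unitary, so the cyclic shift $h(u\,d\,v)\sim h(d)$ is immediate. You instead telescope the product of projections $P = \prod_x\tfrac12(1+\phi(x)\sigma_{i,j(i,x)}(x))$ directly, obtaining only $C$ hybrid terms rather than $C\cdot 2^C$ at the cost that the sandwich factors $u_k,w_k$ are products of projections, not unitaries. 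Your auxiliary lemma --- that $1-u^*u$ is a sum of hermitian squares for any product of projections $u$ --- is exactly the right replacement, and your chained estimate $h(\Delta_k)\lesssim\tfrac14h(a_k-a_k')$ is valid (note the induction actually proves the stronger $1-u^*u\geq 0$, not merely $\gtrsim 0$, which is what you need to conjugate $\eta$ inside a square; your statement of the lemma with $\gtrsim$ is weaker than what the proof gives and uses). The two routes land in the same place ($O(C\cdot 2^C)$ for the clause piece), with the $2^C$ arising from summing over bad $\phi$; yours avoids the extra $2^C$ in the telescoping sum at the price of introducing the projection SoS lemma. Both are sound; yours is a nice self-contained alternative that stays in the projection picture throughout, which could be slightly more robust if one wanted to avoid the order-dependence of the monomial expansion.
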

\begin{proof}
    For each $1 \leq i \leq m$ and $x \in V_i$, choose an index $1 \leq r_{ix} \leq m_i$
    such that $x \in V_{ir_{ix}}$. Also, for each $x,y \in V_i$, choose an index $i_{xy}$
    such that $x,y \in V_{i_{xy}}$. Define $\alpha : *_{i=1}^m \Z_2^{*V_i} \to \mcA(B')$ by
    $\alpha(\sigma_i(x)) = \sigma_{ir_{ix}}(x)$. It follows immediately from
    the definitions that $\alpha$ is a $O(M^2)$-homomorphism $(\mcA_{free}(B), \mu_{inter})
    \to \mcA_{inter}(B',\pi_{sub})$. 
    Moving on to $\mu_{comm}$, observe that if $h(a) = a^*a$ as in \Cref{lem:hermitiansquare} then
    \begin{align*}
    	\alpha(h([\sigma_i(x),\sigma_i(y)])) &= h\left(\sigma_{ir_{ix}}(x)\sigma_{ir_{iy}}(y)-\sigma_{ir_{iy}}(y)\sigma_{ir_{ix}}(x)\right)
    	\\
    	&\leq  
    	4h((\sigma_{ir_{ix}}(x)-\sigma_{i_{xy}}(x))\sigma_{ir_{iy}}(y))+4h(\sigma_{i_{xy}}(x)(\sigma_{i_{xy}}(y)-\sigma_{ir_{iy}}(y)))+
    	\\
    	&\quad +4h((\sigma_{ir_{iy}}(y)-\sigma_{i_{xy}}(y))\sigma_{ir_{ix}}(x))+4h(\sigma_{i_{xy}}(y)(\sigma_{ir_{ix}}(x)-\sigma_{i_{xy}}(x)))
    	\\
    	&\lesssim 8h(\sigma_{ir_{ix}}(x)-\sigma_{i_{xy}}(x))+8h(\sigma_{ir_{iy}}(y)-\sigma_{i_{xy}}(y)),
    \end{align*}
    where we use the fact that $[\sigma_{i_{xy}}(x),\sigma_{i_{xy}}(y)]=0$, and that $U^*a^*aU$ is cyclically equivalent to $a^*a$ if $UU^* = 1$. For any given $x \in V_i$
    and $1 \leq j \leq m_i$, the number of elements $y \in V_i$ with $i_{xy} = j$ is bounded by
    $|V_{ij}|$. Hence
    \begin{equation*}
        \sum_{i} \sum_{x,y \in V_i} \pi(i,i) \alpha(h([\sigma_i(x),\sigma_i(y)]))
            \lesssim O(C M^2) \sum_{i,j,j'} \sum_{x \in V_{ij} \cap V_{ij'}} \frac{\pi(i,i)}{m_i^2} h(\sigma_{ij}(x) - \sigma_{ij'}(x)),
    \end{equation*}
    where $\sigma_{ij} : \C \Z_2^{V_{ij}} \to \mcA(B')$ is the inclusion of the $ij$th factor.
    We conclude that there is an $O(CM^2)$-homomorphism
    $(\mcA_{free}(B),\mu_{comm}) \to \mcA_{inter}(B',\pi_{sub})$.

    Finally, for $\mu_{clause}$, if $i\in [m]$, $j\in [m_i]$, and $\phi \not\in D_{ij}$ then $\sigma_{ij}(\Phi_{V_{ij},\phi}) = 0$, so 
    \begin{align*}
        \alpha(\Phi_{V_{ij},\phi}) & = \alpha(\Phi_{V_{ij},\phi}) -  \sigma_{ij}(\Phi_{V_{ij},\phi})
         \\
         & = \frac{1}{2^{|V_{ij}|}} \sum_{S \subseteq V_{ij}}
            \prod_{x \in S} \phi(x) \sigma_{ir_{ix}}(x) - \frac{1}{2^{|V_{ij}|}} \sum_{S \subseteq V_{ij}}
            \prod_{x \in S} \phi(x) \sigma_{ij}(x)
            \\
        & = \frac{1}{2^{|V_{ij}|}} \sum_{S \subseteq V_{ij}} \sum_{x \in S} u_{x,S} \phi(x)
            (\sigma_{ir_{ix}}(x)  - \sigma_{ij}(x)) v_{x,S},
    \end{align*}
    where $u_{x,S}$ is the product of $\phi(y) \sigma_{ij}(y)$ for $y \in S$ appearing
    before $x$ in the order on $V_i$, and $v_{x,S}$ is the product of $\phi(y) \sigma_{ir_{iy}}(y)$
    for $y \in S$ appearing after $x$ in the order on $V_i$. Since there are less than
    $|V_{ij}| \cdot 2^{|V_{ij}|}$ terms in this sum, and $\phi(x) u_{x,S}$ and $v_{x,S}$ are unitary,
    \begin{align*}
        h(\alpha(\Phi_{V_{ij},\phi})) & \lesssim \frac{2 |V_{ij}|}{2^{|V_{ij}|}} \sum_{S \subseteq V_{ij}}
        \sum_{x \in S} h(\sigma_{ir_{ix}}(x) - \sigma_{ij}(x)) \\
            & = \frac{|V_{ij}|}{2^{|V_{ij}|-1}} \sum_{x \in V_{ij}} \sum_{x \in S \subseteq V_{ij}} 
                h(\sigma_{ir_{ix}}(x) - \sigma_{ij}(x)) 
                \\
            & = |V_{ij}| \sum_{x \in V_{ij}} h(\sigma_{ir_{ix}}(x) - \sigma_{ij}(x)).
    \end{align*}
    Hence
    \begin{align*}
        \sum_{i\in[m],j\in[m_i]} \frac{\pi(i,i)}{m_i^2} \sum_{\phi \not\in D_{ij}} \alpha(h(\Phi_{V_{ij},\phi}))
            & \lesssim \sum_{i,j} \frac{\pi(i,i)}{m_i^2} \sum_{\phi \not\in D_{ij}} C \sum_{x \in V_{ij}}
                h(\sigma_{ir_{ix}(x)} - \sigma_{ij}(x)) \\
            & \leq C \cdot 2^C \sum_{i,j} \frac{\pi(i,i)}{m_i^2} \sum_{x \in V_{ij}} h(\sigma_{ir_{ix}} - \sigma_{ij}(x)).
    \end{align*}
    Since every term in the latter sum occurs in the sum $\sum_{r} \mu'(r) r^*
    r$ for the weight function $\mu'$ of $\mcA_{inter}(B',\pi_{sub})$, $\alpha$
    is a $C \cdot 2^C$-homomorphism $(\mcA_{free}(B),\mu_{clause}) \to
    \mcA_{inter}(B',\pi_{sub})$. We conclude that $\alpha$ is an $O(M^2 + CM^2 + C 2^C)$-homomorphism
    $\mcA_{free}(B,\pi) \to \mcA_{inter}(B',\pi_{sub})$, and $O(M^2 + CM^2 + C 2^C) \leq \poly(M,2^C)$.
\end{proof}

\begin{proof}[proof of \Cref{Thm:subdivSound}]
 Applying \Cref{prop:subdivhom} and \Cref{prop:correction} yields the result.
\end{proof}

\section{Parallel repetition}\label{sec:prep}

Let $\mcG = (I, \{O_i\}_{i \in I}, \pi, V)$ be a nonlocal game. The
\textbf{$n$-fold parallel repetition} of $\mcG$ is the game 
\begin{equation*}
    \mcG^{\otimes n} = (I^n, \{O_{\ul{i}}\}_{\ul{i} \in I^n}, \pi^{\otimes n}, V^{\otimes n}),
\end{equation*}
where
\begin{enumerate}
    \item $I^n$ is the $n$-fold product of $I$, 
    \item if $\ul{i} \in I^n$, then $O_{\ul{i}} := O_{i_1} \times O_{i_2} \times \cdots \times O_{i_n}$,
    \item if $\ul{i},\ul{j} \in I^n$, then $\pi^{\otimes n}(\ul{i},\ul{j}) = \prod_{k=1}^n \pi(i_k,j_k)$, and
    \item if $\ul{i},\ul{j} \in I^n$, $\ul{a} \in O_{\ul{i}}$, $\ul{b} \in O_{\ul{j}}$, then
        $V^{\otimes n}(\ul{a},\ul{b}|\ul{i},\ul{j}) = \prod_{k=1}^n V(a_k,b_k|i_k,j_k)$.
\end{enumerate} 
In other words, the players each receive a vector of questions $\ul{i} =
(i_1,\ldots,i_n)$ and $\ul{j} = (j_1,\ldots,j_n)$ from $\mcG$, and must reply
with a vector of answers $(a_1,\ldots,a_n)$ and $(b_1,\ldots,b_n)$ to each
question. Each pair of questions $(i_k,j_k)$, $1 \leq k \leq n$ is sampled
independently from $\pi$, and the players win if and only if $(a_k,b_k)$ is a
winning answer to questions $(i_k,j_k)$ for all $1 \leq k \leq n$. If $\mcG$
has questions of length $q$ and answers of length $a$, then $\mcG^{\otimes n}$
has questions of length $nq$ and answers of length $na$.

If $p$ is a correlation for $\mcG$, let $p^{\otimes n}$ be the correlation for
$\mcG^{\otimes n}$ defined by
\begin{equation*}
    p^{\otimes n}(\ul{a},\ul{b}|\ul{i},\ul{j}) = \prod_{k=1}^n p(a_k,b_k|i_k,j_k).
\end{equation*}
It is easy to see that $p^{\otimes n}$ is a quantum (resp. commuting operator)
correlation if and only if $p$ is a quantum (resp. commuting operator) correlation, and
that $\omega(\mcG^{\otimes n};p^{\otimes n}) = \omega(\mcG,p)^n$. Hence if $\omega_q(\mcG)=1$
(resp. $\omega_{qc}(\mcG)=1$) then $\omega_q(\mcG^{\otimes n})=1$ (resp.
$\omega_{qc}(\mcG^{\otimes n}) = 1$) as well. If $\omega_q(\mcG) < 1$,
then $\omega_q(\mcG^{\otimes n}) \geq \omega_q(\mcG)^n$ (and the same
for the commuting operator value), but this inequality is not always tight.
However, Yuen's parallel repetition theorem states that the game value goes
down at least polynomially in $n$:
\begin{thm}[\cite{yuen2016parallel}]\label{thm:prep}
    For any nonlocal game $\mcG$, if $\delta = 1 - \omega_q(\mcG) > 0$, then
    $\omega_q(\mcG^{\otimes n}) \leq b/ \poly(\delta,n)$, where $b$ is the length
    of the answers of $\mcG$.
\end{thm}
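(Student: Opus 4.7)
The statement is precisely Yuen's quantum parallel repetition theorem, so in the context of this paper the "proof" will amount to invoking the result as a black box. Nevertheless, the natural plan for an honest proof is to follow the Raz-style information-theoretic strategy adapted to quantum players, and I would proceed as follows.

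The plan is to argue by contradiction: suppose a quantum strategy $\mcS$ for $\mcG^{\otimes n}$ wins with probability $\omega \geq b/\poly(\delta,n)$ exceeding some threshold. The goal is to extract from $\mcS$ a single-copy quantum strategy for $\mcG$ winning with probability strictly greater than $1-\delta=\omega_q(\mcG)$, contradicting the definition of $\omega_q(\mcG)$. To do this, I would first pick a "good" coordinate $i \in [n]$ using an averaging argument: conditioned on the event $W_{-i}$ that the players win in all coordinates $k\neq i$, the marginal success probability in coordinate $i$ should remain above $1-\delta$ for most $i$. The extracted single-copy strategy on input $(u,v)\in I\times I$ will simulate $\mcS$ with a fresh tuple of questions $(\ul i, \ul j)$ whose $i$th coordinates are $(u,v)$, conditioned on $W_{-i}$, and output the $i$th coordinate answer.

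The key technical step is then to show that such conditioning can actually be implemented by a quantum strategy using only local operations plus shared entanglement. Classically this is trivial (public randomness plus rejection sampling), but quantum mechanically one cannot "condition" arbitrary entangled states on non-local events without communication. I would follow Yuen's approach of using an "informationally decoupled" coordinate, obtained via a chain rule for quantum mutual information applied to the registers recording the questions and the post-measurement states. One bounds $\sum_i I(i\text{th question}: \text{other registers})$ by a small quantity, so that for an average $i$ the quantum state conditioned on the other inputs and on $W_{-i}$ is close in trace distance to a product state that Alice and Bob can approximately prepare locally from the fixed strategy. This is where the factor of $b$ (the answer length) enters: the number of bits exchanged in the conditioning is proportional to the answer length, and the quantum analog of Raz's lemma loses polynomially in this bit-count.

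The main obstacle, and the reason this is a nontrivial theorem, is exactly this quantum conditioning step. Classical parallel repetition (Raz, Holenstein) crucially relies on Bayesian conditioning, which has no direct quantum analog because measuring to implement the conditioning disturbs the entangled state the players need for the remaining coordinates. Yuen's argument circumvents this by showing that for a suitable choice of coordinate the disturbance is small in expectation, using a "quantum correlated sampling" primitive together with careful bookkeeping of the entropy/information loss across coordinates. The polynomial dependence on $\delta$ and $n$ (rather than exponential, as one might hope) is the price paid for the weaker quantum conditioning, and improving it to an exponential dependence remains open in general. For the purposes of this paper I would simply cite \cite{yuen2016parallel} and note that the $b$ in the bound is what forces us to use answer-length-efficient constructions in \Cref{sec:stability} when invoking parallel repetition later.
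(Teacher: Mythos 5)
The paper itself gives no proof of this theorem---it is stated as a citation to Yuen \cite{yuen2016parallel} and used as a black box---and you correctly identify this at the outset. Your accompanying sketch of the information-theoretic strategy behind Yuen's result (coordinate selection via averaging, quantum conditioning via correlated sampling, and the role of answer length $b$ in the polynomial loss) is a reasonable and accurate summary of the cited work, but for the purposes of this paper the citation alone is all that is required.
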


Suppose $B = (X,\{(V_i,C_i)\}_{i=1}^m)$ is a BCS and that $\pi$ is a probability
distribution on $[m] \times [m]$. For any $n \geq 1$, let $X^{(n)} := X \times
[n]$, and $V^{(k)}_i = V_i \times \{k\} \subseteq X^{(n)}$. We can think of
$X^{(n)}$ as the disjoint union of $n$ copies of $X$, and $V^{(k)}_i$ as the
copy of $V_i$ from the $k^{th}$ copy of $X$. Since $V^{(k)}_i$ is a copy of
$V_i$, we can identify $\Z_2^{V_i^{(k)}}$ with $\Z_2^{V_i}$ in the natural way. If $\ul{i} \in [m]^n$, let $V_{\ul{i}} = \cup_{j = 1}^k V^{(k)}_{i_j}$ and $C_{\ul{i}} = C_{i_1} \times
\cdots \times C_{i_k} \subseteq \Z_2^{V_{\ul{i}}} = \Z_2^{V^{(1)}_{i_1}} \times
\cdots \times \Z_2^{V_{i_k}^{(k)}}$. Let $B^{(n)} := (X^{(n)},
\{(V_{\ul{i}},C_{\ul{i}})_{\ul{i} \in [m]^n}\})$. Given a distribution $\pi$ on
$[m] \times [m]$, consider the game $\mcG(B^{(n)},\pi^{\otimes n})$, where
$\pi^{\otimes n}$ is the product distribution as above. In this game, the players are
given questions $\ul{i}$ and $\ul{j}$ from $[m]^n$ respectively, and must reply
with elements $\ul{\phi} \in C_{\ul{i}}$ and $\ul{\psi} \in C_{\ul{j}}$
respectively. They win if and only if $\ul{\phi}$ and $\ul{\psi}$ agree on
$V_{\ul{i}} \cap V_{\ul{j}} = \bigcup_{k=1}^n V^{(k)}_{i_k} \cap
V^{(k)}_{j_k}$.  But this happens if and only if $\phi_k$ and $\psi_k$ agree on
$V_{i_k} \cap V_{j_k}$. Thus $\mcG(B^{(n)}, \pi^{\otimes n})$ is the
parallel repetition $\mcG(B,\pi)^{\otimes n}$. We record this in the following
lemma:
\begin{lemma}\label{lem:BCSParallel}
    If $\mcG$ is a BCS game, then so is the parallel repetition $\mcG^{\otimes n}$.
\end{lemma}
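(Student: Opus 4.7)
My plan is to produce an explicit BCS $B^{(n)}$ whose associated game (with the product question distribution) is literally equal, question for question and constraint for constraint, to the parallel repetition $\mcG^{\otimes n}$. The construction is already sketched in the paragraph preceding the lemma statement; I would simply package it as a clean proof.

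First I would take the BCS description $\mcG = \mcG(B,\pi)$ with $B = (X,\{(V_i,C_i)\}_{i=1}^m)$ guaranteed by the hypothesis that $\mcG$ is a BCS game. Next I would form a disjoint union of $n$ copies of $X$ by setting $X^{(n)} := X \times [n]$, so that variables from different rounds cannot accidentally be identified, and for each round $k$ and context index $i$ I would write $V_i^{(k)} := V_i \times \{k\} \subseteq X^{(n)}$. For a question tuple $\ul{i} \in [m]^n$, I would then set $V_{\ul{i}} := \bigcup_{k=1}^n V_{i_k}^{(k)}$ and let $C_{\ul{i}} \subseteq \Z_2^{V_{\ul{i}}}$ be the product $C_{i_1} \times \cdots \times C_{i_n}$ under the canonical identification $\Z_2^{V_{\ul{i}}} \cong \prod_{k=1}^n \Z_2^{V_{i_k}^{(k)}} \cong \prod_{k=1}^n \Z_2^{V_{i_k}}$. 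This gives the BCS $B^{(n)} := (X^{(n)}, \{(V_{\ul{i}}, C_{\ul{i}})\}_{\ul{i} \in [m]^n})$.

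The verification is then just an unfolding of definitions. In $\mcG(B^{(n)}, \pi^{\otimes n})$, the players receive questions $\ul{i},\ul{j} \in [m]^n$ with probability $\pi^{\otimes n}(\ul{i},\ul{j}) = \prod_{k=1}^n \pi(i_k,j_k)$, which matches the question distribution of $\mcG^{\otimes n}$; and their answer sets are $C_{\ul{i}} = \prod_k C_{i_k}$ and $C_{\ul{j}} = \prod_k C_{j_k}$, matching $O_{\ul{i}}$ and $O_{\ul{j}}$ in the parallel repetition. Because the copies $V_i^{(k)}$ for different $k$ live on disjoint variable sets,
\begin{equation*}
    V_{\ul{i}} \cap V_{\ul{j}} = \bigcup_{k=1}^n (V_{i_k}^{(k)} \cap V_{j_k}^{(k)}),
\end{equation*}
so the BCS winning predicate --- agreement of the two answer tuples on $V_{\ul{i}} \cap V_{\ul{j}}$ --- factorizes as the conjunction of agreement on $V_{i_k} \cap V_{j_k}$ for each $k$, which is precisely the coordinate-wise predicate $\prod_{k=1}^n V(a_k,b_k|i_k,j_k)$ defining $\mcG^{\otimes n}$. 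Thus $\mcG(B^{(n)}, \pi^{\otimes n}) = \mcG^{\otimes n}$, exhibiting $\mcG^{\otimes n}$ as a BCS game.

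There is essentially no obstacle here: the only thing to be a bit careful about is the use of disjoint copies $X \times [n]$ rather than a single copy of $X$, since reusing variables across rounds would incorrectly force coordinate-wise consistency of answers for repeated questions across rounds and would change the game. Once the variables are kept separate, the product structure of $\pi^{\otimes n}$ and $V^{\otimes n}$ lines up with the intersection calculation above, and the proof is complete.
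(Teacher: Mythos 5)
Your proof is correct and follows exactly the same construction the paper gives in the paragraph preceding the lemma: form $B^{(n)}$ on the disjoint-union variable set $X \times [n]$ with contexts $V_{\ul{i}} = \bigcup_k V_{i_k}^{(k)}$ and product constraints $C_{\ul{i}}$, and check that the intersection $V_{\ul{i}} \cap V_{\ul{j}}$ decomposes round by round so the agreement predicate factorizes. Your closing remark about why the disjointness of the $n$ copies matters is a nice clarification but does not change the argument.
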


To illustrate the purpose of parallel repetition, suppose that $(\{\mcG_x\}, S, V)$ is a $\MIP^*(2,1,1,s)$-protocol for
a language $\mcL$, where $\mcG_x = (I_x, \{O_{xi}\}, \pi_x, V_x)$ and has
answer length $a_x$. If $n_x$ is a polynomial in $|x|$, then $\pi_x^{\otimes
n_x}$ can be sampled in polynomial time by running $S$ independently $n$ times,
and $V_x^{\otimes n_x}$ can also be computed in polynomial time by running $V$
repeatedly. If $S^{\otimes n_x}$ and $V^{\otimes n_x}$ are these Turing
machines for sampling $\pi_x^{\otimes n_x}$ and computing $V_x^{\otimes n_x}$
respectively, then $(\{\mcG_x^{\otimes n_x}\}, S^{\otimes n_x}, V^{\otimes
n_x})$ is a $\MIP^*(2,1,1,s')$-protocol for $\mcL$, where $s' = a_x /
\poly(1-s) \cdot \poly(n_x)$. Since $a_x$ is polynomial in $|x|$, if $1-s =
1/\poly(|x|)$, then we can choose $n_x$ such that $s'$ is any constant $< 1$. By \Cref{lem:BCSParallel} the same can be done for $\BCS$-$\MIP^*$.

\section{Perfect zero knowledge}\label{sec:pzk}
An $\MIP$ protocol is perfect zero knowledge if the verifier gains no new
information from interacting with the provers.  If the players' behaviour in a
game $\mcG = (I,\{O_i\}_{i \in I},\pi,V)$ is given by the correlation $p$, then
what the verifier (or any outside observer) sees is the distribution
$\{\pi(i,j)p(a,b|i,j)\}$ over tuples $(a,b|i,j)$.  Consequently a
$\MIP^*$-protocol $(\{\mcG_x\}, S, V)$ is said to be perfect zero-knowledge
against an honest verifier if the players can use correlations $p_x$ for
$\mcG_x$ such that the distribution $\{\pi(i,j)p(a,b|i,j) \}$ can be sampled in
polynomial time in $|x|$. However, a dishonest verifier seeking to get more
information from the players might sample the questions from a different
distribution $\pi'$ from $\pi$. To be perfect zero-knowledge against a
dishonest verifier, it must be possible to efficiently sample
$\{\pi'(i,j)p_x(a,b|i,j)\}$ for any efficiently sampleable distribution $\pi'$,
and this is equivalent to being able to efficiently sample from
$\{p_x(a,b|i,j)\}_{(a,b) \in O_i \times O_j}$ for any $i,j$. This leads to the
definition (following \cite[Definition 6.3]{coudron2019complexity}):

\begin{definition}\label{def:PZK}
    Let $\mcP = (\{\mcG_x\}, S, V)$ be a two-prover one-round $\MIP^*$ protocol for a
    language $\mcL$ with completeness $c$ and soundness $s$, where $\mcG_x =
    (I_x,\{O_{xi}\}, \pi_x, V_x)$. The protocol $\mcP$ is \textbf{perfect zero
    knowledge} if for every string $x$, there is
    a correlation $p_x$ for $\mcG_x$ such
    that
	\begin{enumerate}
		\item for all $i,j \in I_x$, the distribution $\{p_x(a,b|i,j)\}$ can be sampled in polynomial time in $|x|$, and
		\item if $x\in \mcL$ then $p_x \in C_{qa}$ and $\omega(\mcG_x,p_x)=1$.
    \end{enumerate}
	The class $\PZK$-$\MIP^*(2,1,c,s)$ is the class of languages with a perfect zero knowledge
    two-prover one round $\MIP^*$ protocol with completeness $c$ and soundness $s$.
\end{definition}
By replacing $C_{qa}$ with $C_{qc}$, we get another class
$\PZK$-$\MIP^{co}$. If we replace $\MIP^*$ protocols with $\BCS$-$\MIP^*$ (resp. $\BCS$-$\MIP^{co}$) protocols and $C_{qa}$ with $C^s_{qa}$ (resp. $C^s_{qc}$) we get the class $\PZK$-$\BCS$-$\MIP^*$ (resp. $\PZK$-$\BCS$-$\MIP^{co}$).

For the one-round protocols that we are considering, parallel repetition
preserves the property of being perfect zero knowledge. 
\begin{proposition}
    Let $(\{\mcG_x\},S,V)$ be a $\PZK$-$\MIP^*(2,1,1,s)$ protocol, and let $n_x$ be a 
    polynomial function of $|x|$. 
    Then the parallel repeated protocol $(\{\mcG_x^{\otimes n_x}\},S^{\otimes n_x}, V^{\otimes n_x})$
    is also perfect zero knowledge. 
\end{proposition}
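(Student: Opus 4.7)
The plan is to take the obvious candidate correlation for the repeated game, namely the tensor power of the witness correlation from the original protocol, and verify each clause of \Cref{def:PZK} directly. Concretely, let $p_x$ be the correlation for $\mcG_x$ guaranteed by the assumption that $(\{\mcG_x\},S,V)$ is perfect zero knowledge, and define $p_x^{\otimes n_x}$ on $\mcG_x^{\otimes n_x}$ by
\begin{equation*}
	p_x^{\otimes n_x}(\ul{a},\ul{b}|\ul{i},\ul{j}) = \prod_{k=1}^{n_x} p_x(a_k,b_k|i_k,j_k),
\end{equation*}
as in \Cref{sec:prep}. I would then verify the three requirements in turn.

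First, efficient sampleability: on input $(\ul{i},\ul{j}) \in I_x^{n_x} \times I_x^{n_x}$, the simulator invokes the polynomial-time sampler for $p_x(\cdot,\cdot|i_k,j_k)$ independently for each coordinate $k=1,\ldots,n_x$ and concatenates the outputs. Since $n_x$ is polynomial in $|x|$ and each sample runs in polynomial time in $|x|$, the product distribution is sampleable in polynomial time in $|x|$ as well. Second, quantum approximability: as noted at the start of \Cref{sec:prep}, $p^{\otimes n}$ is quantum (resp. commuting operator, quantum approximable) whenever $p$ is, so $p_x \in C_{qa}$ implies $p_x^{\otimes n_x} \in C_{qa}$ — the quantum-approximable case follows because the tensor product of finite-dimensional quantum strategies approximating $p_x$ gives finite-dimensional quantum strategies approximating $p_x^{\otimes n_x}$. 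Third, perfect winning on YES instances: if $x \in \mcL$, then $\omega(\mcG_x;p_x) = 1$, and by the multiplicativity identity $\omega(\mcG^{\otimes n};p^{\otimes n}) = \omega(\mcG;p)^n$ recorded in \Cref{sec:prep}, we obtain $\omega(\mcG_x^{\otimes n_x};p_x^{\otimes n_x}) = 1$.

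There is no significant obstacle here; the only thing to be mildly careful about is that the PZK definition demands sampleability of $p_x(\cdot,\cdot|i,j)$ for \emph{every} pair of questions $(i,j)$ (not only those in the support of $\pi_x$), which is exactly what is needed for the coordinatewise sampling on $\mcG_x^{\otimes n_x}$ to work, since an arbitrary coordinate $(i_k,j_k)$ may lie outside $\supp(\pi_x)$. With that observation, the three clauses above give a proof that the repeated protocol is perfect zero knowledge, and combined with \Cref{thm:prep} (invoked in \Cref{sec:prep} to bound soundness), parallel repetition is compatible with both the PZK and soundness requirements.
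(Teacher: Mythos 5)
Your proposal is correct and takes essentially the same approach as the paper: both use the product correlation $p_x^{\otimes n_x}$ and verify the two clauses of \Cref{def:PZK} directly, with coordinatewise sampling giving efficiency and multiplicativity plus closure of $C_{qa}$ under tensor products giving the second clause. Your added remark that \Cref{def:PZK} requires sampleability for \emph{all} question pairs (not just those in $\supp(\pi_x)$), which is exactly what coordinatewise sampling needs, is a nice observation that the paper leaves implicit.
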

\begin{proof}
    Let $p_x$ be a correlation for the game $\mcG$ that satisfies the two
    requirements of \Cref{def:PZK}. Then
    $\{p_x^{\otimes n_x}(\ul{a},\ul{b}|\ul{i},\ul{j})\}_{\ul{a},\ul{b}}$ can be
    sampled in polynomial time in $|x|$ for all $\ul{i}$, $\ul{j}$ by independently sampling from $\{p_x(a,b,i_{\ell},j_{\ell})\}_{a,b}$ for each pair $(i_{\ell},j_{\ell})$ from $\ul{i} = (i_{1},\dots,i_{n_{x}})$ and $\ul{j} = (j_{1},\dots,j_{n_{x}})$. If $x \in \mcL$, then
    $\omega(\mcG_x^{\otimes n_x}; p_x^{\otimes n}) = 1$, and it is not hard to see
    that $p_x^{\otimes n_x} \in C_{qa}$.
\end{proof}
We will now prove our main result that any proof system in
$\BCS$-$\MIP^*$ or $\BCS$-$\MIP^{co}$ can be turned into a perfect zero
knowledge $\BCS$-$\MIP^*$ or $\BCS$-$\MIP^{co}$ protocol. For this purpose, we
use the perfect zero knowledge proof system for 3SAT due to Dwork, Feige,
Kilian, Naor, and Safra \cite{Dwork1992LowC2}, slightly modified for the proof
of quantum soundness.  For the construction, we assume that we start with a
$\BCS$-$\MIP^*$ protocol (and in the proof of \Cref{thm:main}, this will be a
3SAT-$\MIP^*$ protocol).  Following \cite{Dwork1992LowC2}, the new proof system
is constructed in three steps. First, we apply a transformation called
oblivation, then turn the resulting system into a permutation branching program
via Barrington's theorem \cite{Barrington86}, and finally rewrite the
permutation branching programs using the randomizing tableaux of Kilian
\cite{Kilian1990UsesOR}.  We start by describing obliviation.
\begin{definition}\label{def:obliviation}
    Given a BCS $B = (X,\{(V_i,C_i)\}^m_{i=1})$ and $n \geq 1$, let $Z =
    X\times [n]$, and $U_i = V_i\times [n]$ for any $1\leq
    i\leq m$. To make the elements of $Z$ look more like variables, we denote
    $(x,i)$ by $x(i)$. Let $E_i \subseteq \Z_2^{U_i}$ be the set of
    assignments $\phi$ to $U_i$ such that the assignment $\psi$ to $V_i$
    defined by $\psi(x) = \psi(x(1))\cdots \psi(x(n))$ is in $C_i$. The
    \textbf{obliviation of $B$ of degree $n$} is the constraint system
    $\Obl_n(B) = (Z,\{(U_i,E_i)\}_{i=1}^m)$. 
\end{definition}
The point of obliviation is the following:
\begin{lemma}\label{lem:obliviate}
    Suppose $B = (X,\{(V_i,C_i)\}^m_{i=1})$ is a BCS, and let $B' = \Obl_n(B)$
    for some $n \geq 1$, using the notation from \Cref{def:obliviation}. Then:
    \begin{enumerate}[(a)]
        \item There is a classical homomorphism $\alpha : \mcA(B) \to \mcA(B')$
            such that $\alpha(\sigma_i(x)) = \sigma_i(x(1)\cdots x(n))$ for all
            $i \in [m]$ and $x \in V_i$, where $\sigma_i$ is the inclusion of
            the $i$th factor for $\mcA(B)$ and $\mcA(B')$. 
    
        \item Let $\Gamma$ be the set of sequences $x_1,\ldots,x_k$ in $Z$ of
            length $1 \leq k \leq n-1$, such that there is some $i \in [k]$
            with $x_i \neq x_j$ for all $j \in [k] \setminus \{i\}$.  If $\pi$
            is a probability distribution on $[m] \times [m]$, and $\tau$ is a
            tracial state on $\mcA(B)$, then there
            is a tracial state $\wtd{\tau}$ on $\mcA(B')$ such that $\tau =
            \wtd{\tau} \circ \alpha$, $\df(\wtd{\tau};\mu_\pi) = \df(\tau;\mu_\pi)$, and
            $\wtd{\tau}( \sigma_{i_1}(x_1) \cdots \sigma_{i_k}(x_k)) = 0$ for
            all sequences $x_1,\ldots,x_k$ in $\Gamma$ and indices $i_1,\ldots,i_k \in [m]$
            such that $x_j \in U_{i_j}$ for all $1 \leq j \leq k$. If $\tau$
            is finite-dimensional (resp. Connes-embeddable), then $\wtd{\tau}$
            is also finite-dimensional (resp. Connes-embeddable). 

        \item For any $1 \leq i \leq m$, the set $\{ \prod_{x \in S} x 
                : S \subseteq U_i, |S| < n/2\}$ of monomials in $U_i$ of
                degree less than $n/2$ is linearly independent in $\mcA(U_i,E_i)$. 
        \end{enumerate}
\end{lemma}
In particular, if $\tau$ is perfect then $\wtd{\tau}$ is perfect. 
\begin{proof}
    Define $f_i : \Z_2^{U_i} \to \Z_2^{V_i}$ for each $i \in [m]$ by
    $f_i(\phi)(x) = \phi(x(1)) \cdots \phi(x(n))$ for $\phi \in \Z_2^{U_i}$
    and $x \in V_i$. By definition, $\phi \in E_i$ if and only
    if $f_i(\phi) \in C_i$, so $f_i(E_i) = C_i$. If $f_i(\phi)(x) \neq 
    f_j(\psi)(x)$ for some $\phi \in \Z_2^{U_i}$, $\psi \in \Z_2^{U_j}$, 
    and $x \in V_i \cap V_j$, then we must have $\phi(x(i)) \neq \psi(x(i))$
    for some $i$. Since
    \begin{equation*}
        \sigma_i(x(1)\cdots x(n)) = \sum_{\phi \in \Z_2^{U_i}} f_i(\phi)(x)  \Phi_{U_i,\phi}
    \end{equation*}
    for all $x \in V_i$, $i \in [m]$, the functions $f_i$ correspond to a
    classical homomorphism $\alpha : \mcA(B) \to \mcA(B')$ with
    $\alpha(\sigma_i(x)) = \sigma_i(x(1)\cdots x(n))$ for all $i \in [m]$ and
    $x \in V_i$. This proves part (a).

    Conversely, given $y \in \Z_2^{X \times [n-1]}$ and $\phi \in \Z_2^{V_i}$,
    define $\phi_y \in \Z_2^{U_i}$ by $\phi_y(x(1)) = \phi(x) y(x,1)$,
    $\phi_y(x(j)) = y(x,j-1) y(x,j)$ for $2 \leq j \leq n-1$, and $\phi_y(x(n))
    = y(x,n-1)$. Since $f_i(\phi_y) = \phi$, the function $\phi \mapsto \phi_y$ sends $C_i$
    to $E_i$. Also if $\phi \in \Z_2^{V_i}$ and $\psi \in \Z_2^{V_j}$, then
    $\phi_y |_{U_i \cap U_j} \neq \psi_y |_{U_i \cap U_j}$ if and only if
    $\phi|_{V_i \cap V_j} \neq \psi |_{V_i \cap V_j}$, so the functions
    $\phi \mapsto \phi_y$ determine a classical homomorphism $\beta_y : \mcA(B')
    \to \mcA(B)$ with $\beta_y(\sigma_i(x(1))) = \sigma_i(x) y(x,1)$, $\beta_y(\sigma_i(x(j)))
    = y(x,j-1) y(x,j)$ for $2 \leq j \leq n-1$, and $\beta_y(\sigma_i(x(n))) = y(x,n-1)$
    for all $i \in [m]$ and $x \in V_i$. 

    Given a tracial state $\tau$ on $\mcA(B)$, define a tracial state $\wtd{\tau}$ on $\mcA(B')$ by 
        $\wtd{\tau} = 2^{-|X|(n-1)} \sum_{y} \tau \circ \beta_y$, where the sum is over all
    $y \in \Z_2^{X \times [n-1]}$. Notice that if $\tau$ is finite-dimensional
    (resp.  Connes-embeddable), then $\wtd{\tau}$ is also finite-dimensional (resp.
    Connes-embeddable).
    Since $\beta_y \circ \alpha$ is the identity on $\mcA(B)$, $\wtd{\tau} \circ \alpha = \tau$.
    Since $\beta_y$ and $\alpha$ are $1$-homomorphisms, 
    \begin{equation*}
        \df(\tau \circ \beta_y; \mu_{\pi}) \leq \df(\tau;\mu_{\pi}) = \df(\tau \circ \beta_y \circ \alpha;\mu_{\pi})
        \leq \df(\tau \circ \beta_y; \mu_{\pi})
    \end{equation*}
    for any $y$, so $\df(\tau \circ \beta_y; \mu_{\pi}) = \df(\tau;\mu_{\pi})$
    and hence $\df(\wtd{\tau}; \mu_{\pi}) = \df(\tau;\mu_{\pi})$.  
    
    Finally, if $x_1,\ldots,x_k$ is a sequence in $Z$, and
    $i_1,\ldots,i_k$ is a sequence in $[m]$ such that $x_j \in U_{i_j}$, then 
    there is an element $a \in \mcA(B)$ and set $S \subseteq X \times [n-1]$
    such that
    \begin{equation*}
        \beta_y(\sigma_{i_1}(x_1) \cdots \sigma_{i_k}(x_k)) = m_y \tau(a)
    \end{equation*}
    for all $y \in \Z_2^{X \times [n-1]}$, where $m_y := \prod_{(x,j) \in S} y(x,j)$.
    If $x_1,\ldots,x_k$ is in $\Gamma$, then $S$ is non-empty, and $\sum_{y} m_y = 0$. Hence
    \begin{equation*}
        \wtd{\tau} (\sigma_{i_1}(x_1) \cdots \sigma_{i_k}(x_k)) = 2^{-|X|(n-1)} \sum_{y} m_y \tau(a) = 0.
    \end{equation*}
    This proves part (b).

    For part (c), pick a tracial state $\tau$ on the finite-dimensional
    $C^*$-algebra $\mcA(V_i,C_i)$ (since $C_i$ is non-empty, this algebra is
    non-trivial). As in the proof of part (b), we can define a tracial state
    $\wtd{\tau} = 2^{|X|(n-1)} \sum_{y} \tau \circ \beta_y$ on $\mcA(U_i,E_i)$
    with the property that $\wtd{\tau}(x_1 \cdots x_k) = 0$ if $1 \leq k \leq
    n-1$ and $x_1,\ldots,x_k \in U_i$ are
    distinct. If $S, T \subseteq U_i$, then
    \begin{equation*}
        \prod_{x \in S} x \cdot \prod_{x \in T} x = \prod_{x \in S \Delta T} x,
    \end{equation*}
    where $S \Delta T := (S \cup T) \setminus (S \cap T)$. If $|S|, |T| < n/2$, then
    $|S \Delta T| < n$, and $S \Delta T = \emptyset$ if and only if $S = T$. Hence
    by part (b), 
    \begin{equation*}
        \wtd{\tau}(\prod_{x \in S} x \cdot \prod_{x \in T} x) 
            = \begin{cases} 1 & S = T \\
                            0 & S \neq T 
                \end{cases}.
    \end{equation*}
    It follows that the monomials $\{\prod_{x \in S} x : S \subseteq U_i, |S| < n/2\}$
    are linearly independent. 
\end{proof}

A \textbf{permutation branching
program} of width $5$ and depth $d$ on a set of variables $X$  is a tuple $P =
(X, \{(x_i,\pi_{1}^{(i)},\pi_{-1}^{(i)})\}_{i=1}^{d}, \sigma)$ where 
$x_i \in X$ and $\pi_1^{(i)}, \pi_{-1}^{(i)}$ are elements of the
permutation group $S_5$ for all $1 \leq i \leq d$, and $\sigma \in S_5$ is a
5-cycle. A permutation branching program $P$ defines a map $P : \Z_2^{X} \to
S_5$ via $P(\phi) = \prod_{i=1}^d \pi^{(i)}_{\phi(x_i)}$. A program $P$
\textbf{recognizes a constraint $C \subseteq \Z_2^{X}$} if $P(\phi) = \sigma$
for all $\phi \in C$, and $P(\phi) = e$ for all $\phi \not\in C$, where $e$ is
the identity in $S_5$.
\begin{theorem}[Barrington \cite{Barrington86}]
    Suppose a constraint $C \subseteq \Z_2^X$ is recognized by a depth $d$ fan-in 2
    boolean circuit. Then $C$ is recognized by a permutation branching program
    of depth $4^d$ on the variables $X$. 
\end{theorem}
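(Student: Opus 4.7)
The plan is to prove Barrington's theorem by induction on the depth $d$ of the boolean circuit, following the standard approach. The key observation driving the whole argument is that $S_5$ contains a pair of $5$-cycles whose commutator is again a $5$-cycle: for example, one can take $\alpha = (1\,2\,3\,4\,5)$ and $\beta = (1\,3\,5\,4\,2)$ and verify that $[\alpha,\beta] = \alpha\beta\alpha^{-1}\beta^{-1}$ is a $5$-cycle. Any two $5$-cycles in $S_5$ are conjugate, which will let us freely swap out the output $5$-cycle $\sigma$ at no cost in depth.

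The base case $d=0$ handles a single literal. If the circuit is the variable $x \in X$, take the program of depth $1$ with instruction $(x,\sigma,e)$, which outputs $\sigma$ on input $\phi(x)=1$ (true) and $e$ on input $\phi(x)=-1$ (false). A negated literal is handled by swapping the two permutations. For the inductive step, suppose every depth $d-1$ circuit can be recognized by a depth $4^{d-1}$ permutation branching program, with any prescribed target $5$-cycle. A $\neg$-gate at the top costs nothing: if $P$ recognizes $C$ with cycle $\sigma$, then replacing $\sigma$ by $e$ in the output and composing with a trivial program that always outputs $\sigma$ turns this into recognition of $\neg C$, still at depth $4^{d-1} \leq 4^d$. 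An $\vee$-gate can be rewritten as $\neg(\neg A \wedge \neg B)$, so the real work is the $\wedge$-gate.

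For $\wedge$, I would use the commutator trick. Given subcircuits computing $C_1$ and $C_2$, by induction construct programs $P_1$ recognizing $C_1$ with output cycle $\alpha$ and $P_2$ recognizing $C_2$ with output cycle $\beta$, each of depth at most $4^{d-1}$; to get recognition with a specified $5$-cycle, conjugate the instructions of the programs by the appropriate element of $S_5$, which preserves depth and correctness because conjugation is a group automorphism that fixes the identity. Let $P_1^{-1}$ denote the program obtained by reversing the order of instructions and inverting each permutation pair, so that $P_1^{-1}(\phi) = P_1(\phi)^{-1}$. Concatenating $P_1 P_2 P_1^{-1} P_2^{-1}$ yields a program of depth $4 \cdot 4^{d-1} = 4^d$ that evaluates to $\alpha \beta \alpha^{-1} \beta^{-1} = [\alpha,\beta]$ when both $C_1$ and $C_2$ hold, and to the identity $e$ whenever either fails (since either the $\alpha\alpha^{-1}$ or the $\beta\beta^{-1}$ pair collapses). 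So this composite program recognizes $C_1 \wedge C_2$ with output $5$-cycle $[\alpha,\beta]$, and one further conjugation rewrites the output as any desired $5$-cycle.

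The main obstacle to watch for is the algebraic lemma underlying the commutator trick: one must check not only that $[\alpha,\beta]$ is a $5$-cycle for a specific pair, but that for every target $5$-cycle $\sigma$ there exist $5$-cycles $\alpha,\beta$ with $[\alpha,\beta]=\sigma$, so that the inductive hypothesis can be applied with the right output cycles. This follows from the conjugacy of $5$-cycles in $S_5$ together with a single explicit witness. Once that is in place, the depth bookkeeping is immediate: depth quadruples at each $\wedge$-gate, is unchanged at each $\neg$-gate, and $\vee$ reduces to $\wedge$ and $\neg$ via De Morgan, giving a final depth of $4^d$.
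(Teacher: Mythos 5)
The paper does not prove this theorem; it is cited as \cite{Barrington86} and used as a black box. So there is no ``paper proof'' to compare against, and the relevant question is simply whether your argument is correct. It is: what you give is the standard commutator-trick proof of Barrington's theorem, and all the essential pieces are in place --- the base case, the observation that negation is free (folding $\sigma^{-1}$ into the final instruction leaves the depth unchanged), the reduction of $\vee$ to $\wedge$ and $\neg$ by De Morgan, the quadrupling at each $\wedge$ via $P_1 P_2 P_1^{-1} P_2^{-1}$, and the use of conjugation to retarget the output $5$-cycle. The algebraic fact that some pair of $5$-cycles has a $5$-cycle commutator, together with the conjugacy of all $5$-cycles in $S_5$, is exactly the lemma that makes the induction close, and your explicit witness $\alpha = (1\,2\,3\,4\,5)$, $\beta = (1\,3\,5\,4\,2)$ does produce a $5$-cycle commutator.

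Two small points worth fixing. First, the paper uses the $\{\pm 1\}$ convention with $-1$ denoting \emph{true} and $+1$ denoting \emph{false}, so in your base case the instruction should be $(x, e, \sigma)$ rather than $(x,\sigma,e)$ if you want $\sigma$ on the satisfying assignment --- this is just a swap of the two permutations and does not affect anything downstream. Second, the phrase ``composing with a trivial program that always outputs $\sigma$'' for the $\neg$-gate would strictly add one instruction (depth $4^{d-1}+1$, still within $4^d$ for $d \geq 1$); the cleaner and truly depth-free move is to right-multiply the permutations of the last instruction by $\sigma^{-1}$, which realizes $\phi \mapsto P(\phi)\sigma^{-1}$ without any extra instruction. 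Neither issue affects the correctness of the final bound.
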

For the rest of the section, we assume that we have a canonical way of turning
constraints described by fan-in 2 boolean circuits into permutation branching
programs using Barrington's theorem. 

The final ingredient is randomizing tableaux, which are described using
constraints of the form $x_1 \cdots x_n = \gamma$, where the variables
$x_1,\ldots,x_n$ take values in $S_5$, $\gamma$ is a constant in $S_5$, and the
product is the group multiplication. Since $|S_5|=120 < 2^7$, we can encode
permutations as bit strings of length $7$ by choosing an enumeration
$S_5=\{e=\gamma_0, \ldots,\gamma_{119}\}$, and identifying $\gamma_j$ by its
index $j$ in binary. This means that any permutation-valued variable can be
represented by $7$ boolean variables, and similarly a permutation-valued
constraint $x_1 \cdots x_n = \gamma$ can be rewritten as the constraint on $7n$
boolean variables which requires the boolean variables corresponding to $x_i$
to encode a permutation value, and the product of all the permutations to be
equal to $\gamma$. Since we want our final output to be a boolean constraint system,
we use permutation-valued variables and permutation-valued constraints as
short-hand for boolean constraint systems constructed in this way.  We can now
define randomizing tableaux, still following \cite{Dwork1992LowC2} with small
modifications.
\begin{definition}\label{def:Tab}
    Let $B = (X,\{(V_i,C_i)\}_{i=1}^m)$ be a BCS, where each $C_i$ is described
    by a fan-in 2 boolean circuit. Let $P_i = (V_i,\{(x_{ij},\pi_{1}^{(ij)},
    \pi_{-1}^{(ij)})\}_{j=1}^{d_i},\sigma_i)$ be the permutation branching program
    recognizing $C_i$. For each $i \in [m]$, let
	\begin{equation*}
		W_i = V_{i}\sqcup\{T_{i}(p,q) : (p,q) \in [4] \times [d_i]\} \sqcup \{r_{i}(j,k): (j,k) \in [3] \times [d_i-1]\},
    \end{equation*}
    where $T_i(p,q)$ and $r_i(j,k)$ are new permutation-valued variables (and thus
    represent 7 boolean variables each), and let
	\begin{equation*}
        Y = X\sqcup\{T_{i}(p,q),r_{i}(j,k):(i,p,q,j,k)\in{[m]\times[4]\times[d_i]\times[3]\times[d_i-1]}\}
    \end{equation*}
    be the union of all the original and new variables. The variables $T_i(p,q)$
    are called tableau elements, and the variables $r_i(j,k)$ are called randomizers.

    Let $D_i$ be the constraint on variables $W_i$ which is the conjunction of the
    following clauses:
	\begin{enumerate}
		\item $T_{i}(1,q) = \pi^{(iq)}_{x_q}$ for all $q \in [d_i]$,
		\item $T_i(p+1,q) = r_i(p,q-1)^{-1}T_i(p,q)r_i(p,q)$ for $q \in [d_i]$ and $p \in [3]$, where
            we use the notation $r_i(p,0) = r_i(p,d_i) = e$, 
		\item $\prod_{1\leq q\leq d_i}T_i(4,q) = \sigma_i$, and
        \item a trivial constraint (meaning that all assignment are allowed) on any pair $x,y$
            of original or permutation-valued variables which do not appear in one of the above
            constraints.
	\end{enumerate}
    The \textbf{tableau} of $B$ is $\Tab(B) = (Y,\{(W_i,D_i)\}_{i=1}^m)$, interpreted as a 
    boolean constraint system. We further let $\{W_{ij},D_{ij})\}_{j=1}^{m_i}$
    be a list of the clauses in (1)-(4) making up $D_i$. The \textbf{subdivided
    tableau} of $B$ is $\Tab_{sub}(B) = (Y,\{(W_{ij},D_{ij})\}_{i\in [m], j\in
    [m_i]})$.
\end{definition}
Compared to \cite{Dwork1992LowC2}, we've added the trivial constraints (4), as
well as an extra row of the tableau.  As mentioned above, the product in the
constraints on the permutation-valued variables in parts (1)-(4) of the
definition is the group product in $S_5$. The constraints in part (1) involve
both original variables $x_q$ and permutation-valued variables $T_i(1,q)$, and
say that the value of $T_i(1,q)$ is either $\pi^{(iq)}_{1}$ or
$\pi^{(iq)}_{-1}$ depending on the value of $x_q$.  In part (4), $x$ and $y$
can be either an original or a permutation-valued variable. If one of them is a
permutation-valued variable, then all the corresponding boolean variables
encoding the permutation-valued variable are included in the constraint (so the
constraint on $x$ and $y$ may involve up to $14$ boolean variables). Since the
constraints in part (4) are trivial, they do not contribute to $D_i$, but they
are included in the list of clauses $(W_{ij},D_{ij})$ of the subdivided
tableau. The point of the constraints in part (4) is that, with them,
$\Tab_{sub}(B)$ is a subdivision of $\Tab(B)$. The extra row of the tableau is
needed to compensate for the inclusion of these constraints in $\Tab_{sub}(B)$
(see \Cref{rmk:prooffails}).
As in \cite{Dwork1992LowC2}, the constraints $D_i$ encode the constraints $C_i$
as follows:
\begin{lemma}[\cite{Dwork1992LowC2}]\label{lem:classicalRT}
    Suppose $B = (X,\{(V_i,C_i)\}_{i=1}^m)$ is a BCS, and let $\Tab(B) =
    (Y,\{(W_i,D_i)\}_{i=1}^m)$.  If $\psi \in D_i$, then $\psi|_{V_i} \in C_i$.
    Conversely, if $r \in S_5^{R_i}$, where $R_i = \{r_i(j,k) : (j,k) \in [3]
    \times [d_i]\}$ is the set of randomizers in $W_i$, and $\phi \in C_i$,
    then there is a unique element $\phi_r \in D_i$ such that $\phi_r|_{V_i}
    = \phi$ and $\phi_r|_{R_{i}} = r$.
\end{lemma}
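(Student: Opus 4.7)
The plan is to exploit the telescoping identity built into the randomization layer of the tableau, together with the key property that Barrington's program recognizes $C_i$ (it outputs $\sigma_i$ on satisfying assignments and the identity $e$ on non-satisfying ones).

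For the forward direction, fix $\psi \in D_i$ and let $\phi := \psi|_{V_i}$. I would first evaluate
\begin{equation*}
    \prod_{q=1}^{d_i} T_i(1,q) = \prod_{q=1}^{d_i} \pi_{\phi(x_q)}^{(iq)} = P_i(\phi),
\end{equation*}
using clause (1). Then I would show that the randomization constraints in clause (2) force $\prod_q T_i(p+1,q) = \prod_q T_i(p,q)$ for $p = 1,2$: writing out the product $\prod_q r_i(p,q-1)^{-1} T_i(p,q) r_i(p,q)$, the factors $r_i(p,q)\cdot r_i(p,q)^{-1}$ at adjacent positions cancel, and the boundary factors $r_i(p,0) = r_i(p,d_i) = e$ make the whole thing collapse to $\prod_q T_i(p,q)$. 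Iterating gives $\prod_q T_i(3,q) = P_i(\phi)$. Clause (3) then asserts $P_i(\phi) = \sigma_i$, and since $P_i$ recognizes $C_i$, this forces $\phi \in C_i$ (the only other possibility would be $P_i(\phi) = e \neq \sigma_i$).

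For the converse, given $\phi \in C_i$ and $r \in S_5^{R_i}$, I would build $\phi_r$ by specifying its values on each class of variables in $W_i$ and checking that every clause is forced. Set $\phi_r|_{V_i} := \phi$ and $\phi_r|_{R_i} := r$. Then clause (1) determines $T_i(1,q)$ uniquely as $\pi_{\phi(x_q)}^{(iq)}$, and clause (2), applied first with $p=1$ and then $p=2$, uniquely determines $T_i(2,q)$ and $T_i(3,q)$ from the already-fixed $T_i(1,q)$ and the randomizers $r$. So there is at most one candidate $\phi_r$; it remains to check clause (3). But the telescoping argument from the forward direction applies equally in the reverse direction, giving $\prod_q T_i(3,q) = \prod_q T_i(1,q) = P_i(\phi) = \sigma_i$ (now using $\phi \in C_i$ and the recognition property of $P_i$). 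The trivial clauses of type (4) impose no condition, so $\phi_r \in D_i$.

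No real obstacle is anticipated: the telescoping is a one-line group-theoretic calculation, and Barrington's recognition property plus the explicit structure of clauses (1)--(3) does the rest. The only point to state carefully is the boundary convention $r_i(p,0) = r_i(p,d_i) = e$, which is exactly what makes the randomizers cancel across the whole row so that each row of the tableau evaluates to the same product $P_i(\phi)$.
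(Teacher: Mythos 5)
Your proof is correct, and since the paper cites \cite{Dwork1992LowC2} for this lemma without reproducing an argument, your write-up supplies the details the paper leaves to that reference. The telescoping identity $\prod_q T_i(p+1,q) = r_i(p,0)^{-1}\bigl(\prod_q T_i(p,q)\bigr) r_i(p,d_i) = \prod_q T_i(p,q)$ is the crux, and your forward direction (product of row $1$ equals $P_i(\phi)$, rows $2$ and $3$ inherit this, clause (3) then forces $P_i(\phi)=\sigma_i\neq e$, hence $\phi\in C_i$ by the recognition property) and converse (clauses (1)--(2) propagate $\phi$ and $r$ deterministically through the rows, yielding a unique candidate; clause (3) holds by the same telescoping since now $\phi\in C_i$ gives $P_i(\phi)=\sigma_i$; clause (4) is vacuous) are both complete and match what the Dwork et al.\ construction is designed to do.
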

In this lemma, the statement that $\phi_r|_{R_i} = r$ means that for every
randomizer $r_i(j,k) \in R_i$, the restriction of $\phi$ to the boolean
variables corresponding to $r_i(j,k)$ is the encoding of the permutation
$r(r_i(j,k))$.
\begin{proof}
    If $\psi\in D_i$, then by constraint (2), $\prod_q T_i(p+1,q) =
    \prod_q T_i(p,q)$. Since $\prod_q T_i(4,q) = \sigma_i$ by constraint (3),
    $\prod_q \pi_{x_q}^{(iq)} = \sigma_i$. Since the permutation branching
    program $P_i$ recognizes $C_i$, we conclude that $\psi|_{V_i}\in C_i$. 
	
    Conversely, given an assignment $r \in S_5^{R_i}$ to the variables $R_i$
    and $\phi \in  C_i$, we can set $T_i(1,q) = \pi_{\phi(x_q)}^{(iq)}$ and
    $T_i(p+1,q) = r_i(p,q-1)^{-1}T_i(p,q)r_i(p,q)$ to get an assignment where
    $\prod_q T_i(4,q) = \sigma_i$.
\end{proof}
Although the permutation-valued variables in $\Tab(B)$ are
shorthand for boolean variables, it is helpful to be able to work with the
permutation-valued variables directly in $\mcA(\Tab(B))$. Suppose for a moment
that $x_1,\ldots,x_7$ are variables in a set $V$, and $C$ is a constraint on
$V$ which includes the requirement that $x_1,\ldots,x_7$ encode a
permutation-valued variable $x$. Let $S = \{x_1,\ldots,x_7\}$. If $\phi \in
\Z_2^{S}$, then $\Phi_{S,\phi} = 0$ in $\mcA(V,C)$ unless $\phi$ is the binary
representation of an index $0 \leq j < 120$, in which case we also write
$\Phi_{S,\phi}$ as $\Phi_{S,j}$. Hence the subalgebra of $\mcA(V,C)$ is
generated by the single unitary $\sum_{j=0}^{119} e^{2 \pi i j / 120}
\Phi_{S,j}$, which we denote by the same symbol as the permutation-valued
variable $x$.  In particular, if $B = (X,\{(V_i,C_i)\}_{i=1}^m)$ and $\Tab(B) =
(Y,\{(W_i,D_i)\}_{i=1}^m)$ as in \Cref{def:Tab}, then we can refer to
$T_i(p,q)$ and $r_i(j,k)$ as unitary elements of $\mcA(W_i,D_i)$ of order
$120$, and they generate the same subalgebra as the boolean variables encoding
them.  Since these variables do not occur in any other context $W_j$ for $j
\neq i$, we also use $T_i(p,q)$ and $r_i(j,k)$ to refer to $\sigma_i(T_i(p,q))$
and $\sigma_i(r_i(j,k))$ in $\mcA(\Tab(B))$. We use the same convention 
for $\mcA(W_{i\ell}, D_{i\ell})$, although since the variables $T_i(p,q)$ and
$r_i(j,k)$ occur in more than one constraint of $\Tab_{sub}(B)$, we are
stuck with the notation $\sigma_{i\ell}(T_i(p,q))$ and
$\sigma_{i\ell}(r_i(j,k))$ when refering to these variables in
$\mcA(\Tab_{sub}(B))$. With these conventions, we can state the following
noncommutative version of \Cref{lem:classicalRT}.
\begin{lemma}\label{lem:RT}
    Suppose that $B = (X,\{(V_i,C_i)\}_{i=1}^m)$ is a BCS, and let $\Tab(B) =
    (Y,\{(W_i,D_i)\}_{i=1}^m)$. Let $R_i = \{r_i(j,k) : (j,k) \in [3] \times
    [d_i-1]\}$ be the set of randomizers in $W_i$, and let $R = \bigcup_i R_i$.
    \begin{enumerate}[(a)]
        \item The natural map 
            \begin{equation*}
                \mcA(V_i,C_i) \otimes \C \Z_{120}^{R_i} \to \mcA(W_i,D_i) : 
                    x_i \mapsto x_i, r_i(j,k) \mapsto r_i(j,k)
            \end{equation*}
            is an isomorphism. In particular, $\mcA(W_i,D_i)$ is generated as an
            algebra by $V_i \cup R_i$, and $\mcA(\Tab(B))$ is generated by $\bigcup_i
            \{\sigma_i(x) : x \in V_i\} \cup R$. 

        \item The natural inclusion $\alpha : \mcA(B) \to \mcA(\Tab(B))$ defined by
             $\alpha(\sigma_i(x)) = \sigma_i(x)$ for $i \in [m]$ and $x \in V_i$
            is a classical homomorphism.

        \item If $r \in S_5^{R}$, then
            there is a classical homomorphism $\beta_r : \mcA(\Tab(B)) \to \mcA(B)$
            such that for all $i \in [m]$, if $x \in V_i$ then $\beta_r(\sigma_i(x)) 
             = \sigma_i(x)$, and if $x \in R_i$ then $\beta_r(x) = e^{2 \pi i j / 120}$
            where $r(x) = \gamma_j$ in the enumeration of $S_5$ fixed above.
    
        \item Let $\mcM$ be the set of monomials in $\mcA(B)$ of the form $u \sigma_i(z)^a v$, where
            $z \in \mcR_i$ for some $i \in [m]$, $1 \leq a < 120$, and $u$ and $v$ are monomials
            in $\{\sigma_j(x) : j \in [m], x \in V_j \cup R_j\}$ which do not contain $z$.
            If $\pi$ is a probability distribution on $[m] \times [m]$, and $\tau$
            is a tracial state on $\mcA(B)$, then there is a tracial state
            $\wtd{\tau}$ on $\mcA(\Tab(B))$ such that $\tau = \wtd{\tau} \circ
            \alpha$, where $\alpha$ is the classical homomorphism from part
            (b), $\df(\wtd{\tau}; \mu_{\pi}) = \df(\tau; \mu_{\pi})$, and 
            $\wtd{\tau}(y) = 0$ for all $y \in \mcM$. Furthermore, if $\tau$ is
            finite-dimensional (resp. Connes-embeddable), then $\wtd{\tau}$ is
            also finite-dimensional (resp. Connes-embeddable). 
    \end{enumerate}
\end{lemma}
\begin{proof}
    For part (a), the algebra  $\C \Z_{120}^{R_i}$ has a basis consisting of the joint spectral projections
    \begin{equation*}
    	\Phi_{R_i,r} = \prod_{x\in R_i}\lambda(r(x))^{-1}\prod_{\gamma_k\neq r(x)}(x-e^{2\pi ik/120}), \quad r \in \Z_{120}^{R_i},
    \end{equation*}
    where $\lambda(\gamma_j) = \prod_{k\neq j}(e^{2\pi ij/120}-e^{2\pi ik/120})$.
    Hence $\mcA(V_i,C_i) \otimes \C \Z_{120}^{R_i}$ has a
    basis consisting of the elements $\Phi_{V_i,\phi} \otimes \Phi_{R_i,r}$
    for $\phi \in C_i$ and $r \in \Z_{120}^{R_i}$. Using the enumeration
    of $S_5$ fixed earlier, we can interpret $\Z_{120}^{R_i}$ as the
    set $S_5^{R_i}$ of permutation-valued assignments to $R_i$.
    The natural homomorphism $\mcA(V_i,C_i) \otimes \C \Z_{120}^{R_i}
    \to \mcA(W_i,D_i)$ sends $\Phi_{V_i,\phi} \otimes \Phi_{R_i,r}$
    to $\sum_{\psi} \Phi_{W_i,\psi}$, where the sum is across all 
    $\psi \in D_i$ such that $\psi|_{V_i} = \phi$ and $\psi|_{R_i} = r$.
    By \Cref{lem:classicalRT}, the restriction map $\phi \mapsto 
    \phi|_{V_i \cup R_i}$ is a bijection between $D_i$ and $C_i \times
    S_5^{R_i}$, so this homomorphism is an isomorphism.

    Parts (b) and part (c) follow immediately from \Cref{lem:classicalRT} and
    the definition of a classical homomorphism. Alternatively, part (b) also follows from
    \Cref{cor:BCSto3SATsyst}.

    The proof of part (d) is similar to the proof of \Cref{lem:obliviate}, part (b). Given
    a tracial state $\tau$ on $\mcA(B)$, let $\wtd{\tau}$ be the tracial state on $\mcA(\Tab(B))$
    defined by $\wtd{\tau} = \tfrac{1}{120^{|R|}} \sum_{r} \tau \circ \beta_r$, where the 
    sum is over $r \in S_5^R$. If $\tau$ is finite-dimensional (resp.
    Connes-embeddable), then $\wtd{\tau}$ is finite-dimensional (resp.
    Connes-embeddable). Since $\beta_r(\sigma_i(x)) = \sigma_i(x)$ for all $i
    \in [m]$ and $x \in V_i$, $\beta_r \circ \alpha$ is the identity on
    $\mcA(B)$, and $\wtd{\tau} \circ \alpha = \tau$. By parts (b) and (c), 
    $\df(\tau \circ \beta_r) \leq \df(\tau) =
    \df(\tau \circ \beta_r \circ \alpha) \leq \df(\tau \circ \beta_r)$.
    This means that $\df(\tau \circ \beta_r) = \df(\tau)$, so $\df(\tau) = \df(\wtd{\tau})$. 
    Finally, suppose $y \in \mcM$, so $y = u \sigma_i(z)^a v$ for some $z \in R_i$, $1 \leq a < 120$, and
    monomials $u$, $v$ which do not contain $z$. By part (c), there is some monomial $y'$ in $\{\sigma_j(x) : j \in [m],
    x \in V_i\}$ such that for all $r \in S_5^R$, we have $\beta_r(y) = e^{2 \pi aij / 120} c_{r'} y'$,
    where $r(z) = \gamma_j$, and $c_{r'} \in \C$ depends only on $r' = r|_{R \setminus \{z\}}$. 
    Hence 
    \begin{equation*}
        \wtd{\tau}(y) = \tfrac{1}{120^{|R|}} \sum_{r \in S_5} \tau(\beta_r(y))
            = \tfrac{1}{120^{|R|}} \sum_{j=0}^{120} e^{2 \pi a ij / 120} \sum_{r' \in S_5^{R\setminus \{z\}}} c_{r'} \tau(y')
            = 0,
    \end{equation*}
    finishing the proof of part (d).
\end{proof}

We need one more general fact about permutation-valued variables. 
\begin{lemma}\label{lem:nolinearterm}
    Let $f : S_5^m \to S_5$ be a function, and suppose $(V,C)$ is a 
    boolean constraint encoding the constraint $x = f(y_1,\ldots,y_m)$ on
    permutation-valued variables $x,y_1,\ldots,y_m$. If $1 \leq n < 120$, then
    \begin{equation*}
        x^n = \sum_{a} c_a y_1^{a_1} \cdots y_m^{a_m}
    \end{equation*}
    for some coefficients $c_a \in \C$, where the sum is over all integer vectors $a =
    (a_1,\ldots,a_m)$ with $0 \leq a_1,\ldots,a_m < 120$. 
    Furthermore, if for every $\pi_1,\ldots,\pi_{m-1} \in S_5$, the set
    $\{f(\pi_1,\ldots,\pi_{k-1},\pi,\pi_{k},\ldots,\pi_{m-1}) :
    \pi \in S_5\}$ is equal to $S_5$, then $c_{a} = 0$ if $a_k = 0$.
\end{lemma}
\begin{proof}
    Let $Y_k$ be the set of boolean variables representing $y_k$, and
    let $X$ be the set of boolean variables representing $x$. 
    The constraint $x = f(y_1,\ldots,y_m)$ states that 
    \begin{equation*}
        \Phi_{X,\ell} = \sum_{(\gamma_{j_1},\ldots,\gamma_{j_m}) \in f^{-1}(\gamma_{\ell})} \Phi_{Y_1,j_1}
            \cdots \Phi_{Y_m,j_m},
    \end{equation*}
    where $\{\gamma_0,\ldots,\gamma_{119}\}$ is our chosen enumeration of
    $S_5$. Since $\Phi_{Y_k,j_k}$ is a polynomial in $y_k$, and $x^m$ is a
    linear combination of the projections  $\Phi_{X,\ell}$ for $0 \leq \ell <
    120$, we get $x^n = g(y_1,\ldots,y_n)$, where $g = \sum_{a} c_a y_1^{a_1} \cdots
    y_m^{a_m}$ is a polynomial in $y_1,\ldots,y_m$. Since $y_k^{120}=1$, we can
    further assume that the sum is over vectors $a = (a_1,\ldots,a_k)$ with $0
    \leq a_k < 120$ for all $k$.

    Given $0 \leq j_1,\ldots,j_m < 120$, let $\phi_j : \mcA(V,C) \to \C$ be the
    homomorphism sending $\Phi_{Y_k,a} \mapsto \delta_{aj_k}$ for all $1 \leq k
    \leq k$. This homomorphism sends $y_k \mapsto \omega^{j_k}$ and $x \mapsto
    \omega^{\ell}$, where $\omega = e^{2 \pi i / 120}$, and $\gamma_{\ell} = 
    f(\gamma_{j_1},\ldots,\gamma_{j_m})$. We use the notation 
    \begin{equation*}
        A_1,\ldots,\widecheck{A}_{k},\ldots,A_m
    \end{equation*}
    to denote the list $A_1,\ldots,A_m$ with the element $A_k$ omitted. 
    If, for some $k$, we fix $0 \leq j_1,\ldots,\widecheck{j}_{k},\ldots,j_m < 120$, then
    \begin{align*}
        \sum_{0 \leq j_k < 120} \phi_j(g) = \sum_a c_a \prod_{t \neq k} \omega^{j_t a_t} 
                                            \sum_{0 \leq j_k < 120} \omega^{j_k a_k} = h(\omega^{j_1},\ldots,\widecheck{\omega}^{j_{k}},
        \ldots,\omega^{j_m}),
    \end{align*}
    where $h = g(y_1,\ldots,y_{k-1},0,y_{k+1},\ldots,y_m)$. If
    $\{f(\gamma_{j_1},\ldots,\gamma_{j_m}) : 0 \leq j_k < 120\}$ is equal to
    $S_5$, then 
    \begin{equation*}
            \sum_{0 \leq j_k < 120} \phi_j(x^n) = \sum_{0 \leq \ell < 120} \omega^{n \ell} = 0
    \end{equation*}
    for $1\leq n < 120$, and we conclude that
    \begin{equation*}
        h(\omega^{j_1},\ldots,\widecheck{\omega}^{j_{k}}, \ldots,\omega^{j_m}) = 0.
    \end{equation*}
    If this occurs for all choices of $0 \leq j_1,\ldots,\widecheck{j}_k,\ldots,j_m < 120$,
    then $h$ must be the zero polynomial, so $c_a = 0$ if $a_{k} = 0$.
\end{proof}
Although \Cref{lem:nolinearterm} is stated for general functions $f$, we are only going
to use it for the group multiplication and inverse functions, i.e. $f(y_1,y_2) = y_1 y_2$
and $f(y) = y^{-1}$. For these functions, the additional hypothesis on $f$ holds for all
indices $k$. Thus the lemma states that if $(V,C)$ encodes the constraint $x = y_1 y_2$,
then $x$ is a polynomial in $y_1$ and $y_2$ such that all monomials contain both $y_1$
and $y_2$, and similarly for the constraint $x = y^{-1}$. 

We can now prove the main algebraic lemma that we use to prove perfect zero knowledge.
\begin{lemma}\label{lem:algpzk}
	Given a BCS $B = (X,\{(V_i,C_i)\}_{i=1}^m)$, let $\Tab(B) =
	(Y,\{(W_i,D_i)\}_{i=1}^m)$, and let $\Tab_{sub}(B) = (Y,\{(W_{ij},D_{ij})\}_{i\in [m], j\in
		[m_i]})$. Let $R_i = \{r_i(j,k) : (j,k) \in [3] \times
	[d_i-1]\}$ be the set of randomizers in $W_i$. Then:
	\begin{enumerate}[(a)]
        \item Suppose $(W_{ij},D_{ij})$ is a constraint from $\Tab_{sub}(B)$ of type
            (1), (2), or (4) in \Cref{def:Tab}. If $y$ is a polynomial in $W_{ij}$,
            then $y$ is equal in $\mcA(W_i,D_i)$ to a polynomial in $S \cup R_i$,
            where $W_{ij}\cap V_i \subseteq S \subseteq V_i$ and $|S|\leq 2$.
        \item Suppose $(W_{ij},D_{ij})$ is a constraint from $\Tab_{sub}(B)$ of type
            (3). If $y$ is a polynomial in $W_{ij}$ then $y$ is equal in $\mcA(W_i, D_i)$
            to a polynomial in $V_i \cup R_i$ where every non-scalar monomial contains
            a variable from $R_i$.
        \item If $y$ is a polynomial in $W_{ij}$ and $z$ is a polynomial in
            $W_{ik}$ for some $i \in [m]$, $j,k \in [m_i]$, then $yz$ is equal in
            $\mcA(W_i,D_i)$ to a polynomial in $V_i \cup R_i$ in which every monomial
            either contains a variable from $R_i$ or has degree $\leq 4$.
	\end{enumerate}
\end{lemma}
\begin{proof}
    Fix $i \in [m]$, and consider the permutation-valued variables $T_i(p,q)$
    in $\mcA(W_i,D_i)$. The constraints of type (1) in \Cref{def:Tab} imply that
    $T_i(1,q)$ is a polynomial in $x_q$ for all $q\in [d_i]$. The constraints of
    type (2) along with \Cref{lem:nolinearterm} imply that $T_i(p+1,q)$ is a
    polynomial in $\{r_i(p,q-1)$, $r_i(p,q), T_i(p,q)\}$, and vice versa $T_i(p,q)$ is a polynomial in $\{r_i(p,q-1)$, $r_i(p,q), T_i(p+1,q)\}$. Recall that $r_i(p,0) = r_i(p,d_i) = 1$; for
    notational convenience we use the convention that they are present in every
    monomial, although note they aren't elements of $R_i$. It follows that $T_i(p,q)$ is a polynomial in $\{x_q\}\cup \{r_i(p',q-1),
    r_i(p',q): 1 \leq p'< p\}$, and also a polynomial in $\{T_i(4,q)\} \cup
    \{r_i(p',q-1), r_i(p',q) : p \leq p' \leq 3\}$.  Finally, the constraint of type (3) implies that for
    any $q\in [d_i]$, the variable $T_i(4,q)$ is a polynomial in
    $\{T_i(4,q'):q'\neq q\}$. 

    For part (a), suppose that $y$ is a polynomial in $W_{ij}$.  By the
    previous paragraph, if $(W_{ij}, D_{ij})$ is a constraint of type (1), then
    $y$ can be written as a polynomial in $x_q$, where $\{x_q\} = W_{ij} \cap V_i$.
    If $(W_{ij}, D_{ij})$ is a constraint of type (2) then $y$ can be written
    as a polynomial in $\{x_q\}\cup R_i$ for some $q\in [d_1]$ (and $W_{ij}
    \cap V_i = \emptyset$). If $(W_{ij}, D_{ij})$ is a constraint of type (4)
    then $W_{ij}$ has size two, and $y$ can be written as a polynomial in
    $\{x_q,x_{q'}\}\cup R_i$ for some $q,q'\in [d_i]$, where $W_{ij} \cap V_i
    \subseteq \{x_q,x_{q'}\}$. This finishes the proof of part (a).

    For part (b), if $(W_{ij}, D_{ij})$ has type (3), then we can write $y$ as
    a polynomial in $\{T_i(4,q) : q \in [d_i-1]\}$. Suppose $M =
    T_{i}(4,q_1)^{a_1} \cdots T_{i}(4,q_k)^{a_k}$ is a monomial in this latter
    set of variables, where $k \geq 1$, $1 \leq q_1 < \ldots < q_k < d_i$, and
    $0 \leq a_1,\ldots,a_k < 120$. By \Cref{lem:nolinearterm},
    $T_i(4,q_j)^{a_j}$ is a polynomial in $\{x_{q_j}\} \cup
    \{r_i(p',q_j-1),r_i(p',q_j) : p' \in [3]\}$ such that every monomial
    contains all the randomizers.  When we multiply these polynomials together
    to get the monomial $M$, some of these randomizers may cancel out. However the
    randomizers $r_i(p',q_k)$ for $p'\in [3]$ appear only in the polynomial for
    $T_i(4,q_k)$.  As a result, $M$ is a polynomial in $V_i \cup R_i$ such that
    every monomial contains $r_i(p',q_k)$ for all $p' \in [3]$. We conclude that
    $y$ can be written as a sum of monomials in $V_i\cup R_i$, such that each
    non-scalar monomial contains the randomizers $\{r_i(p',q) : p' \in [3]\}$
    for some $q \in [d_i-1]$. In particular, every non-scalar monomial contains
    some randomizer, finishing the proof of (b).
    
    For part (c), suppose $y$ and $z$ are polynomials in $W_{ij}$ and $W_{ik}$ respectively.
    By part (a), if $(W_{ij},D_{ij})$ and $(W_{ik},D_{ik})$ are constraints of
    type (1), (2) or (4) then $y$ and $z$ both have $V_i$-degree less than or equal
    to two, and thus $yz$ has $V_i$-degree less than or equal to four.  Suppose
    without loss of generality that $(W_{ij},D_{ij})$ is the constraint of type
    (3).  If $(W_{ik}, D_{ik})$ is the same constraint, then $yz$ is a polynomial
    in $W_{ij}$, and is covered by part (b).
    
    Suppose $(W_{ik},D_{ik})$ has type (2), so $W_{ik} = \{r_i(p,q-1),
    r_i(p,q), T_i(p,q), T_i(p+1,q)\}$ for some $p\in [3]$, $q\in [d_i]$. If $p \in
    [2]$, then $z$ is a polynomial in $\{x_q\}\cup\{r_i(p',q-1), r_i(p',q) : 1 \leq
    p' \leq p \}$.  Since $y$ can be written as a polynomial in $V_i\cup R_i$ such
    that every non-scalar monomial contains $r_i(3,q)$ for some $q\in [d_i-1]$,
    $yz$ can be written as a polynomial in $V_i\cup R_i$ such that every monomial
    either has $V_i$-degree at most one or contains $r_i(3,q)$ some $q\in [d_i-1]$.
    If $p = 3$, then $z$ can be written as a polynomial in $\{T_i(4,q), r_i(3,q-1),
    r_i(3,q)\}$ for some $q \in [d_i]$. For any $0\leq a<120$, $T_i(4,q)^ay$ can be
    written as a polynomial in $V_i\cup R_i$ such that every non-scalar monomial
    contains the randomizers $r_i(1,q')$, $r_i(2,q')$ for some $q' \in
    [d_i-1]$. So $yz$ is a polynomial in $V_i\cup R_i$ such that every monomial
    either has $V_i$-degree zero or contains $r_i(1,q')$, $r_i(2,q')$ for some
    $q'\in [d_i-1]$. 
    
    Next suppose $(W_{ik},D_{ik})$ has type (4), and let $F_i = \{T_i(4,q):
    q\in [d_i]\}$.  For $q \in[d_i]$, $T_i(1,q)$ can be written as a polynomial in
    $x_q$, $T_i(2,q)$ can be written as a polynomial in $\{x_q, r_i(1,q-1),
    r_i(1,q)\}$, and $T_i(3,q)$ can be written as a polynomial in $\{T_i(4,q),
    r_i(3,q-1),r_i(3,q)\}$.  Hence every element $W_i$ can be written as a
    polynomial in $V_i\cup R_i\cup F_i$ of $V_i$-degree at most one such that no
    monomial contains $r_i(p,q)$, $r_i(p',q)$ for some $q\in [d_i-1]$, and $p\neq
    p'$.  Thus $z$ can be written as a polynomial $V_i\cup R_i\cup F_i$ with
    $V_i$-degree at most two, and such that for all $q\in [d_i-1]$, no monomial
    contains all the randomizers $\{r_i(p,q) : p \in [3]\}$. If $M$ is any monomial
    in $F_i$ then $My$ can be written as a polynomial in $V_i\cup R_i$ such that
    every non-scalar monomial contains $\{r_i(p,q): p\in [3]\}$ for some $q\in
    [d_i-1]$.  Hence $yz$ can be written as a polynomial in $V_i\cup R_i$ where
    every monomial either has $V_i$-degree at most two or contains a variable from
    $R_i$.
        
    Finally if $(W_{ik},D_{ik})$ has type (1), then $z$ is a polynomial in
    $x_q$ for some $q$, and as in the previous paragraph, $yz$ can be written as a
    polynomial in $V_i\cup R_i$ where every monomial either has $V_i$-degree at
    most two or contains a variable from $R_i$. We conclude that part (c) holds.
\end{proof}
\begin{rmk}\label{rmk:prooffails}
    Note that the proof of \Cref{lem:algpzk}, part (c) fails if we use a
    three-row tableau in \Cref{def:Tab} rather than a four-row tableau. Indeed,
    suppose we used three-row tableaux. If $(W_{ij}, D_{ij})$ is the constraint
    of type (3), and $(W_{ik},D_{ik})$ is the constraint of type (4) with
    $W_{ik} = \{r_{i}(1,q),r_i(2,q)\}$, then it is possible for $yz$ to have
    monomials of degree $\geq 5$ that do not contain any randomizers. For instance,
    when $q = 5$, we can take $y = T_i(3,1) \cdots T_i(3,5)$. This corresponds to
    the fact that, with three-row tableaux, we can recover the group product 
    $T_i(1,1) \cdots T_i(1,q)$ from the variables $T_i(3,q')$, $q' \in [q]$
    and the randomizers $r_i(1,q)$, $r_i(2,q)$. 
\end{rmk}

Combining \Cref{lem:algpzk} with \Cref{lem:obliviate}, for any BCS $B$ we can
define a perfect correlation $p$ for the BCS game $\mcG(\Tab_{sub}(\Obl_5(B)))$
such that $p$ is a quantum correlation if and only if $\mcG(B)$ has a perfect
quantum strategy. 
\begin{prop}\label{prop:pzkcordef}
    Suppose $B = (X,\{(V_i,C_i)\}_{i=1}^m)$ is a BCS with $m$ constraints, and
    $\pi$ is a probability distribution on $[m] \times [m]$ such that $\pi(i,j)
    > 0$ for all $i,j \in [m]$. Let $\Obl_5(B) =
    (Z,\{(U_i,E_i)\}_{i=1}^m)$, $\Tab(\Obl_5(B)) =
    (Y,\{(W_i,D_i)\}_{i=1}^m)$, and $\Tab_{sub}(\Obl_5(B)) =
    (Y,\{(W_{ij},D_{ij})\}_{i\in [m], j\in [m_i]})$. 
    Let $R_i = \{r_i(j,k) : (j,k) \in [3] \times [d_i-1]\}$ be the set of
    randomizers in $W_i$.  For any $i \in [m]$ and $n
    \geq 1$, let $\Lambda_{i,n}$ be the set of non-scalar monomials over $U_i
    \cup R_i$ which either contain an element of $R_i$, or have degree at most
    $n$.  Let $\Lambda$ be the subspace of $\mcA(\Tab(\Obl_5(B)))$ defined by 
    \begin{equation*}
        \Lambda = \C 1\ \oplus\ \vspan \bigcup_{i \in [m]}
            \sigma_i(\Lambda_{i,4})\ \oplus\ \vspan \bigcup_{i \neq j \in [m]}
            \sigma_i(\Lambda_{i,2}) \sigma_j(\Lambda_{j,2}),
    \end{equation*}
    and let $f : \Lambda \to \C$ be the linear functional defined by $f(1) =
    1$, $f(\sigma_i(x)) = 0$ for all $x \in \Lambda_{i,4}$, and $f(\sigma_i(x)
    \sigma_j(y)) = \delta_{xy}$ for all $x \in \Lambda_{i,2}$, $y \in
    \Lambda_{j,2}$, where $\delta_{ab}$ is the Kronecker delta, i.e.
    $\delta_{ab} = 1$ if $a = b$, and is $0$ otherwise. 
    Let $\alpha : \mcA(\Tab_{sub}(\Obl_5(B))) \to \mcA(\Tab(\Obl_5(B)))$ be the
    homomorphism sending $\sigma_{ij}(x) \mapsto \sigma_i(x)$ for all $x \in
    \mcA(W_{ij},D_{ij})$, as in \Cref{prop:subdiviso}. 

    For every $i,k \in [m], j \in [m_i], l \in [m_k]$ and assignments $\phi$ and
    $\psi$ to $W_{ij}$ and $W_{kl}$ respectively, let
    \begin{equation*}
        p(\phi,\psi| ij,kl) = f(\alpha(\Phi_{W_{ij},\phi} \Phi_{W_{kl},\psi})).
    \end{equation*}
    Then $p$ is a perfect correlation for the BCS game
    $\mcG(\Tab_{sub}(\Obl_5(B)),\pi_{sub})$, and $p \in C_{q}$ (resp.
    $C_{qa}$, $C_{qc}$) if and only if $\mcG(B, \pi)$ has a perfect quantum
    correlation in $C_{q}$ (resp. $C_{qa}$, $C_{qc}$). 
\end{prop}
\begin{proof}
    We first observe that the linear functional $f$ is well-defined, by showing
    that it can be defined on a larger subspace. Indeed, for any set of
    variables $S$, let $\mcM(S)$ be the set of non-scalar monomials in $S$, and
    let $\mcM_n(S) \subseteq \mcM(S)$ be the subset of monomials of degree at
    most $n$. Since we're assuming that $(V_i,C_i)$ has at least one satisfying
    assignment, $\mcA(V_i,C_i)$ has a tracial state. Applying part (b) of
    \Cref{lem:obliviate} to the constraint system containing the single
    constraint $(V_i,C_i)$, we see that $\mcA(U_i,E_i)$ has a tracial state
    $\tau_i$ such that $\tau_i(x) = 0$ for all $x \in \mcM_4(U_i)$. Hence $\C 1
    \cap \vspan \mcM_4(U_i) = \{0\}$ in $\mcA(U_i,E_i)$. By
    \Cref{lem:obliviate}, part (c), the set $\mcM_2(U_i)$ is linearly
    independent in $\mcA(U_i,E_i)$.  Hence we can choose a basis $\Xi_i$ for
    $\mcA(U_i,E_i)$ which contains $\{1\} \cup \mcM_2(U_i)$, and such that
    $\vspan \mcM_4(U_i) \subseteq \vspan \Xi_i \setminus \{1\}$. By
    \Cref{lem:RT}, part (a), the set $\{ab : a \in \Xi_i, b \in \mcM(R_i)\}$ is
    a basis for $\mcA(W_i,D_i)$. Let $\Theta_i$ be the set of non-identity
    elements in this basis. Because $\mcA(\Tab(B))$ is a free product of the
    algebras $\mcA(W_i,D_i)$, the set
    \begin{equation*}
        \Theta := \{1\} \cup \bigcup_{i \in [m]} \sigma_i(\Theta_i) \cup \bigcup_{i \neq j \in [m]}  
            \sigma_i(\Theta_i) \sigma_j(\Theta_j)
    \end{equation*}
    is linearly independent in $\mcA(\Tab(B))$. Define a linear functional $f$
    on the span of $\Theta$ by setting $f(1) = 1$, $f(\sigma_i(x)) = 0$ for all
    $x \in \Theta_i$, and 
    \begin{equation*}
        f(\sigma_i(x)\sigma_j(y)) = \begin{cases} 1 & x \text{ and } y \text{ are both in } \mcM_2(U_i \cap U_j)
            \text{ and } x=y \\
                                                  0 & \text{otherwise}
                                    \end{cases}
    \end{equation*}
    for all $x \in \Theta_i$,
    $y \in \Theta_j$ with $i \neq j$. The image of the set $\Lambda_{i,4}$ in
    $\mcA(W_i,D_i)$ is contained in the span of $\Theta_i$, so the span of $\Theta$
    contains the subspace $\Lambda$. Furthermore, if $x \in \Lambda_{i,4}$, then 
    $f(\sigma_i(x)) = 0$. Suppose $x \in \Lambda_{i,2}$ and $y \in \Lambda_{j,2}$ with
    $i \neq j$. If $x$ contains an element of $R_i$, then $x$ is contained in the 
    span of $\{ab : a \in \mcM(U_i), b \in \mcM(R_i), b \neq 1\}$, and $f(\sigma_i(x)
    \sigma_j(y)) = 0 = \delta_{xy}$. The same is true if $y$ contains an element of
    $R_j$. If neither $x$ or $y$ contains an element of $R_i$ or $R_j$ respectively,
    then $x \in \mcM_2(U_i)$ and $y \in \mcM_2(U_j)$ are elements of $\Theta_i$ and
    $\Theta_j$ respectively. The only way for $x$ and $y$ to be equal is if both
    belong to $\mcM_2(U_i \cap U_j)$, so $f(\sigma_i(x) \sigma_j(y)) = \delta_{xy}$. 
    Thus the restriction of $f$ to $\Lambda$ is the linear functional defined in the
    proposition.
    
    Since $f$ is well defined, \Cref{lem:algpzk} implies that $p$ is well
    defined. Since $\sum_{\phi}\Phi_{W_{ij}\phi} = 1$, it follows that
    $\sum_{\phi,\psi}p(\phi,\psi|ij,kl) = 1$ for every $i,k \in [m], j \in [m_i]$
    and $l \in [m_k]$.  To show that $p$ is a perfect correlation for
    $\mcG(\Tab_{sub}(\Obl_5(B)))$, we need to show that $p(\phi,\psi|ij,kl)\geq 0$
    for all $\phi \in D_{ij}$, $\psi\in D_{kl}$, $i,k \in [m], j \in [m_i]$ and $l
    \in [m_k]$, and that $p(\phi,\psi|ij,kl) = 0$ if $\phi|_{W_{ij}\cap W_{kl}}
    \neq \psi|_{W_{ij}\cap W_{kl}}$.  If $i = k$, then $\alpha(\Phi_{W_{ij},\phi})$
    and $\alpha(\Phi_{W_{kl},\psi})$ are both projections in the commutative
    algebra $\mcA(W_i,D_i)$, and thus their product is also a projection. Since
    $C_i$ is non-empty by assumption, $\mcA(V_i,C_i)$ has a tracial state. If $B_i
    = (V_i,\{(V_i,C_i)\})$ is the constraint system for the single constraint
    $C_i$, then $\Obl_5(B_i) = (U_i,\{(U_i,E_i)\})$ and $\Tab(\Obl_5(B_i)) =
    (W_i,\{(W_i,D_i)\})$. By \Cref{lem:obliviate}, part (b), there is a tracial
    state $\tau_i$ on $\mcA(U_i,E_i)$ such that $\tau_i(x) = \delta_{x,1}$ for all
    $x\in \mcM_4(U_i)$. By \Cref{lem:RT}, part (d), there is a tracial state
    $\wtd{\tau}_i$ on $\mcA(W_i,D_i)$ such that $\wtd{\tau}_i(x) = \tau_i(x)$ for
    all $x\in \mcM(U_i)$, and $\wtd{\tau}_i(x) = 0$ for all monomials $x \in
    \mcM(U_i\cup R_i)$ containing an element of $R_i$. Since $\wtd{\tau}_i(1) = 1$
    and $\wtd{\tau}_i(x) = 0$ for all $x \in \Lambda_{i,4}$, the linear functionals
    $f$ and $\wtd{\tau}_i$ agree on $\C1\oplus \Lambda_{i,4}$, and
    $f(\alpha(\Phi_{W_{ij},\phi} \Phi_{W_{kl},\psi})) =
    \wtd{\tau}_i(\alpha(\Phi_{W_{ij},\phi} \Phi_{W_{kl},\psi}))\geq 0$.  If
    $\phi|_{W_{ij}\cap W_{kl}}\neq \psi|_{W_{ij}\cap W_{kl}}$ then
    $\alpha(\Phi_{W_{ij},\phi})\alpha(\Phi_{W_{kl},\psi}) = 0$ in $\mcA(W_i,D_i)$,
    and $f(\alpha(\Phi_{W_{ij},\phi} \Phi_{W_{kl},\psi})) = 0$.
        
    If $i\neq k$ and neither $(W_{ij},D_{ij})$ or $(W_{kl},D_{kl})$ are
    constraints of type (3) in \Cref{def:Tab}, then by \Cref{lem:algpzk} there
    exist $S_i\subseteq U_i$ and $S_k\subseteq U_k$ of size at most two, such that
    $W_{ij}\cap U_i\subseteq S_i$, $W_{kl}\cap U_k\subseteq S_k$,
    $\Phi_{W_{ij},\phi}$ is a polynomial in $S_i\cup R_i$, and $\Phi_{W_{kl},\psi}$
    is a polynomial in $S_k \cup R_k$.  Since $\mcM_2(S_i)$ is a linearly
    independent in $\mcA(U_i, E_i)$, part (a) of \Cref{lem:RT} implies that the
    subalgebra of $\mcA(W_i,D_i)$ generated by $S_i\cup R_i$ is isomorphic to
    $\C\Z_2^{S_i}\times\Z_{120}^{R_i}$, and similarly with the algebra generated by
    $S_k\cup R_k$ in $\mcA(W_k,D_k)$. Hence the subalgebra $\mcC$ of
    $A(\Tab(\Obl_5(B)))$ generated by $S_i\cup S_k\cup R_i\cup R_k$ is isomorphic
    to the group algebra of $(\Z_2^{S_i}\times\Z_{120}^{R_i}) \ast
    (\Z_2^{S_k}\times\Z_{120}^{R_k})$.  Let $H$ be the quotient of this free
    product group by the relations $\sigma_i(x)=\sigma_k(x)$ for all $x\in
    S_i\cap S_k$, where $\sigma_i(x)$ and $\sigma_k(x)$ are the group
    generators corresponding to $x$ in the first and second factors of the free
    product respectively, and let 
    \begin{equation*}
    	q: (\Z_2^{S_i}\times\Z_{120}^{R_i}) \ast (\Z_2^{S_k}\times\Z_{120}^{R_k}) \rightarrow H
    \end{equation*}
    be the quotient map. Observe that 
    \begin{equation*}
        H = \Z_2^{\ast S_i\cup S_k}\ast \Z_{120}^{\ast R_i\cup R_k} / \langle
        xy = yx \text{ for } x,y \text{ in } S_i\cup R_i \text{ or } S_k\cup R_k \rangle
    \end{equation*}
    is a graph product. By the normal form theorem for graph products
    \cite{green1990}, if $g \in \mcM(S_i\cup R_i)$ and $h\in \mcM(S_j\cup R_j)$,
    then $q(gh) = 1$ if and only if $g,h\in \mcM(S_i\cap S_j)$ and $g=h$. Hence if
    $\tau$ is the canonical trace on the group algebra $\C H$, then $\tau\circ
    q(\sigma_i(g)\sigma_k(h))= f(\sigma_i(g)\sigma_k(h))$. We conclude that
    $f(\alpha(\Phi_{W_{ij},\phi}\Phi_{W_{kl},\psi})) = \tau\circ
    q(\alpha(\Phi_{W_{ij},\phi}\Phi_{W_{kl},\psi}))$. Since $\Phi_{W_{ij},\phi}$
    and $\Phi_{W_{kl},\psi}$ are projections, $\tau\circ
    q(\alpha(\Phi_{W_{ij},\phi}\Phi_{W_{kl},\psi}))\geq 0$. Suppose
    $\phi(x)\neq\psi(x)$ for some $x\in W_{ij}\cap W_{kl}$. Then we must have $x\in
    U_i\cap U_k$, so $x\in S_i\cap S_k$. Since
    $\frac{1+\phi(x)x}{2}\Phi_{W_{ij},\phi} = \Phi_{W_{ij},\phi}$ and
    $\frac{1+\psi(x)x}{2}\Phi_{W_{kl},\psi} = \Phi_{W_{kl},\psi}$, we have 
    \begin{align*}
        q(\alpha(\Phi_{W_{ij},\phi}\Phi_{W_{kl},\psi})) &=
            q\left(\alpha(\Phi_{W_{ij},\phi}) \sigma_i\left(\frac{1+\phi(x)x}{2}\right) 
            \sigma_k\left(\frac{1+\psi(x)x}{2}\right) \alpha(\Phi_{W_{kl},\psi})\right)\\
            &=q(\alpha(\Phi_{W_{ij},\phi}))\left(\frac{1+\phi(x)x}{2}\right)\left(\frac{1+\psi(x)x}{2}\right)
                q(\alpha(\Phi_{W_{kl},\psi})) = 0.
    \end{align*}
    Thus $f(\alpha(\Phi_{W_{ij},\phi}\Phi_{W_{kl},\psi})) = 0$ if
    $\phi|_{W_{ij}\cap W_{kl}}\neq \psi|_{W_{ij}\cap W_{kl}}$.

    Finally, suppose $i \neq k$ and $(W_{ij},D_{ij})$ is a constraint of type
    (3). By \Cref{lem:algpzk}, part (b), we can write $\alpha(\Phi_{W_{ij},\phi}) =
    \lambda 1 + \sum_{x} c_x x$ for some coefficients $\lambda$, $c_x \in \C$,
    where the sum is over monomials $x \in \mcM(U_i \cup R_i)$ containing an
    element of $R_i$. Let $\wtd{\tau}_i$ and $\wtd{\tau}_k$ be the tracial
    states on $\mcA(W_i,D_i)$ and $\mcA(W_k,D_k)$ defined above. Since
    $\wtd{\tau}_i$ is equal to $f$ on $\Lambda_{i,4}$ and $\Phi_{W_{ij},\phi}$
    is a projection, $\lambda = f(\alpha(\Phi_{W_{ij},\phi})) =
    \wtd{\tau}_i(\alpha(\Phi_{W_{ij},\phi})) \geq 0$. Similarly,
    $f(\alpha(\Phi_{W_{kl},\psi})) = \wtd{\tau}_k(\alpha(\Phi_{W_{kl},\psi})) \geq 0$.
    If $x \in \mcM(U_i \cup R_i)$ contains an element of $R_i$,
    then $x \alpha(\Phi_{W_{kl},\psi}) \in \sigma_i(\Lambda_{i,4}) \oplus
    \sigma_i(\Lambda_{i,2}) \sigma_k(\Lambda_{k,2})$, so $f(x
    \alpha(\Phi_{W_{kl},\psi})) = 0$.  We conclude that
    $f(\alpha(\Phi_{W_{ij},\phi} \Phi_{W_{kl},\psi})) = \lambda
    f(\alpha(\Phi_{W_{kl},\psi})) \geq 0$. It is not possible to have
    $\phi|_{W_{ij} \cap W_{kl}} \neq \psi|_{W_{ij} \cap W_{kl}}$ when
    $i \neq k$ and $(W_{ij},D_{ij})$ has type (3), since $W_{ij} \cap W_{kl} =
    \emptyset$. 

    This finishes the proof that $p$ is a perfect correlation for
    $\mcG(\Tab_{sub}(\Obl_5(B)), \pi_{sub})$. If $p \in C_{qc}$ (resp.
    $C_{qa}$, $C_q$), then $\mcG(\Tab(\Obl_5(B)), \pi)$ also has a perfect
    strategy in $C_{qc}$ (resp. $C_{qa}$, $C_q$) by \Cref{prop:subdiviso}. 
    This means that there is a tracial state (resp. Connes-embeddable tracial
    state, finite-dimensional tracial state)
    $\wtd{\tau}$ on $\mcA(\Tab(\Obl_5(B)))$ with $df(\tau;\mu_\pi)=0$. 
    By \Cref{lem:RT}, part (b), and \Cref{lem:obliviate}, part (a), there is a
    $1$-homomorphism $\mcA(B) \to \mcA(\Tab(\Obl_5(B)))$, and pulling back
    $\wtd{\tau}$ by this $1$-homomorphism yields a perfect strategy for $\mcG(B,\pi)$
    in $C_{qc}$ (resp. $C_{qa}$, $C_q$). Conversely, if $\mcG(B,\pi)$ has a 
    perfect strategy in $C_{qc}$, then there is a
    tracial state $\tau$ on $\mcA(B)$ with $\df(\tau;\mu_\pi)=0$. By
    \Cref{lem:obliviate}, part (b), there is a tracial state $\tau'$ on
    $\mcA(\Obl_5(B))$ such that $\df(\tau'; \mu_{\pi}) = 0$,
    $\tau'(\sigma_i(x)) = 0$ for all $x \in \mcM_4(U_i) \setminus \{1\}$, $i
    \in [m]$, and $\tau'(\sigma_i(x) \sigma_j(y))=0$ for all $x \in
    \mcM_2(U_i)$, $y \in \mcM_2(U_j)$, $i \neq j \in [m]$ with $x \neq y$. By
    \Cref{lem:RT}, part (d), there is a tracial state $\wtd{\tau}$ on
    $\mcA(\Tab(\Obl_5(B)))$ with $\df(\wtd{\tau}; \mu_{\pi}) = 0$,
    $\wtd{\tau}(u) = \tau'(u)$ for all monomials $u$ in $\{\sigma_i(x) : 
    i \in [m], x \in U_i\}$, and $\wtd{\tau} (u \sigma_i(z)^a v) = 0$ for all $i \in
    [m]$, $z \in R_i$, $1 \leq a < 120$, and monomials $u,v$ in $\{\sigma_j(x)
    : j \in [m], x \in U_j \cup R_j\}$ which do not contain $z$.
    Observe that
    $\wtd{\tau}(1) = 1$, and $\wtd{\tau}(\sigma_i(x)) = 0$ for all $x \in
    \Lambda_{i,4}$.  Similarly, if $x \in \Lambda_{i,2}$ and $y \in
    \Lambda_{j,2}$ are not equal, then $\wtd{\tau}(\sigma_i(x) \sigma_j(y)) =
    0$. By \Cref{prop:inter}, $\df(\wtd{\tau};\mu_{inter}) = 0$, and since
    $\pi(i,j) > 0$ for all $i,j \in [m]$, $\| \sigma_i(x) -
    \sigma_j(x)\|_{\wtd{\tau}} = 0$ for all $x \in U_i \cap U_j$.  Since
    $\|\cdot\|_{\wtd{\tau}}$ is unitarily bi-invariant, we get that
    $\|\sigma_i(x) - \sigma_j(x)\|_{\wtd{\tau}} = 0$ for all $x \in \mcM(U_i
    \cap U_j)$, and hence $\wtd{\tau}(\sigma_i(x)\sigma_j(x)) = 1$ for all $x
    \in \mcM(U_i \cap U_j)$. It follows that $\wtd{\tau}|_{\Lambda} = f$, so
    $p(\phi,\psi|ij,kl) = \wtd{\tau} \circ \alpha (\Phi_{W_{ij},\phi}
    \Phi_{W_{kl},\psi})$ for all $\phi \in D_{ij}$, $\psi \in D_{kl}$,
    $i,k \in [m]$, $j \in [m_i]$, $l \in [m_k]$. We conclude that $p \in C_{qc}$. 
    If $\mcA(B)$ has a perfect strategy in $C_{qa}$ (resp. $C_q$), then
    we can take $\tau$ to be Connes-embeddable (resp. finite-dimensional),
    so $\wtd{\tau}$ will be Connes-embeddable (resp. finite-dimensional),
    and $p \in C_{qa}$ (resp. $C_{q}$).
\end{proof}

\begin{rmk}\label{rmk:operational}
    The correlation $p$ in \Cref{prop:pzkcordef} is described algebraically.
    Alternatively, it's not hard to see that the correlation $p(\phi,\psi|ij,kl)$
    can be simulated using the following procedure: If neither
    $(W_{ij},D_{ij})$ or $(W_{kl},D_{kl})$ has type (3), then pick an
    assignment to the variables $Z \cup R$ uniformly at random, and fill in the
    variables $T_i(p,q)$ so that the constraints $(W_{ij},D_{ij})$ of types
    (1), (2), and (4) are satisfied, to get an assignment $\gamma$ to $Y$. 
    Output $\phi = \gamma|_{W_{ij}}$ and $\psi = \gamma|_{W_{kl}}$. If
    one of $(W_{ij},D_{ij})$ or $(W_{kl},D_{kl})$ has type (3), then for each
    $r \in [m]$, pick an assignment to $R_r$ and a satisfying assignments to
    $(U_r,E_r)$ uniformly at random, and fill in the variables $T_j(p,q)$ to
    get a satisfying assignment $\gamma_r$ to $(W_r,D_r)$. Output $\phi =
    \gamma_i|_{W_{ij}}$ and $\psi = \gamma_k|_{W_{kl}}$. This procedure will
    output $\phi,\psi$ with probability $p(\phi,\psi|ij,kl)$. 
\end{rmk}
The description of the correlation $p$ in \Cref{rmk:operational} is simpler
than the algebraic description. On the other hand, without the algebraic
description, it's harder to see that the correlation generated in
\Cref{rmk:operational} is quantum when $\mcA(B)$ has a perfect quantum
strategy. In fact, if we use three-row tableaux rather than four-row tableaux
in \Cref{def:Tab}, then the procedure in \Cref{rmk:operational} is still
well-defined, and simulates a perfect strategy for the game. However, by
\Cref{rmk:prooffails}, the simulated correlation is not necessarily quantum
even if $\mcA(B)$ has a perfect quantum strategy --- something that is not
immediately apparent from the description of the procedure. 

We are now ready to prove our main result about constructing perfect zero
knowledge protocols:
\begin{thm}\label{thm:makePZK}
    Let $(\{\mcG(B_x,\pi_x)\}, S, C)$ be a $\BCS$-$\MIP^*$ protocol for a
    language $\mcL$ with completeness $1$ and soundness $1-f(x)$, such that
    each context of $B_x$ has constant size, and $\pi_x$ is maximized on the
    diagonal. Then there is a $\PZK$-$\BCS$-$\MIP^*$ protocol
    $(\{\mcG(B_x',\pi_x')\}, \wtd{S}, \wtd{C})$ for $\mcL$ with completeness
    $1$ and soundness $1-f(x)/\poly(m_x)$, where $m_x$ is the number of
    contexts in $B_x$. If $\pi_x$ is uniform, then
    $\pi_x'$ is also uniform, and if $\mcG(B_x,\pi_x)$ has a perfect
    finite-dimensional tracial state, then so does $\mcG(B_x',\pi_x')$.
\end{thm}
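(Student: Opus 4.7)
The plan is to take $B_x' := \Tab_{sub}(\Obl_n(B_x))$ for a sufficiently large constant $n$, together with the question distribution $\pi_x' := (\pi_x)_{sub}$ from the subdivision construction of \Cref{sec:stability}. Because every context $V_i^x$ of $B_x$ has constant size, so does every context $V_i^x \times [n]$ of $\Obl_n(B_x)$, each obliviated constraint is recognized by a constant-depth fan-in-$2$ boolean circuit, and Barrington's theorem then produces permutation branching programs of constant depth. Consequently every clause $(W_{ij},D_{ij})$ of $\Tab_{sub}(\Obl_n(B_x))$ has constant size and the number of clauses per constraint is constant; padding the subdivision to be uniform guarantees that $\pi_x'$ is uniform whenever $\pi_x$ is, and the machines $\wtd{S}$ and $\wtd{C}$ run in polynomial time.

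For completeness, suppose $x \in \mcL$ and let $\tau$ be a perfect tracial state on $\mcA(B_x,\pi_x)$ coming from the perfect synchronous strategy for $\mcG(B_x,\pi_x)$. By \Cref{lem:obliviate}, averaging $\tau \circ \beta_y$ over uniform obliviation bits $y \in \Z_2^{X_x \times [n-1]}$ gives a perfect trace $\tau_1$ on $\mcA(\Obl_n(B_x),\pi_x)$ which additionally vanishes on any monomial of length $1 \leq k \leq n-1$ in the obliviated variables; an analogous averaging over uniformly random randomizers $r \in S_5^R$ along the classical homomorphisms $\beta_r$ from \Cref{lem:RT} lifts $\tau_1$ to a perfect trace $\tau_2$ on $\mcA(\Tab(\Obl_n(B_x)))$, which by \Cref{prop:subdiviso} descends to a perfect trace on $\mcA(B_x',\pi_x')$. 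For soundness, given an $\eps$-perfect trace $\tau'$ on $\mcA(B_x',\pi_x')$, \Cref{Thm:subdivSound} produces a trace on $\mcA(\Tab(\Obl_n(B_x)),\pi_x)$ of defect at most $\poly(m_x)\eps$ (the parameters $C$, $K$, $M$ of that theorem are all constants), and pulling back via the classical homomorphisms from \Cref{lem:RT} and \Cref{lem:obliviate} (which are $1$-homomorphisms by \Cref{lem:classhom}) gives a trace on $\mcA(B_x,\pi_x)$ of the same defect, yielding the required soundness $1-f(x)/\poly(m_x)$ by contraposition.

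The main task, and the central difficulty, is perfect zero knowledge. The correlation $p_x$ induced by $\tau_2$ is realized by the following prover strategy: both provers share the perfect quantum strategy for $\mcG(B_x,\pi_x)$ together with uniform classical randomness $(y,r)$, and on a queried clause of $B_x'$ each prover outputs the assignment to $W_{ij}$ forced by the measured $B_x$-assignment and $(y,r)$. The simulator for $p_x(\cdot,\cdot | i',j')$ enumerates the constantly many variables of $W_{i'} \cup W_{j'}$, samples each obliviated bit uniformly in $\Z_2$, samples each randomizer uniformly in $S_5$, and fills in the remaining tableau values from the local relations of the two clauses. To see that this matches $p_x(\cdot,\cdot|i',j')$ exactly, we combine two facts: first, the vanishing identity for $\tau_1$ from \Cref{lem:obliviate} implies via Fourier inversion on $\Z_2^{\ell}$ that the joint marginal of any $\leq n-1$ distinct obliviated bits under $\tau_2$ is uniform; second, a Kilian-style change-of-variables argument shows that conjugating a row of tableau values by independent uniform elements of $S_5$ (with the boundary $r$'s fixed to $e$) makes the resulting row uniform on the set of tuples with the prescribed product $\sigma_i$. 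Choosing $n$ larger than twice the constant bound on the clause size guarantees that any pair of clauses contains fewer than $n$ distinct obliviated bits, so the simulator's distribution agrees with the true marginal; the main obstacle will be this variable-counting argument and the bookkeeping needed to handle obliviated bits, tableau entries and randomizers that appear in both queried clauses simultaneously.
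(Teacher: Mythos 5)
Your proposal is correct and follows essentially the same route as the paper: the transformation $B_x \mapsto \Tab_{sub}(\Obl_n(B_x))$, completeness via the perfect-trace lifting of \Cref{lem:obliviate} and \Cref{lem:RT} and \Cref{prop:subdiviso}, soundness via \Cref{Thm:subdivSound} followed by pull-back along the classical ($1$-)homomorphisms, and perfect zero knowledge via the uniform-marginal consequence of obliviation combined with the randomizing tableaux. The paper fixes the obliviation degree at $n=5$ (sufficient because at most four obliviated slices can appear across a pair of subdivided clauses), whereas you leave $n$ as a generic ``sufficiently large'' constant tied to the context size — a harmless generalization — and your hedged account of the variable-counting/bookkeeping at the end is exactly where the paper's argument does the work of checking that no pair of queried clauses sees $n$ or more distinct obliviated bits.
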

\begin{proof}
    Let $B_x' = \Tab_{sub}(\Obl_5(B_x))$, and let $\pi_x'$ be the subdivision
    of $\pi_x$ corresponding to the subdivision of $\Tab(\Obl_5(B_x))$ into
    $\Tab_{sub}(\Obl_5(B_x))$. If $\pi_x$ is uniform, then $\pi_x'$ is also
    uniform. Let $p_x$ be the correlation for $\mcG(B_x',\pi_x')$ defined in
    \Cref{prop:pzkcordef}. Because $B_x$ has contexts of constant size, $\Obl_5(B_x)$
    and $\Tab(\Obl_5(B_x))$ also have contexts of constant size. As a result,
    the number of clauses in the constraints of $\Tab(\Obl_5(B_x))$ is
    constant, as is the size of each clause (where by clause we mean the
    constraints of type (1)-(4) in \Cref{def:Tab}). Hence the Turing machines
    $S$ and $C$ can be turned into Turing machines $\wtd{S}$ and $\wtd{C}$ such
    that $(\{\mcG(B_x',\pi_x')\}, \wtd{S}, \wtd{C})$ is a $\BCS$-$\MIP^*$ protocol.
    Similarly, since all the constraints of $\Tab(\Obl_5(B))$ have constant size,
    there is a Turing machine which, given questions and answers $i,j,\phi,\psi$
    for $\mcG(B_x',\pi_x')$, can produce $p_x(\phi,\psi|i,j)$ in polynomial time
    in $i$, $j$, and $x$. Since the number of answers for any question is constant, the
    correlation $p_x$ can be simulated in polynomial time in $x$.

    If $x \in \mcL$, then $B_x$ has a perfect strategy in $C_{qa}$, so $p_x \in C_{qa}$,
    and hence $\mcG(B_x',\pi_x')$ has a perfect strategy in $C_{qa}$. Similarly, if
    $B_x$ has a perfect strategy in $C_q$, then $\mcG(B_x',\pi_x')$ has a perfect strategy
    in $C_q$ as well. Conversely, suppose
    that $\tau$ is a tracial state on $\mcA(B_x')$. Since the size of contexts and
    number of clauses in each constraint of $\Tab(\Obl_5(B_x))$ are constant, 
    the parameters $C$, $M$, and $K$ in \Cref{Thm:subdivSound} when going from
    $\Tab(\Obl_5(B_x))$ to $\Tab_{sub}(\Obl_5(B_x))$ are all constant. Since
    $\Tab(\Obl_5(B_x))$ has $m_x$ contexts, \Cref{Thm:subdivSound} implies that
    there is a tracial state $\tau_0$ on $\mcA(\Tab(\Obl_5(B_x)))$ with
    $\df(\tau_0) \leq \poly(m_x) \df(\tau)$. Since there is a classical
    homomorphism $\mcA(B_x) \to \mcA(\Tab(\Obl_5(B_x)))$ by Lemmas \ref{lem:obliviate}
    and \ref{lem:RT}, we conclude that there is a tracial state $\tau_1$ on
    $\mcA(B_x)$ with $\df(\tau_1) \leq \poly(m_x) \df(\tau)$. Hence if
    $x \not\in \mcL$, then there is no synchronous strategy $p$ for
    $\mcG(B_x',\pi_x')$ with $\omega_q(\mcG(B_x',\pi_x'), p) \geq 1 -
    f(n)/\poly(m_x)$. Hence $(\{\mcG(B_x',\pi_x')\}, \wtd{S}, \wtd{C})$ is a
    $\BCS$-$\MIP^*$ protocol for $\mcL$ with soundness $1 - f(x) / \poly(m_x)$. 
\end{proof}

\begin{theorem}\label{thm:main1}
    There is a perfect zero knowledge $\BCS$-$\MIP^*(2,1,1,1 - 1/\poly(n))$
    protocol for the halting problem in which the verifier selects questions
    according to the uniform distribution, the questions have length
    $\polylog(n)$, and the answers have constant length. Furthermore, if
    a game in the protocol has a perfect strategy, then it has a perfect
    synchronous quantum strategy.
\end{theorem}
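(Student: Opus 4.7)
The plan is to begin with the $\BCS$-$\MIP^*$ protocol for the halting problem supplied by the corollary following \Cref{thm:mipre}, and to massage it until the hypotheses of \Cref{thm:makePZK} apply. That corollary gives a two-prover $\BCS$-$\MIP^*$ protocol $(\{\mcG(B_x,\pi_x)\},S,V)$ for the halting problem with completeness $1$, constant soundness $s < 1$, a constant number of contexts of size $\polylog(n)$, and $\pi_x$ uniform on pairs of contexts, where $n = |x|$. To apply \Cref{thm:makePZK} one needs constant-size contexts, so this intermediate system has to be modified further.

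To get constant-size contexts I would first invoke \Cref{rmk:to3SAT} to replace each constraint $C_i^x$ by a conjunction of $\polylog(n)$ many 3SAT clauses on $\polylog(n)$ variables, without changing the number of contexts, the question distribution, or the soundness. Then I would apply \Cref{Thm:subdivSound} with the subdivision whose clauses are the 3SAT clauses of each constraint together with all trivial $2$-variable pair constraints (included so that condition (2) in the definition of a subdivision is satisfied). In the notation of \Cref{Thm:subdivSound} the parameters are $C=3$, $K=\polylog(n)$, $M=\polylog(n)$, and $m=O(1)$, so the soundness gap drops by at most a $\poly(\polylog(n))=\polylog(n)$ factor, leaving soundness $1-1/\polylog(n)$. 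I would then uniformize the subdivision by duplicating clauses so that $m_i$ is constant across $i$, ensuring $\pi_{sub}$ is uniform on $[M]\times[M]$ and therefore maximized on the diagonal. The resulting $\BCS$-$\MIP^*$ protocol has constant-size contexts, a polylog number of contexts, uniform question distribution, completeness $1$, and soundness $1-1/\polylog(n)$.

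At this stage the hypotheses of \Cref{thm:makePZK} are met, and applying it produces a perfect zero knowledge $\BCS$-$\MIP^*$ protocol with completeness $1$ and soundness $1 - 1/(\polylog(n)\cdot\poly(\polylog(n))) = 1 - 1/\polylog(n) \le 1 - 1/\poly(n)$. The theorem also preserves uniformity of the question distribution, which is exactly what is needed. Because the input BCS to \Cref{thm:makePZK} already has constant-size contexts, every stage of the internal $\Obl_5$, $\Tab$, and $\Tab_{sub}$ construction produces contexts of constant size and multiplies the number of contexts by only a constant factor, so the final game has $\polylog(n)$ contexts of constant size. This gives question length $\lceil\log(\polylog(n))\rceil = \polylog(n)$ and constant answer length, matching the statement of the theorem. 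The main obstacle I anticipate is purely bookkeeping: carefully tracking the parameters $(C,K,M,m,m_x)$ through the chain of transformations, confirming that inserting trivial pair clauses is enough to satisfy the subdivision hypothesis (2) while keeping $C$ constant, and verifying that the ``uniform and maximized on the diagonal'' property survives each step.
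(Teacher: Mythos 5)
Your proposal follows the same route as the paper's proof, with one factual slip that, fortunately, does not affect the final conclusion.

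The slip is in the claim that \Cref{rmk:to3SAT} turns each constraint $C_i^x$ into a conjunction of $\polylog(n)$ 3SAT clauses on $\polylog(n)$ variables. That is not what the remark supplies. The Turing machine $C$ deciding $C_i^x$ runs in time polynomial in $n = |x|$ (its input includes $x$ itself), so the Cook--Levin theorem yields a 3SAT formula with $\poly(n)$ clauses on $\poly(n)$ auxiliary variables, and this is what \Cref{rmk:to3SAT} asserts. Consequently, in your application of \Cref{Thm:subdivSound} the parameters are $K = \poly(n)$ and $M = \poly(n)$ (not $\polylog(n)$), so the soundness after subdivision is $1 - 1/\poly(n)$ rather than $1 - 1/\polylog(n)$. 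Similarly the subdivided BCS then has $\poly(n)$ contexts, so the number of contexts $m_x$ fed into \Cref{thm:makePZK} is $\poly(n)$, and the question length is $O(\log n)$, not $\log\polylog(n)$. None of this changes the final conclusion---the statement only asks for soundness $1 - 1/\poly(n)$ and question length $\polylog(n)$---but your stated intermediate bound $1 - 1/\polylog(n)$ is not justified and should be replaced by $1 - 1/\poly(n)$ throughout.

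Two parts of your proposal do improve on the paper's terse exposition: you correctly note that when subdividing a 3SAT constraint, trivial $2$-variable pair clauses must be inserted to satisfy condition (2) of the subdivision definition (the paper leaves this implicit), and you correctly observe that the subdivision must be made uniform by duplicating clauses so that $\pi_{sub}$ remains uniform and hence maximized on the diagonal, which is a hypothesis of \Cref{thm:makePZK}. Both steps match the paper's intent, and after correcting the parameter sizes your argument is sound.
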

\begin{proof}
    By \Cref{thm:mipre}, there is a $\BCS$-$\MIP^*$ protocol
    $(\{\mcG(B_x,\pi_x)\}, S, V)$ for the halting problem with constant
    soundness $s<1$, in which $B_x$ has a constant number of contexts and
    contexts of size $\polylog(|x|)$, and $\pi_x$ is the uniform distribution
    on pairs of contexts. Furthermore, if $\mcG(B_x,\pi_x)$ has a perfect
    strategy, then it has a perfect synchronous quantum strategy.
    By \Cref{rmk:to3SAT}, $(\{\mcG(B_x,\pi_x)\}, S, C)$ 
    can be turned into a $\BCS$-$\MIP^*$ protocol $(\{\mcG(B'_x,\pi_x)\},S,C)$
    where $B_x' = (X_x', \{(W_i^x,D_i^x)\})$, $D_i$ is a 3SAT instance with
    number of clauses polynomial in $|x|$, and $|W^x_i|$ is polynomial in
    $|x|$. Then by subdividing the $B_x'$ into a 3SAT we obtain a 3SAT protocol
    $(\{\mcG(B^{3SAT}_x,\pi^{3SAT}_x)\},S,C)$ with number of clauses polynomial
    in $|x|$, and $\pi^{3SAT}_x$ is uniform. There is a $1$-homomorphism
    $\mcA(B^{3SAT}_x, \mu_{\pi^{3SAT}_x}) \to \mcA(B_x, \mu_{\pi_x})$,
    so if $\mcG(B_x,\pi_x)$ has a perfect synchronous quantum strategy, so
    does $\mcG(B^{3SAT}_x,\pi^{3SAT}_x)$.
    The theorem follows from \Cref{thm:makePZK}.
\end{proof}

\begin{proof}[Proof of \Cref{thm:main}]
    By the discussion after \Cref{thm:mipre}, it is enough to show that 
    there is a two-prover one-round perfect zero knowledge $\MIP^*$ protocol
    for the halting problem with completeness $1$, soundness $1/2$, and uniform
    probability distribution. 
    Let $(\{\mcG(B_x, \pi_x)\}, S, C)$ be the $\BCS$-$\MIP^*$ protocol from
    \Cref{thm:main1}, so in particular $B_x$ has $m_x$ contexts, where $m_x =
    \poly(|x|)$, and $\pi_x$ is the uniform distribution on $[m_x] \times
    [m_x]$. Since the uniform distribution is $1/2m_x$-diagonally dominant,
    \Cref{thm:synchrounding} implies that $(\{\mcG(B_x, \pi_x)\}, S, C)$ has
    soundness $1-1/\poly(n)$ when considered as a $\MIP^*$ protocol. 
    The result follows from \Cref{thm:prep} using a polynomial amount of parallel repetition.
\end{proof}

As mentioned in the introduction, the proof of Theorem \ref{thm:main} implies 
that the halting problem is many-one reducible to membership in $C_q$. In fact, there
is a reduction such that if the Turing machine does not halt, then the corresponding 
correlation is bounded away from the closure $C_{qa}$ of $C_q$:
\begin{cor}
    There is a polynomial-time computable function $p$ from Turing machines to synchronous
    correlations such that if $M$ halts then $p(M) \in C_q^s$, and if $M$ does not halt then
    there is a linear functional $f$ on the space of correlations such that $f(p(M)) = 1$ and
    $f(p') \leq 1/2$ for all $p' \in C_{qa}$. 
\end{cor}
\begin{proof}
    Let $(\{\mcG(B_M, \pi_M)\}, S, C)$ be the $\BCS$-$\MIP^*$ protocol for the
    halting problem with completeness one and soundness $1/2$ constructed in
    the proof of \Cref{thm:main}, where the index $M$ runs through Turing machines.
    Let $p(M)$ be the correlation for $\mcG(B_M,\pi_M)$ as in \Cref{def:PZK}.
    That $p(M)$ is in $C_q$ follows from \Cref{thm:main1}, and the fact that if
    $p \in C_q$, then $p^{\otimes n} \in C_q$. The corollary then follows with
    the linear functional $f$ defined by $f(p') = \omega(\mcG(B_M,\pi_M),p')$. 
\end{proof}
Note that the number of inputs and outputs for the correlation $p(M)$ depends
on the size of the Turing machine $M$. 

Finally, we also have:
\begin{theorem}\label{thm:main2}
    $\PZK$-$\BCS$-$\MIP^{co}(2,1,1,1-1/\poly(n)) = \BCS$-$\MIP^{co}(2,1,1,1-1/\poly(n))$. 
\end{theorem}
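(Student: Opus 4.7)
The plan is to run the construction of \Cref{thm:makePZK} on an arbitrary $\BCS$-$\MIP^{co}$ protocol and verify that every step of the soundness analysis, which is phrased purely in terms of tracial states on weighted BCS algebras, applies equally to commuting operator strategies via the GNS correspondence recalled in \Cref{sec:weightedalg}. The fact that synchronous commuting operator strategies correspond to arbitrary (not necessarily finite-dimensional) tracial states, together with the observation that all the weighted-homomorphism machinery in \Cref{sec:weightedhom,sec:stability} was built to be agnostic to dimension, makes this reduction essentially automatic.

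More concretely, given a language $\mcL \in \BCS$-$\MIP^{co}(2,1,1,1-1/\poly(n))$ with protocol $(\{\mcG(B_x,\pi_x)\},S,C)$, I would first apply the Cook-Levin transformation of \Cref{rmk:to3SAT} to obtain an equivalent $3$SAT-$\MIP^{co}$ protocol whose contexts have constant size. The transformation is built from the classical homomorphisms of \Cref{cor:BCSto3SATsyst}, which are $1$-homomorphisms in both directions, so by \Cref{lem:Chom} the defect of any tracial state is preserved and the commuting-operator soundness is unchanged. If $\pi_x$ is not already maximized on the diagonal, we mix in the uniform distribution on diagonal pairs at the cost of only a constant factor in the soundness gap, matching the hypothesis of \Cref{Thm:subdivSound}.

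Next, I would apply the transformation $B_x \mapsto \Tab_{sub}(\Obl_5(B_x))$ exactly as in \Cref{thm:makePZK}. The completeness argument and the explicit simulator constructed there are classical and go through verbatim, producing a correlation that is perfect (and hence both quantum-approximable and commuting operator) whenever $x \in \mcL$. For soundness, a tracial state $\tau$ on the PZK algebra $\mcA(\Tab_{sub}(\Obl_5(B_x)),\pi_x')$ with small defect pulls back, via \Cref{Thm:subdivSound} followed by \Cref{lem:RT} and \Cref{lem:obliviate}, to a tracial state on $\mcA(B_x,\pi_x)$ with defect $\poly(m_x)\cdot \df(\tau)$, where $m_x = \poly(|x|)$. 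Since the starting protocol has soundness $1-1/\poly(|x|)$, the output soundness is $1-1/\poly(|x|)$, which lies in the same class.

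The only step that is not a pure weighted-homomorphism argument is the appeal to $\Z_2^k$-stability in \Cref{prop:correction} via \Cref{lem:z2stab}. This is the place where one might worry about passing from quantum to commuting operator strategies, but \Cref{lem:z2stab} is stated directly for arbitrary tracial von Neumann algebras $(\mcM,\tau)$, and the GNS construction from any tracial state on $\mcA_{free}(B,B',\pi)$ yields exactly such a $(\mcM_0,\tau_0)$. So no new analytic input is needed. The genuine obstacle, and the reason the theorem is stated with $1-1/\poly(n)$ rather than constant soundness, is the absence of a parallel repetition theorem for $\omega_{qc}$: Yuen's \Cref{thm:prep} is specific to the quantum value, so we cannot amplify the inverse-polynomial gap produced by oblivation, tableau, and subdivision to a constant gap, which is exactly the gap between this theorem and the full analogue of \Cref{thm:main}.
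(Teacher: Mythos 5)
Your overall approach matches what the paper intends: run the $B_x \mapsto \Tab_{sub}(\Obl_5(B_x))$ transformation of \Cref{thm:makePZK} after a Cook--Levin pass, and observe that every soundness step is phrased as a weighted-homomorphism or stability argument over tracial states, so it applies to arbitrary (not just finite-dimensional) tracial von Neumann algebras. Your remarks that \Cref{lem:z2stab} is stated directly for tracial von Neumann algebras, and that the $1-1/\poly(n)$ bound cannot be improved to constant for $\omega_{qc}$ because no analogue of \Cref{thm:prep} is available, are both correct and capture the substance of why the theorem holds in the commuting operator setting.

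There is, however, a step that is not justified: you assert that the pulled-back defect is $\poly(m_x)\cdot\df(\tau)$ ``where $m_x = \poly(|x|)$''. The definition of $\BCS$-$\MIP^{co}$ in \Cref{sec:BCS} only requires $\lceil\log m_x\rceil$ to be polynomial in $|x|$, so a protocol in $\BCS$-$\MIP^{co}(2,1,1,1-1/\poly(n))$ may have $m_x$ as large as $2^{\poly(|x|)}$. The Cook--Levin reduction of \Cref{rmk:to3SAT} does not change the number of contexts, and the subsequent subdivision into 3SAT clauses (which, incidentally, you must do explicitly: \Cref{rmk:to3SAT} leaves the $W_i^x$ of \emph{polynomial} size, not constant size) only multiplies the number of contexts by another polynomial. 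Thus the 3SAT protocol you feed into \Cref{thm:makePZK} may still have super-polynomially many contexts, and the $\poly(m_x)$ dropoff from \Cref{Thm:subdivSound} and \Cref{prop:correction} (where factors $mt$ and $m^2$ arise from applying stability context-by-context and then summing) would annihilate an inverse-polynomial soundness gap. The proof of \Cref{thm:main1} sidesteps this because \Cref{thm:mipre} and its corollary supply a starting BCS protocol for $\HALT$ with a \emph{constant} number of contexts, so the subdivided 3SAT protocol has $\poly(|x|)$ contexts; no such starting point is available for $\MIP^{co}$. To close the gap you would need either a reduction of an arbitrary $\BCS$-$\MIP^{co}$ protocol to one with polynomially many contexts (preserving soundness up to polynomial factors), or a sharpened version of \Cref{prop:correction} whose defect blowup does not grow with $m$ (which seems plausible under a diagonal-dominance assumption, since the $1/\pi(i,i)$ factors can be averaged against the corresponding $\pi(i,i)$ weights, but this is not what is proved).

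One smaller point: mixing $\pi_x$ with the uniform distribution on diagonal pairs does not make it maximized on the diagonal in general (e.g.\ if $\pi_x$ concentrates on a single off-diagonal pair). You want to mix with the distribution whose weight at $(i,i)$ is proportional to the marginal $\sum_j(\pi_x(i,j)+\pi_x(j,i))$; this yields a diagonally dominant distribution at a constant cost in the soundness gap, since diagonal questions are always won by synchronous strategies.
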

The proof is similar to the proof of \Cref{thm:main1}.

\bibliography{pzk} 
\bibliographystyle{alpha}

\end{document}